\theoremstyle{plain}
\newtheorem{theorem}{Theorem}
\newtheorem{proposition}[theorem]{Proposition}
\newtheorem{lemma}[theorem]{Lemma}
\newtheorem{corollary}[theorem]{Corollary}
\theoremstyle{definition}
\newcommand{\R}{\mathbb{R}}
\newcommand{\C}{\mathbb{C}}
\newcommand{\cE}{\mathcal{E}}
\newcommand{\cL}{\mathcal{L}}         %%%%%%%%%%%%%%%%%%%%%%%%
\newcommand{\Lat}{\mathcal{L}}
\newcommand{\nc}{\newcommand}
\nc{\G}{\Gamma}
\nc{\g}{\gamma}
\nc{\al}{\alpha}
\nc{\be}{\beta}
\nc{\del}{\delta}
\nc{\io}{\iota}
\nc{\ka}{\kappa}
\nc{\lam}{\lambda}
\nc{\Lam}{\Lambda}
\nc{\w}{\omega}
\nc{\Om}{\Omega}
\nc{\Oms}{\Omega^*}
\nc{\s}{\sigma}
\nc{\Si}{\Sigma}
\nc{\ta}{\tau}
\nc{\h}{\theta}
\nc{\z}{\zeta}
\newcommand{\e}{\epsilon}
\newcommand{\opsi}{{\overline{\psi}}}
\nc{\ran}{\rangle}
\nc{\lan}{\langle}
\newcommand{\re}{\operatorname{Re}}
\newcommand{\im}{\operatorname{Im}}
\renewcommand{\Re}{\operatorname{Re}}
\renewcommand{\Im}{\operatorname{Im}}
\newcommand{\ra}{\rightarrow}
\newcommand{\Ran}{\operatorname{Ran}}
\newcommand{\p}{\partial}
\newcommand{\Curl}{\operatorname{curl}}
\newcommand{\Div}{\operatorname{div}}
\newcommand{\curl}{\operatorname{curl}}
\renewcommand{\div}{\operatorname{div}}
\newcommand{\Cov}[1]{\nabla_{\!\!#1}}
\newcommand{\Null}{\operatorname{null}}
\newcommand{\DETAILS}[1]{}
\newcommand{\Lpsi}[2]{\mathscr{L}_{#2}^{}(\tau)}
\newcommand{\Hpsi}[2]{\mathscr{H}_{#2}^{}(\tau)}
\newcommand{\LA}[2]{\vec{\mathscr{L}}_{}^{}(\tau)}
\newcommand{\HA}[2]{\vec{\mathscr{H}}_{}^{}(\tau)}
\numberwithin{theorem}{section}
\numberwithin{equation}{section}
\begin{document}
\date{\DATUM}
%%%%%%%%%%%%%%%%%%%%%%%%%%%%%%%%%%%%%%%%%%%%%%%%%%%%%%%%%%%%%%%%%%%%%%%%%%%%%%%%%%%%%%%%%%%%%

%%%%%%%%%%%%%%%%%%%%%%%%%%%%%%%%%%%%%%%%%%%%%%%%%%%%%%%%%%%%%%%%%%%%%%%%%
\newcommand{\DATUM}{December 4, 2011}              %                        %
\pagestyle{myheadings}                         % Date and Page Headings %
\markboth{\hfill{AbrikosovLattices, April 24, 2010}}{{AbrikosovLattices, December 4, 2011}\hfill}  %
%%%%%%%%%%%%%%%%%%%%%%%%%%%%%%%%%%%%%%%%%%%%%%%%%%%%%%%%%%%%%%%%%%%%%%%%%

\title{On Abrikosov Lattice Solutions of the Ginzburg-Landau Equations} %\footnote{}}

\author{T. Tzaneteas \thanks{Institut f\"ur Analysis und Algebra, TU Braunschweig, D-38092
Braunschweig, Germany. Supported by DFG Grant No. Ba-1477/5-1.}
%\thanks{TU Braunschweig Institut f\"ur Analysis und Algebra, TU Braunschweig, D-38092 Braunschweig, Germany}
\, \qquad and\ \qquad
I. M. Sigal\
\thanks{Dept. of Mathematics, Univ. of Toronto, Toronto,  Canada, M5S 2E4. Supported by NSERC Grant NA7601.}}
%\thanks{Dept. of Mathematics, Univ. of Toronto, Toronto,  Canada, M5S 2E4}}
%\thanks{Corresponding author; E-mail address: im.sigal@utoronto.ca } \\

%Dept. of Mathematics, Univ. of Toronto, Toronto,  Canada, M5S 2E4}
\maketitle

  \centerline{\textit{To Misha Shubin, with friendship and admiration}}

\begin{abstract}
Building on earlier work, %of Odeh, Barany, Golubitsky, Turski and Lasher
we have given in \cite{TS} %our paper in \textit{Contemporary Mathematics} \textbf{535}, 195 -- 213, 2011 (referred here as [TS])
 a proof of existence of Abrikosov vortex
lattices in the Ginzburg-Landau model of superconductivity and shown that the triangular lattice gives the lowest energy per lattice cell. %(A different proof starting from another part of Odeh's work was given by Dutour.)
After \cite{TS} was published, we realized that it proves a stronger result than was stated there. This result is recorded in the present paper.
The proofs remain the same as in \cite{TS}, apart from some streamlining.
%This paper is  a slightly revised version of the one appeared in \textit{Contemporary Mathematics} \textbf{535}, 195 -- 213, 2011.

\medskip

\noindent Keywords: magnetic vortices, superconductivity, Ginzburg-Landau equations, Abrikosov vortex lattices, bifurcations.

\end{abstract}

%%%%%%%%%%%%%%%%%%%%%%%%%%%%%%%%%%%%%%%%%%%%%%%%%%%%%%%%%%%%%%%%%%%%%%%%%%%%%%%%%%%%%%%%%%%%%

\section{Introduction}
\label{sec:intro}

%%%%%%%%

%\subsection{The Ginzburg-Landau Model}

\textbf{1.1 The Ginzburg-Landau equations.} The Ginzburg-Landau model of superconductivity describes a superconductor contained in
$\Omega \subset \R^n$, $n = 2$ or $3$, in terms of a complex order parameter $\Psi : \Omega \to \C$, and a
magnetic potential $A : \Omega \to \R^n$\footnote{The Ginzburg-Landau theory is reviewed in every book on superconductivity.
For reviews of rigorous results see the papers \cite{CHO, DGP, GST} and the books \cite{SS, FH, JT, Rub}}. %books \cite{SS, H}}.
The key physical quantities for the model are
\begin{itemize}
    \item the density of superconducting pairs of electrons, $n_s := |\Psi|^2$;
    \item the magnetic field, $B := \Curl A$;
    \item and the current density, $J := \Im(\bar{\Psi}\Cov{A}\Psi)$.
\end{itemize}
In the case $n = 2$, $\Curl A := \frac{\partial A_2}{\partial x_1} - \frac{\partial A_1}{\partial x_2}$ is a scalar-valued function.
The covariant derivative $\Cov{A}$ is defined to be $\nabla - iA$.
The Ginzburg-Landau theory specifies that  the difference between the supperconducting and normal free energies in a state $(\Psi, A)$ is
\begin{equation}
\label{eq:GL-energy}
    E_\Omega(\Psi, A) := \int_\Omega |\Cov{A}\Psi|^2 + |\Curl A|^2 + \frac{\kappa^2}{2} (1 - |\Psi|^2)^2,
\end{equation}
where $\kappa$ is a positive constant that depends on the material properties of the superconductor. (In the problem we consider here it is appropriate
to deal with  Helmholtz free energy at a fixed average magnetic field $b:=\frac{1}{|\Omega|}\int_\Omega \curl{A},$ where $|\Omega|$ is the area or volume of $\Omega$.)
It follows from the Sobolev inequalities that for bounded open sets $\Omega$, $\mathcal{E}_\Omega$ is well-defined and $C^\infty$ as a
functional on the Sobolev space $H^1$. The critical points of this functional must satisfy the well-known Ginzburg-Landau equations
inside $\Omega$:
\begin{subequations}
\label{GLE}
    \begin{equation}
    \label{GLEpsi}
      \Delta_A \Psi = \kappa^2(|\Psi|^2-1)\Psi,
    \end{equation}
    \begin{equation}
    \label{GLEA}
        \Curl^*\Curl A = \Im(\bar{\Psi}\Cov{A}\Psi).
    \end{equation}
\end{subequations}
Here $\Delta_A=- \Cov{A}^*\Cov{A},\ \Cov{A}^*$ and $\Curl^*$ are the adjoints of $\Cov{A}$ and $\Curl$. Explicitly, $\Cov{A}^*F = -\Div F + iA\cdot F$, and
$\Curl^* F = \Curl F$ for $n = 3$ and $\Curl^* f = (\frac{\partial f}{\partial x_2}, -\frac{\partial f}{\partial x_1})$ for $n = 2$.

It is well-known that there exists a
critical value $\kappa_c$ (in the units used here, $\kappa_c = 1/\sqrt{2}$), that separates superconductors into two classes with
different properties: Type I superconductors, which have $\kappa < \kappa_c$ and exhibit first-order phase transitions from the
non-superconducting state to the superconducting state, and Type II superconductors, which have $\kappa > \kappa_c$ and exhibit
second-order phase transitions and the formation of vortex lattices. Existence of these vortex lattice solutions is the subject of the present paper.

%%%%%%%%

\textbf{1.2 Abrikosov lattices.} In 1957, Abrikosov \cite{Abr} discovered solutions of \eqref{GLE} in $n=2$ whose physical characteristics
$n_s$, $B$, and $J$ are periodic with respect to a two-dimensional lattice, while independent of the third dimension, and which have a single
flux per lattice cell\footnote{Such solutions correspond to cylindrical samples. In 2003, Abrikosov received the Nobel Prize for this discovery}.
(In what follows we call such solutions, with $n_s$ and $B$ non-constant, $\Lat$-\textit{lattice solutions}, or, if a lattice $\Lat$ is fixed,
\textit{lattice solutions}. In physics literature they are variously called mixed states, Abrikosov mixed states, Abrikosov vortex
states.) Due to an error of calculation Abrikosov concluded that the lattice which gives the minimum average energy per
lattice cell\footnote{Since for lattice solutions the energy over $\R^2$ (the total energy) is infinite, one considers the average energy per lattice cell,
i.e. energy per lattice cell divided by the area of the  cell.} is the square lattice. Abrikosov's error was corrected by Kleiner, Roth, and Autler \cite{KRA},
who showed that it is in fact the triangular lattice which minimizes the energy.

%%%%%%%%%%%

\textbf{1.3 Results.} In this paper we combine and extend the previous technique to give a complete and self-contained proof of the existence
of Abrikosov lattice solutions. %and improved version of the resulted presented in \cite{Odeh}.
%We also establish existence of Abrikosov lattice with two magnetic flux quanta per cell. The latter result can be extended to arbitrary number of quanta per cell.
As in previous works, we consider only bulk superconductors filling all $\R^3$, with no variation along one direction, so that the problem is reduced to one on $\R^2$.
To formulate our results, for a lattice $\Lat \subset \R^2$, we denote by $\Omega^\Lat$ and $|\Omega^\Lat|$ the basic lattice cell and its area, respectively
(for details see Section \ref{sec:lattice states}).
The flux quantization (see below) implies that
\begin{equation}\label{Ombrel}
	|\Omega^\Lat| = \frac{2\pi n}{b},
\end{equation}
where $b$ is the average magnetic flux per lattice cell, $b := \frac{1}{|\Omega^\Lat|} \int_{\Omega^\Lat} \Curl A$. %Due to the physical interpretation of $b$ as being related to the applied magnetic field, from now on we use $b = \frac{2\pi n}{|\Omega|}$ as a parameter of our problem.
We note that due to the reflection symmetry of the problem we can assume that $b \geq 0$.
\DETAILS{We denote
 the average of a function $f$ over the lattice cell $\Omega^\cL$ of $\cL$ by %\begin{equation*}
$\langle f \rangle_\cL:=\frac{1}{|\Omega^\cL|}\int_{\Omega^\cL}f.$ %\end{equation*}
Next, we introduce the Abrikosov parameter, $\beta(\tau) := \frac{ \lan |\Psi_0|^4 \ran_\cL}{  \lan |\Psi_0|^2 \ran_\cL^2},$
where $\Psi_0$ is the solution of the equation $-\Delta_{A_{0}} \psi_{0} = \kappa^2  \psi_{0}$ on $\Om^\cL$, with the boundary condition  $\Psi_0(x + t) = e^{\frac{i\kappa^2}{2}t\cdot J x}\Psi_0(x)$, for all $t \in \mathcal{L}$. Here $A_{0}:=\frac{\kappa^2}{2} J x$, %x^\perp$, where $x^\perp = (-x_2, x_1)$.
and  $J$ \textbf{is the symplectic matrix}   \begin{equation*}
J := \left( \begin{array}{cc} 0 & -1 \\ 1 & 0 \end{array} \right).\end{equation*}
(We will show below that $\Psi_0$ is unique up to multiplication by a constant and therefore the ratio  $\frac{ \lan |\Psi_0|^4 \ran_\cL}{  \lan |\Psi_0|^2 \ran_\cL^2}$ depends only on the lattice, i.e. $\tau$.) Finally}
     We define
     \begin{equation}\label{kappac}
     \kappa_c(\tau) := \sqrt{\frac{1}{2}\left(1-\frac{1}{\beta (\tau)}\right)} ,
     \end{equation}
where $\beta(\tau)$  is  the Abrikosov parameter, introduced in \eqref{beta} below.  We will prove the following results. %, whose precise formulation will be given below (Theorem \ref{thm:result}).

%\begin{theorem}
%\label{thm:branch}
%Given a lattice, assume there is one flux quantum per the lattice cell. If the magnetic flux per unit area,  $b$, is less than but sufficiently close to $\kappa^2$, then the only non-normal solutions close to the normal state form a branch, parametrized by a complex number, one branch for each lattice.
%\end{theorem}
%
%\begin{theorem}
%\label{thm:minimizing-shape}
%	Of the solutions in the previous theorem, %let $\tau(b)$ be the shape that minimizes the energy at a fixed $b$. Then as $b \to \kappa^2$, $\tau(b)$
%the least energy ones have lattices approaching the triangular (or hexagonal) lattice as $b \to \kappa^2$.
%\end{theorem}

\begin{theorem}\label{thm:main-result}
%Let $\Lat$ be a lattice with $\left| |\Omega^\Lat| - \frac{2\pi}{\kappa^2} \right| \ll 1$.
%There  exists $\kappa_c<1/\sqrt{2}$ (defined in \eqref{kappac} below) s.t.
Let  $\big| b - \kappa^2 \big| \ll 1$ and $(\kappa-\kappa_c(\tau))(\kappa^2-b)\ge 0$. Then for every lattice $\cL$ satisfying %$\kappa>\kappa_c$ and $|\Omega^\Lat|<\frac{2\pi}{\kappa^2}$ and for $\kappa<\kappa_c$ and $|\Omega^\Lat|>\frac{2\pi}{\kappa^2}$,
%$\left| |\Omega^\Lat| - \frac{2\pi}{\kappa^2} \right| \ll 1$ and $(\kappa-\kappa_c)(|\Omega^\Lat|-\frac{2\pi}{\kappa^2})>0$,
 \eqref{Ombrel} with $n=1$, the following holds
	\begin{enumerate}[(I)]
	\item %If $|\Omega^\Lat| > \frac{2\pi}{\kappa^2}$,
The equations \eqref{GLE} have an $\Lat$-lattice solution %If $|\Omega^\Lat| \leq \frac{2\pi}{\kappa^2}$, then there is no $\Lat$-lattice solution
in a neighbourhood of the branch of normal solutions.  %if and only if $|\Omega^\Lat| > \frac{2\pi}{\kappa^2}$.
	\item The above solution %is close to the branch of normal solutions and
is unique, up to symmetry, in a neighbourhood of the  normal branch.
	\item For $(\kappa-\kappa_c(\tau))(\kappa^2-b)\ne 0$, the solution above is real analytic in $b$ in a neighbourhood of $\kappa^2$.
	% and $\tau$.
	\item For $\kappa^2 > 1/2$, the lattice shape for which the average energy per lattice cell is minimized approaches the triangular lattice as $b \to \kappa^2$, in the sense that the shape parameter, $\tau_\cL $, of $\cL$ (see Subsection 3.3 below) approaches $ \tau_{triangular} = e^{i\pi/3}$ in $\C$.
	\end{enumerate}	
\end{theorem}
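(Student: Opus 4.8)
The plan is to treat this as a bifurcation problem off the \emph{normal branch}, the family of trivial solutions $(\Psi,A)=(0,A_b)$ with $\Psi\equiv 0$ and $\Curl A_b\equiv b$, which solve \eqref{GLE} for every value of $b$. I would first fix the lattice shape $\tau$ and regard $b$ (equivalently, via \eqref{Ombrel}, the cell area) as the bifurcation parameter. Working in the space of gauge-periodic pairs---sections of the line bundle over $\R^2/\cL$ carrying one flux quantum, together with $\cL$-periodic perturbations $a:=A-A_b$ of the potential---I would linearize \eqref{GLEpsi} about the normal branch. The $\Psi$-linearization is the magnetic Schr\"odinger (Landau) operator $L_b:=-\Delta_{A_b}-\kappa^2$, which on the torus has compact resolvent and discrete spectrum given by the shifted Landau levels $(2k+1)b-\kappa^2$, $k=0,1,2,\dots$. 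Thus $L_b$ has nontrivial kernel exactly when $\kappa^2$ equals a Landau level, and the first crossing $\kappa^2=b$ is the relevant one; this is the source of the hypothesis $|b-\kappa^2|\ll 1$. With one flux quantum per cell ($n=1$) the lowest level is spanned over $\C$ by a single gauge-periodic function $\psi_0$ (a theta function), so the kernel is one complex dimension, which after quotienting by the gauge action $\Psi\mapsto e^{i\gamma}\Psi$ is effectively one real dimension.

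This one-dimensionality invites a Lyapunov--Schmidt reduction. Writing $\Psi=s\,\psi_0+\psi^\perp$ with $\psi^\perp$ in the range of $L_b$ and $s\in\R$ the gauge-fixed amplitude, I would first solve \eqref{GLEA} for the potential correction $a$ as a function of $\Psi$: the right-hand current $\Im(\bar\Psi\Cov{A_b}\Psi)$ is quadratic in $\Psi$, and in an appropriate (Coulomb) gauge $\Curl^*\Curl$ is invertible on the relevant mean-zero $\cL$-periodic fields, so the analytic implicit function theorem yields a real-analytic map $\Psi\mapsto a(\Psi)=O(|\Psi|^2)$. Substituting back and solving the infinite-dimensional part of \eqref{GLEpsi} for $\psi^\perp=\psi^\perp(s,b)=O(s^3)$, the whole problem collapses to the scalar \emph{bifurcation equation} obtained by projecting the residual onto $\psi_0$:
\begin{equation}\label{bif}
	g(s,b):=(\kappa^2-b)\,s-c(\tau)\,s^3+O(s^5)=0 .
\end{equation}
Because every step applies the analytic implicit function theorem to a polynomial nonlinearity, the resulting solution is automatically real-analytic in $b$ wherever the reduction is nonsingular, and it is unique within the chosen slice; these give parts (II) and (III) at once.

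The heart of the matter is the sign and value of the cubic coefficient $c(\tau)$. A direct computation---using that $a(\Psi)$ is obtained by inverting $\Curl^*\Curl$ against the leading current $\Im(\bar\Psi\Cov{A_b}\Psi)$, and that the quartic self-interaction contributes the average $\langle|\psi_0|^4\rangle$---should yield $c(\tau)$ proportional, with positive constant, to $\big(2\kappa^2-1+1/\beta(\tau)\big)$, i.e.\ to $\kappa^2-\kappa_c(\tau)^2$ with $\kappa_c(\tau)$ as in \eqref{kappac}. Solving \eqref{bif} for the nontrivial branch gives $s^2=(\kappa^2-b)/c(\tau)+o(\kappa^2-b)$, which has a genuine positive solution precisely when $(\kappa^2-b)$ and $c(\tau)$ share a sign; since $\operatorname{sign}c(\tau)=\operatorname{sign}(\kappa-\kappa_c(\tau))$, this is exactly the stated constraint $(\kappa-\kappa_c(\tau))(\kappa^2-b)\ge 0$, and it delivers part (I). I expect the explicit evaluation of $c(\tau)$---in particular verifying that the diamagnetic, quartic, and back-reaction contributions assemble exactly into $\beta(\tau)$---to be the main technical obstacle.

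For part (IV) I would compute the average energy per cell of the bifurcating solution as a function of $(b,\tau)$. Feeding the expansions $\Psi=s\psi_0+O(s^3)$, $a=O(s^2)$ and $s^2\sim(\kappa^2-b)/c(\tau)$ into \eqref{eq:GL-energy}, normalized by $|\Omega^\cL|$, and keeping the leading nontrivial order, the $\tau$-dependence should enter only through $\beta(\tau)$, in such a way that for $\kappa^2>1/2$ minimizing the energy over lattice shapes is equivalent to minimizing $\beta(\tau)$. It is a classical fact about the Abrikosov/theta-function functional that $\beta(\tau)$ attains its minimum over the fundamental domain at the hexagonal point $\tau=e^{i\pi/3}$. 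Combining the energy expansion with this minimization, and noting that the minimizer depends continuously on the small parameter $\kappa^2-b$, the optimal shape parameter $\tau_\cL$ converges to $\tau_{\mathrm{triangular}}=e^{i\pi/3}$ as $b\to\kappa^2$, which is part (IV).
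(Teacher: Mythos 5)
Your proposal follows essentially the same route as the paper: Lyapunov--Schmidt reduction off the normal branch with the one-dimensional lowest-Landau-level kernel, solving the $A$-equation first via invertibility of $\Curl^*\Curl$ on mean-zero divergence-free periodic fields, reducing by gauge invariance to a real scalar bifurcation equation whose cubic coefficient is (up to the positive factor $\beta(\tau)$) the combination $(2\kappa^2-1)\beta(\tau)+1$ governing the sign condition, and deducing (IV) from the energy expansion together with the Aftalion--Blanc--Nier theorem that $\beta$ is minimized at $\tau=e^{i\pi/3}$. The only differences are cosmetic (the paper first rescales so the lattice cell is fixed and uses $\lambda=\kappa^2/b$ as the parameter, and parametrizes the branch by the amplitude $s$ before inverting to obtain $s(b)$), so the proposal is correct and matches the paper's argument.
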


%\noindent \textbf{Remark.} %\begin{theorem}\label{thm:result}
    %\begin{itemize}%{enumerate}[\bullet]
\DETAILS{	\item The condition $\left| |\Omega^\Lat| - \frac{2\pi}{\kappa^2} \right| \ll 1$ says that $b$ is sufficiently close but \textbf{not equal} to %less than
 the critical value $b_c = \kappa^2$.

\item The solutions we found have  one quantum of flux per cell and the average magnetic flux per cell equal to $b$.}
%for $(\kappa^2 - \frac{1}{2} ) \beta (\tau)+\frac{1}{2}>0$, the bifurcated solution exists for $b<\kappa^2$, and for $(\kappa^2 - \frac{1}{2} ) \beta (\tau)+\frac{1}{2}<0$, it exists for $b>\kappa^2$.Let $n = 1$ and\textbf{%There  exists $\kappa_c<1/\sqrt{2}$ (specified below) s.t.  For every $b$ and any lattice $\cL$  satisfying  $|\Omega^\cL |=\frac{2\pi}{b}$ and $(\kappa-\kappa_c)(\kappa^2 - b)>0$, the following holds}There exists an $\cL-$lattice solution, $(\Psi^\cL_b, A^\cL_b)$ of the Ginzburg-Landau equations with
		%\item The convergence in (IV) is in the sense that the shape parameter, $\tau_\cL $, of $\cL$ (see Subsection 3.3 below) approaches $ \tau_{triangular} = e^{i\pi/3}$ in $\C$.
	%\end{itemize}
%\end{theorem}

Since their discovery, Abrikosov lattice solutions have been studied in numerous experimental and theoretical works.
Of more mathematical studies, we mention the articles of Eilenberger \cite{Eil}, Lasher \cite{Lash}, Chapman \cite{Ch} and Ovchinnikov \cite{Ov}.

The rigorous investigation of Abrikosov solutions began soon after their discovery. Odeh \cite{Odeh} sketched a proof of existence for %proved the triangular and rectangular
various lattices using variational %(see also \cite{Dutour2}) and %obtained a result concerning  the bifurcation of solutions %set up and analyzed the bifurcation problem at the critical field strength.
and bifurcation techniques.
Barany, Golubitsky, and Turski \cite{BGT} %investigated this bifurcation for certain lattices using
applied equivariant bifurcation theory and filled in a number of details, %absent in Odeh's analysis,
and Tak\'{a}\u{c} \cite{Takac} has adapted these results to study the zeros of the bifurcating solutions. Further details and new results, in both, variational and bifurcation, approaches, %many gaps filled and details %The above results were further improved in \cite{Dutour}.
 were provided by \cite{Dutour2, Dutour}.  In particular, \cite{Dutour2} proved partial results on the relation between the bifurcation parameter and the average magnetic field $b$ (left open by previous works) and on the relation between  the Ginzburg-Landau energy and the Abrikosov function, and   \cite{Dutour}  (see also \cite{Dutour2}) found boundaries between superconducting, normal and mixed phases.
\DETAILS{Below, we compare these results with ours:
%Except for a variational result of \cite{Odeh} (see also \cite{Dutour2}), work done by both physicists and mathematicians has followed the general strategy of \cite{Abr}.

%\smallskip
%\begin{remark}
	\hfill
	\begin{enumerate}[(a)]
	\item \cite{Odeh, Dutour2} showed that for a triangular or rectangular lattice $\cL$ with $b< \kappa^2$ and for $\kappa^2 > \frac{1}{\sqrt{2}}$ an $\Lat$-lattice solution exists as a global minimizer of $\mathcal{E}_{\Omega^\Lat}$.
	\item \cite{Odeh, BGT, Dutour2} laid out main ideas of the proof of %the first part of
(I) for $\kappa^2 > \frac{1}{\sqrt{2}}$  and $b< \kappa^2 $.
	\item \cite{Lash, Dutour2} proved partial results on (IV).
	\end{enumerate}} %\end{remark}

Among related results, a relation of the Ginzburg-Landau minimization problem, for a fixed, finite domain and external magnetic field,  %for increasing Ginzburg-Landau parameter
in the regime of $\kappa\ra \infty$, to the Abrikosov lattice variational problem was obtained in \cite{AS, Al2}.

All the rigorous results above deal with Abrikosov lattices with one quantum of magnetic flux per lattice cell. Partial results for higher magnetic fluxes were proven in \cite{Ch, Al}.

 After introducing general properties of \eqref{GLE} in Sections \ref{sec:problem}-\ref{sec:rescaling}, we prove the above theorem in Sections \ref{sec:asymptotics}-\ref{sec:bifurcation-n=1}. %{sec:proofmainresult}.
%%%%%%%%%%%%%%%%%%%%%%%%%%%%%%%%%%%%%%%%%%%%%%%%%%%%%%%%%%%%%%%%%%%%%%%%%%%%%%%%%%%%%%%%%%%%%

%%%%%%%%%%%%%%%%%%%%%%%%%%%%%%%%%%%%%%%%%%%%%%%%%%%%%%%%%%%%%%
\noindent \textbf{Acknowledgements}
The second author is grateful to  Yuri Ovchinnikov for many fruitful discussions. A part of this work was done during I.M.S.'s stay at the IAS, Princeton. %Research of I.M.S. is supported by NSERC under Grant NA 7901. T.T. is supported in part by Ontario graduate fellowship and by NSERC under Grant NA 7901.

%%%%%%%%%%%%%%%%%%%%%%%%%%%%%%%%%%%%%%%%%%%%%%%%%%%%%%%%%%%%%%%%%%%%%%%%%%%%%%%%%%%%%%%%%%%%%

\section{Properties of the  Ginzburg-Landau equations} \label{sec:problem}
%%%%%%%%%%%

\textbf{2.1 Symmetries.} The Ginzburg-Landau equations exhibit a number of symmetries, that is, transformations which map solutions to solutions: %The most important of these symmetries is

\noindent The gauge symmetry, %defined for any sufficiently regular function $\eta : \Omega \to \R$, which maps $(\Psi, A) \mapsto (T_\eta\Psi, T_\eta A)$, where $  T_\eta\Psi = e^{i\eta}\Psi, \qquad T_\eta A = A + \nabla\eta.$
\begin{equation}\label{eq:gauge-symmetry}
    (\Psi(x), A(x)) \mapsto ( e^{i\eta(x)}\Psi(x),   A(x) + \nabla\eta(x)),\qquad \forall \eta \in C^2(\R^2, \R);
\end{equation}
The %re are also the
translation symmetry, %defined for each $t \in \R^2$, which maps $(\Psi, A) \mapsto (T_t\Psi, T_t A)$, where $T_t\Psi(x) := \Psi(x + t), \qquad\qquad T_t A(x) := A(x + t),$
\begin{equation}\label{eq:translation-symmetry}
    (\Psi(x), A(x)) \mapsto (\Psi(x + t), A(x + t)),\qquad \forall t \in \R^2;
\end{equation}
The rotation and reflection symmetry, %defined for each $R \in O(2)$ (the set of orthogonal $2 \times 2$ matrices), which maps $(\Psi, A) \mapsto (T_R\Psi, T_R A)$, where $T_R\Psi(x) := \Psi(Rx), \qquad\qquad T_R A(x) := R^{-1}A(Rx)$
\begin{equation}\label{eq:rotation-reflection-symmetry}
    (\Psi(x), A(x)) \mapsto (\Psi(R^{-1}x),  RA(R^{-1}x)),\qquad \forall R \in O(2) .
\end{equation}

%%%%
\medskip

\textbf{2.2 Flux quantization.} One can show that under certain boundary conditions (e.g., `gauge-periodic', see below, or if $\Omega = \R^2$ and $\mathcal{E}_\Omega < \infty$) %or $\Omega$ is a lattice cell for a lattice state)
the magnetic flux through $\Omega$ is quantized:
%Here the magnetic flux per cell $\Phi(A)$ is defined to be
\begin{equation}\label{eq:flux-per-cell}
    \Phi(A) := \int_\Omega \Curl A = 2\pi n
\end{equation}
for some integer $n$.

%%%%%%%%%%%%%%%

\textbf{2.3 Elementary solutions.} There are two immediate solutions to the Ginzburg-Landau equations that are homogeneous in $\Psi$. These are the perfect superconductor
solution where $\Psi_S \equiv 1$ and $A_S \equiv 0$, and the \textit{normal} (or non-superconducting) solution where $\Psi_N = 0$ and $A_N$ is such that
$\Curl A_N =: b$ is constant.
(We see that the perfect superconductor is a solution only when %$\Phi(A)$ is fixed to be
 $\Phi(A)= 0$. On the other hand, there is a normal solution,  $(\Psi_N = 0,\ A_N,\ \Curl A_N =$ constant), for any condition on $\Phi(A)$.)

Moreover, for any integer $n$ there is a ($n-$) vortex solution of the form
\begin{equation} \label{eq:vort}
   \Psi^{(n)} (x) = f_n (r) e^{in\theta} {\hbox{\quad and \quad}}
   A^{(n)}(x) = a_n (r) \nabla (n\theta) \ ,
\end{equation}
where $(r,\theta)$ are the polar coordinates of $x \in \R^2$, unique up to symmetry transformations (see \cite{BC, GS}). Note that $\Phi(A^{(n)}) = n$.

%In terms of the minimization problem,

%%%%

%%%%%%%%%%%%%%%%%%%%%%%%%%%%%%%%%%%%%%%%%%%%%%%%%%%%%%%%%%%%%%%%%%%%%%%%%%%%%%%%%%%%%%%%%%%%%

\section{Lattice states}\label{sec:lattice states}

\textbf{3.1 Periodicity.} Our focus in this paper is on states $(\Psi, A)$ defined on all of $\R^2$, but whose physical properties, the density of superconducting pairs of electrons, $n_s := |\Psi|^2$, the magnetic field, $B := \Curl A$, and the current density, $J := \Im(\bar{\Psi}\Cov{A}\Psi)$, are doubly-periodic with respect
to some lattice $\cL$. We call such states $\cL-$\emph{lattice states}.
%We now turn to studying quasiperiodic states and we begin by giving a more mathematically convenient definition of a %quasiperiodic state.

One can show that  a state $(\Psi, A) \in H^1_{\textrm{loc}}(\R^2;\C) \times H^1_{\textrm{loc}}(\R^2;\R^2)$ is a $\mathcal{L}$-lattice state if and only if translation by an element of the lattice results in a gauge transformation
    of the state, that is, for each $t \in \mathcal{L}$, there exists a function $g_t \in H^2_{loc}(\R^2;\R)$
    such that $$\Psi(x + t) = e^{ig_t(x)}\Psi(x)\ \mbox{and}\ A(x+t) = A(x) + \nabla g_t(x)$$ almost everywhere.

It is clear that the gauge, translation, and rotation symmetries of the Ginzburg-Landau equations map lattice states to
lattice states. In the case of the gauge and translation symmetries, the lattice with respect to which the solution is
periodic does not change, whereas with the rotation symmetry, the lattice is rotated as well. It is a simple calculation
to verify that the magnetic flux per cell of solutions is also preserved under the action of these symmetries.

%%%%
Note that $(\Psi, A)$ is defined by its restriction to a single cell and can be reconstructed from this restriction by lattice translations.

\textbf{3.2 Flux quantization.}  The important property of lattice states is that the magnetic flux through a lattice cell is quantized, i.e. \eqref{eq:flux-per-cell} holds,
with $\Omega$ any fundamental cell of the lattice.

Indeed, if $|\Psi| > 0$ on the boundary of the cell, we can write %define the phase $\theta$ by the conditions
$\Psi = |\Psi|e^{i\theta}$ and $0 \leq \theta < 2\pi$. The periodicity of $n_s$ and $J$ ensure the periodicity of $\nabla\theta - A$ and therefore by Green's theorem, $\int_\Omega \Curl A = \oint_{\partial\Omega} A = \oint_{\partial\Omega} \nabla\theta$ and this function is equal to $2\pi n$ since $\Psi$ is single-valued.

Equation \eqref{eq:flux-per-cell} then imposes a condition on the area of a cell, namely, \eqref{Ombrel}.
%%%%

\textbf{3.3 Lattice shape.} In order to define the shape of a lattice, we identify $x \in \R^2$ with $z = x_1 + ix_2 \in \C$, and view $\mathcal{L}$ as a subset of $\C$. It is a well-known fact (see \cite{Alfors}) that any lattice $\mathcal{L} \subseteq \C$ can be given a basis ${r, r'}$ such
that the ratio $\tau = \frac{r'}{r}$ satisfies the inequalities:
\begin{enumerate}[(i)]
\item $|\tau| \geq 1$;
\item $\Im\tau > 0$;
\item $-\frac{1}{2} < \Re\tau \leq \frac{1}{2}$, and $\Re\tau \geq 0$ if $|\tau| = 1$.
\end{enumerate}
Although the basis is not unique, the value of $\tau$ is, and we will use that as a measure of the shape of the lattice.

Using the rotation symmetry we can assume that $\mathcal{L}$ has as a basis $\{\, re_1, r\tau \,\}$, where $r$ is a positive real number and $e_1 = (1,0)$.

%%%%

%
%\subsection{Lattice energy}
%
\DETAILS{\textbf{3.3 Lattice energy.} Lattice states clearly have infinite total energy, so we will instead
consider the average energy per cell,  defined by
\begin{equation}
\label{eq:GL-energy-per-cell}
    E_\Omega(\psi, A) := \frac{1}{|\Omega|} \mathcal{E}_\Omega(\psi, A).
\end{equation}
Here, $\Omega$ is a primitive cell of the lattice with respect to which $(\psi, A)$ is a lattice state and $|\Omega|$ is its Lebesgue measure.
We seek minimizers of this functional under the condition that the average magnetic flux per lattice cell is fixed:
$\frac{1}{|\Omega|}\Phi(A) = b$.

We define the energy of the lattice with the flux $n$ per cell as \begin{equation} \label{eq:GL-energy-of-lattice}     \cE_n (\cL ) := \inf E_\Omega(\psi, A),
%        \cE_n (\cL ) := \inf_{(\psi, a)\in \mathscr{H}_n(\mathcal{L})} E(\psi, A),
\end{equation}
where the infimum is taken over all smooth $\Lat$-lattice states satisfying
%. $\mathscr{H}_n(\mathcal{L})$ is the closure of the space of smooth pairs $(\psi, A)$ defined on $\Omega_\Lat$, any fundamental lattice cell, and satisfying
(i) through (iv) of Proposition \ref{thm:fix-gauge}.}
%
%

%\subsection{Result}
%
\DETAILS{\textbf{3.3 Result. Precise formulation.} We begin with some notation. The following theorem gives the precise formulation of
%Theorems \ref{thm:branch}  and \ref{thm:minimizing-shape} on Introduction.
Theorem \ref{thm:main-result} from the introduction.
\begin{theorem}\label{thm:result}
    Let $n = 1$ and $b$ be sufficiently close but \textbf{not equal} to %less than
 the critical value $b_c = \kappa^2$. \textbf{%There  exists $\kappa_c<1/\sqrt{2}$ (specified below) s.t.
 For every $b$ and any lattice $\cL$  satisfying  $|\Omega^\cL |=\frac{2\pi}{b}$ and $(\kappa-\kappa_c)(\kappa^2 - b)>0$, the following holds}
    \begin{enumerate}[(I)]
	\item %For every $b$ and any lattice $\cL$ with $|\Omega^\cL |=\frac{2\pi}{b}$,
There exists an $\cL-$lattice solution, $(\Psi^\cL_b, A^\cL_b)$ of the Ginzburg-Landau equations with one quantum of flux per cell and with average magnetic flux per cell equal to $b$.
%for $(\kappa^2 - \frac{1}{2} ) \beta (\tau)+\frac{1}{2}>0$, the bifurcated solution exists for $b<\kappa^2$, and for $(\kappa^2 - \frac{1}{2} ) \beta (\tau)+\frac{1}{2}<0$, it exists for $b>\kappa^2$.
	\item This solution is unique, up to the symmetries, in a neighbourhood of the normal solution.
	\item The family of these solutions is real analytic in $b$ in a neighbourhood of $b_c$.
	\item If $\kappa^2 > 1/2$, then the global minimizer $\Lat_b$ of the average energy per cell, $E(\cL)\equiv \frac{1}{|\Omega|}E_{\Omega^\cL}(\Psi^\cL_b, A^\cL_b)$, %$\cE_1(\mathcal{L})$,
approaches the $\Lat_{triangular}$ as $b \to b_c$ in the sense that the shape $\tau_b $ approaches $ \tau_{triangular} = e^{i\pi/3}$ in $\C$.
	\end{enumerate}
\end{theorem}
After some preliminaries in the next section, the rest of this papers is devoted to the proof of this theorem.}

%%%%%%%%%%%%%%%%%%%%%%%%%%%%%%%%%%%%%%%%%%%%%%%%%%%%%%%%%%%%%%%%%%%%%%%%%%%%%%%%%%%%%%%%%%%%%

\section{Fixing the gauge and rescaling} \label{sec:rescaling}
%\subsection{Fixing the Gauge}

In this section we fix the gauge for solutions,  $(\Psi, A)$, of \eqref{GLE} and then rescale them to eliminate the dependence of the size of the lattice on $b$. Our space will then depend only on the number of quanta of flux and the shape of the lattice.

\textbf{4.1 Fixing the gauge.}
The gauge symmetry allows one to fix solutions to be of a desired form.
Let $A^b_0(x) = \frac{b}{2} J x$, %x^\perp$, where $x^\perp = (-x_2, x_1)$.
where  $J$ is the symplectic matrix   \begin{equation*}  J = \left( \begin{array}{cc} 0 & -1 \\ 1 & 0 \end{array} \right).\end{equation*}
     We will use the following preposition, first used by \cite{Odeh} and proved in \cite{Takac} (we provide an alternate proof in Appendix \ref{sec:alternate-proof}).

\begin{proposition}
    \label{thm:fix-gauge}
    Let $(\Psi, A)$ be an $\mathcal{L}$-lattice state, and let $b$ be the average magnetic flux per cell.
    Then there is a $\mathcal{L}$-lattice state $(\phi, A^b_0 + \alpha),$ that is gauge-equivalent to a translation of $(\Psi, A)$, such that
     %$\phi$ and $a$ satisfy the following conditions:
    \begin{enumerate}
    \item[(i)] %$\alpha$ is doubly periodic with respect to $\mathcal{L}$:
  $\phi(x + t) = e^{\frac{ib}{2}x\cdot J t}\phi(x)$ and   $\alpha(x + t) = \alpha(x)$ for all $t$ in a fixed basis of $\mathcal{L}$;
    \item[(ii)] $\alpha$ has mean zero: $\int_\Omega \alpha = 0$;
    \item[(iii)] $\alpha$ is divergence-free: $\Div \alpha = 0$.
    %\item[(iv)] $\phi(x + t) = e^{\frac{ib}{2}t\cdot J x}\phi(x)$, %where $t\wedge x = t_1x_2 - t_2x_1$,
        %for $t = (re_1, r\tau)$.
    \end{enumerate}
\end{proposition}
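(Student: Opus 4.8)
The plan is to match the magnetic fields first and fix the phase afterwards. Since $(\Psi,A)$ is an $\cL$-lattice state, $B:=\Curl A$ is $\cL$-periodic, and by the definition of $b$ its average over a cell equals $b=\Curl A^b_0$; hence $B-b$ is periodic with $\int_{\Om^\cL}(B-b)=0$. First I would solve the periodic Poisson equation $\Delta\psi=-(B-b)$ for a mean-zero periodic $\psi$, which is solvable precisely because $B-b$ has zero average over a cell. Setting $\alpha:=\Curl^*\psi=(\p_2\psi,-\p_1\psi)$ then produces a field that is periodic, divergence free (so (iii) holds, since $\Div\Curl^*\psi=\p_1\p_2\psi-\p_2\p_1\psi=0$), and of mean zero (so (ii) holds, since $\int_{\Om^\cL}\p_j\psi=0$ by periodicity); moreover $\Curl\alpha=-\Delta\psi=B-b$, so that $\Curl(A^b_0+\alpha)=\Curl A$.

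Next I would gauge $A$ into this form. As $A-A^b_0-\alpha$ is curl free on the simply connected $\R^2$, it equals $\nabla\omega$ for some $\omega\in H^2_{\mathrm{loc}}$, and the gauge transformation \eqref{eq:gauge-symmetry} with $\eta=-\omega$ carries $(\Psi,A)$ to $(\phi,A^b_0+\alpha)$ with $\phi=e^{-i\omega}\Psi$, leaving $\alpha$ and hence (ii)--(iii) intact. To read off the quasiperiodicity, note the transformed pair is again a lattice state, so by the characterization in Section 3.1, for each basis vector $t$ there is a phase $\tilde g_t$ with $(A^b_0+\alpha)(x+t)-(A^b_0+\alpha)(x)=\nabla\tilde g_t(x)$ and $\phi(x+t)=e^{i\tilde g_t(x)}\phi(x)$. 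Because $\alpha$ is periodic the left-hand side is the constant $\tfrac b2 Jt=\nabla(\tfrac b2\,x\cdot Jt)$, forcing $\tilde g_t(x)=\tfrac b2\,x\cdot Jt+c_t$ for a constant $c_t$, i.e. $\phi(x+t)=e^{ic_t}e^{\frac{ib}{2}x\cdot Jt}\phi(x)$.

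The last and most delicate step is to kill the constants $c_{t_1},c_{t_2}$, which is exactly why the statement permits a translation: any gauge transformation preserving both (ii) and (iii) leaves $c_t$ unchanged modulo $2\pi$. I would apply the translation \eqref{eq:translation-symmetry} by a vector $s$ followed by the constant gauge $\eta=-\tfrac b2(Js)\cdot x$. The translation replaces $\alpha(x)$ by $\alpha(x+s)+\tfrac b2 Js$ and the phase by $c_t+\tfrac b2\,s\cdot Jt$, while the constant gauge subtracts $\tfrac b2 Js$ (restoring mean zero and preserving $\Div=0$) and adds $-\tfrac b2(Js)\cdot t$ to the phase; by the antisymmetry $u\cdot Jv=-v\cdot Ju$ the net change of $c_t$ is $b\,s\cdot Jt$. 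Since $J$ is invertible, $\{Jt_1,Jt_2\}$ is a basis and $s\mapsto(s\cdot Jt_1,s\cdot Jt_2)$ is an isomorphism of $\R^2$, so for $b\neq0$ (which holds here, as $b$ is near $\kappa^2>0$) I can choose $s$ with $b\,s\cdot Jt_j=-c_{t_j}$, making both constants vanish and yielding (i); the resulting pair is a gauge transform of a translation of $(\Psi,A)$, as required. I expect this bookkeeping of the constant phases under the combined translation and gauge to be the real obstacle; the Hodge construction of $\alpha$ is routine, and flux quantization (Section 3.2, \eqref{Ombrel}) is what guarantees $\phi$ is single valued so that the $c_t$ are consistent in the first place.
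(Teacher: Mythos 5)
Your proof is correct, and its overall architecture coincides with the paper's (Appendix A): first produce a periodic, divergence-free, mean-zero $\alpha$ with $\Curl(A_0^b+\alpha)=\Curl A$, then observe that the quasiperiodicity phase is forced to be $\tfrac b2 x\cdot Jt+c_t$, and finally kill the constants $c_t$ by a translation combined with a linear gauge, solving a $2\times2$ system that is invertible because the lattice is nondegenerate and $b\neq 0$. Where you genuinely differ is in the construction of $\alpha$: the paper follows Eilenberger and builds an explicit periodic primitive $P$ of the field by iterated line integrals of $\Curl A$, and then needs two successive corrections (a periodic solution of $\Delta\eta''=-\Div P$ to make the potential divergence-free, and a linear $\eta'''$ to restore mean zero), whereas you obtain all three normalizations at once from a single periodic Poisson solve for a stream function, $\alpha=\Curl^*\psi$ with $\Delta\psi=-(B-b)$, solvable precisely because $B-b$ has zero cell average. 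Your route is cleaner and makes the role of flux normalization transparent; the paper's is more elementary in that it avoids invoking solvability of the periodic Poisson problem (though it then uses it anyway for $\eta''$). Two small points: both arguments need $b\neq 0$ at the translation step (the paper divides by $b$ in its matrix equation, as you do), which is harmless in context but worth stating; and your observation that no gauge transformation compatible with (ii)--(iii) can alter the $c_t$ is a nice justification, absent from the paper, of why the translation is unavoidable.
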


%%%%

\textbf{4.2 Rescaling.}
Suppose,  that we have a $\mathcal{L}$-lattice state $(\Psi, A)$, where $\mathcal{L}$ has shape $\tau$. Now let $b$ be the average magnetic flux per cell of the state and $n$ the quanta of flux per cell. From the quantization of the flux, we know that
\begin{equation}\label{bOmRel}
    b = \frac{2\pi n}{|\Omega|} = \frac{(r^\tau)^2}{r^2 } n, \quad r^\tau := \left( \frac{2\pi}{\Im\tau} \right)^{\frac{1}{2}},
\end{equation}
We set
$\sigma := \left(\frac{n}{b}\right)^{\frac{1}{2}}=\frac{r}{r^\tau}$. %The last two relations give $\sigma = \left(\frac{\Im\tau}{2\pi}\right)^{\frac{1}{2}} r$.
We now define the rescaled fields $(\psi, a)$ to be
\begin{equation*}
    (\psi(x), a(x)) := ( \sigma \Psi(\sigma x), \sigma A(\sigma x) ).
\end{equation*}
Let %$r^\tau := \left( \frac{2\pi}{\Im\tau} \right)^{\frac{1}{2}}$ and let
$\mathcal{L}^\tau$ be the lattice spanned by $r^\tau$ and $r^\tau\tau$, with $\Omega^\tau$ being a primitive cell of that lattice.
%Here \begin{equation} \label{eq:rtau}  r^\tau := \left( \frac{2\pi}{\Im\tau} \right)^{\frac{1}{2}}.    \end{equation}
    We note that $|\Omega^\tau| = 2\pi n$.
    We summarize the effects of the rescaling above: %in the following proposition whose proof is straightforward and therefore omitted.
%\begin{proposition}
%\label{prop:rescale}

    %Then:
    \begin{enumerate}[(A)]

    \item $(\psi, a)$ is a $\mathcal{L}^\tau$-lattice state.

    \item   $\frac{1}{|\Omega^\tau|}E_{\Omega^\tau}(\Psi,A) = \mathcal{E}_{\lambda}(\psi,a)$, where $\lambda = \frac{\kappa^2 n}{b}$ and
        \begin{equation}     \label{rEnergy}
            \mathcal{E}_\lambda(\psi, \alpha) = \frac{\kappa^4}{|\Omega^\tau| \lambda^2} \int_{\Omega^\tau}\left( |\Cov{a}\psi|^2 + |\Curl a|^2
                        + \frac{\kappa^2}{2} ( |\psi|^2 - \frac{\lambda}{\kappa^2} )^2\right) \,dx.
        \end{equation}

    \item $\Psi$ and $A$ solve the Ginzburg-Landau equations if and only if $\psi$ and $a$ solve
            \begin{subequations} \label{eq:rGL}
            \begin{equation} \label{rGLpsi}
                (-\Delta_{a}  - \lambda) \psi = -\kappa^2 |\psi|^2\psi,
            \end{equation}
            \begin{equation} \label{rGLA}
                \Curl^*\Curl a = \Im(\bar{\psi}\Cov{a}\psi)
            \end{equation}
        \end{subequations}
 for  $\lambda = \frac{\kappa^2 n}{b}$. The latter equations are valid on  $\Omega^\tau$  with the boundary conditions given in the next statement.

    \item   \label{reduced-gauge-form}
        If $(\Psi, A)$ is of the form described in Proposition \ref{thm:fix-gauge}, then  $\psi$ and $a$ satisfy %$(\hat{\psi}, \hat{A})$ is such that
                         %$x^\perp =(-x_2, x_1)$, and
                       \begin{enumerate}[(a)]
            \item \label{reduced-quasiperiodic-bc}  $\psi(x + t) = e^{\frac{in}{2}x\cdot J t}\psi(x)$ and   $\alpha(x + t) = \alpha(x)$,  for $t = r^\tau$, $r^\tau\tau$,  where  %\begin{equation} \label{hatA}
               $ a = A^n_0 + \alpha,\  \mbox{with}\ A^n_0(x) := \frac{n}{2} J x,$ %^\perp= \frac{n}{2}(-x_2, x_1),
            %\end{equation}
                %$\alpha$ is double periodic with respect to $\mathcal{L}^\tau$,
            \item $\int_{\Omega^\tau} \alpha = 0$,
            \item $\Div \alpha = 0$.
            %\item \label{reduced-quasiperiodic-bc}  $\psi(x + t) = e^{\frac{in}{2}t\wedge x}\psi(x)$ for $t = r^\tau$, $r^\tau\tau$.
            \end{enumerate}

    %\item Finally the average magnetic flux per cell of $(\hat{\psi},\hat{A})$ is
%       \begin{equation}
%       \label{RP:Flux}
%           \frac{1}{|\Omega^\tau|} \int_{\Omega^\tau} \Curl \hat{A} \,dx = n.
%       \end{equation}
    \end{enumerate}
%\end{proposition}

Our problem then is, for each $n = 1,2,\ldots$,  find $(\psi, a)$, %\in \Hpsi{2}{n} \times \HA{2}{\Div,0}$ such that $(\psi, A^n_0 + a)$
solving the rescaled Ginzburg-Landau equations \eqref{eq:rGL} and satisfying (iv), and among these find the one that minimizes the average energy $\mathcal{E}_\lambda$.

%%%%%%%%%%%%%%%%%%%%%%%%%%%%%%%%%%%%%%%%%%%%%%%%%%%%%%%%%%%%%%%%%%%%%%%%%%%%%%%%%%%%%%%%%%%%%
%%%%%%%%%%%%%%%%%%%%%%%%%%%%%%%%%%%%%%%%%%%%%%%%%%%%%%%%%%%%%%%%%%%%
\section{Asymptotics of solutions to \eqref{eq:rGL}} \label{sec:asymptotics}

In this section we derive properties of families of solutions of \eqref{eq:rGL} depending on $b$, %It is convenient to reparametrize such families  assuming that $\lam$ depends on a
\DETAILS{\textbf{We introduce the small parameter $\e:=\sqrt{\kappa^2 n-b}>0$ and display the dependence of solutions on it}. Thus, $(\psi_\e,\ \alpha_\e = A^n_0 + a_\e)$ is a family of solutions of the equations \eqref{eq:rGL} with $\lambda \equiv \lambda_\e= \frac{\kappa^2 n}{b}$ and $\e:=\sqrt{\kappa^2 n-b}>0$.}
  %
  %depending on $\e>0$. (Later we set $\e$ to be proportional to $\sqrt{\kappa^2 - b}$.)which we denote $(\psi_b,\ a_b = A^n_0 + \alpha_b)$,
 in the regime of $b\ra \kappa^2n$, provided such families exist.
%We establish some  asymptotic properties of such a family as %$\e\ra 0$. $b\ra \kappa^2n$. These properties will be needed below.
It might be convenient to reparametrize such families  assuming that $b$, or $\lam$, depends on a parameter $\e\ra 0$. Most of the results of this section were first stated in  \cite{Abr} (see also \cite{Ch}). %which will be used in the following section. %We look for solutions to  Eqns \eqref{psiaeqs} %and \eqref{gla}  %{GL1} and \eqref{GL2} %in the form
%\DETAILS{
In what follows we use the notation
\begin{equation*}
\langle f \rangle:=\frac{1}{|\Omega^\tau|}\int_{\Omega^\tau}f
\end{equation*} for the average of a function $f$ over the lattice cell $\Omega^\tau$.%}
The main result of this section is the following
\begin{proposition} \label{prop:leadingorder}
Assume the equations \eqref{eq:rGL}
have a family, $(\psi_\e,\ a_\e,\ \lam_\e)$, % = A^n_0 + \alpha_b)$, %\ \lambda_\e)$, $\e \rightarrow 0$,
$\e\ra 0$, of  Abrikosov lattice solutions, with $\lan\curl a_\e\ran=n$, %of the form %\eqref{expansions},
%\begin{subequations} \label{expansions}
%and assume there is a parametrization $b=b(\e)$, s.t. $b(\e)\ra \kappa^2n$, as $\e \rightarrow 0$, and
satisfying
\begin{equation} \label{epsexpansions}
%\label{expansion-psi}
\psi_{\e} = \e\psi_{0}+ O(\epsilon^3),\ a_{\e} =A^n_0 +  \e^2 a_{1} +  O(\epsilon^4), \ \lambda_{\e} = n + \epsilon^2\lambda_1 + O(\epsilon^4),  %\epsilon^2\psi^{(3)} +O(\epsilon^4), %\psi^{(5)} +\dots,
\end{equation}
(with the first and second derivatives of the remainders obeying similar estimates). Then $\psi_{0}$, $a_{1}$ and $\lambda_1$ satisfy the equations
\begin{equation} \label{GL:leadingorder} %{GL1-linearized''}
-\Delta_{A_{0}^n} \psi_{0} = n  \psi_{0},\ \mbox{and}\ \curl a_{1}=\frac{1}{2}\langle |\psi_{0} |^2 \rangle -\frac{1}{2}|\psi_{0} |^2,\ %\, \mbox{with}\  ,
\end{equation}
and %$\lam_1$ is given by
   \begin{equation} \label{lam1}
\lambda_1=\left[\frac{1}{2}+%\frac{1}{2}\langle |\psi_{0} |^2 \rangle^2  + \big(\kappa^2-{1\over 2}\big) \langle |\psi_{0} |^4 \rangle
+ \left(\kappa^2-\frac{1}{2}\right)\beta(\psi_0)\right]\langle |\psi_{0} |^2 \rangle,
\end{equation}
%or \eqref{psi1-equation}, i.e. \eqref{GL1-linearized'}, with $\lambda=n$,
%and \begin{equation} \label{eqna2} %H^{(2)} =%{ \partial a^{(2)}_2 \over \partial x_1}\curl a_{1}=H -{1\over 2}|\psi_{0} |^2,\ \, \mbox{with}\ H:={1\over 2}\langle |\psi_{0} |^2 \rangle ,\end{equation}
%where $H$ is the constant given by \begin{equation} \label{H} H:={1\over 2}\langle |\psi_{0} |^2 \rangle ,\end{equation}
%where $c:=\frac{1}{2}\langle |\psi_{0} |^2 \rangle$ and $\lam_1=\frac{1}{\kappa^2 n}$.
where %$\alpha_\e:=A_{0}^n+a_\e$ and
$\beta(\psi_0)$ is the Abrikosov parameter, defined by
\begin{equation} \label{beta'}
    \beta(\psi_0) := \frac{ \lan |\psi_0|^4 \ran}{  \lan |\psi_0|^2 \ran^2}.
\end{equation}
 Furthermore, the energy is expressed as
               \begin{equation}     \label{Elambda1}
        \mathcal{E}_{\lambda_{\e}}(\psi_{\e}, a_{\e}) = %\frac{\kappa^2}{2} \frac{n^2\kappa^4}{\lambda^2} -  \frac{\kappa^4\lambda_1}{2\lambda^2} \e^4  \left( \kappa^4 -  \frac{1}{2} \frac{1}{( \kappa^2- {1\over 2})\beta +{1\over 2}} \right)
        \frac{\kappa^2}{2} +\frac{n^2\kappa^4}{\lambda_\e^2} -  \frac{\kappa^4\e^4}{2n^2 }\left[ (\kappa^2 -  \frac{1}{2})\beta(\psi_0) + \frac{1}{2} \right]\langle |\psi_{0} |^2 \rangle^2 +  O(\e^6).
           \end{equation}
\end{proposition}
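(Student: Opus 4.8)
The plan is to substitute the expansions \eqref{epsexpansions} into the rescaled equations \eqref{eq:rGL} and collect powers of $\e$, the assumed remainder estimates (including those on the first two derivatives) guaranteeing that this matching is legitimate and that the discarded terms are genuinely of higher order. Writing $\Cov{a_\e} = \Cov{A^n_0} - i\e^2 a_1 + O(\e^4)$, so that $-\Delta_{a_\e} = -\Delta_{A^n_0} + O(\e^2)$, the order-$\e$ part of \eqref{rGLpsi} reads $(-\Delta_{A^n_0}-n)\psi_0=0$, which is the first equation in \eqref{GL:leadingorder}. For \eqref{rGLA} I note that $\curl A^n_0 = n$ is constant, hence $\curl^*\curl A^n_0 = 0$, so the leading balance occurs at order $\e^2$ and gives $\curl^*\curl a_1 = \Im(\bar\psi_0\Cov{A^n_0}\psi_0)=:J_0$.

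Next I would put $\curl a_1$ into closed form. The key structural fact is that $\psi_0$ lies in the lowest Landau level: using the factorization $-\Delta_{A^n_0} = -4\,\partial\bar\partial + n$ (with $\partial,\bar\partial$ the covariant Cauchy--Riemann operators built from $\Cov{A^n_0}$, and $\curl A^n_0 = n$), pairing $(-\Delta_{A^n_0}-n)\psi_0=0$ with $\psi_0$ and integrating by parts over $\Omega^\tau$ (no boundary terms, by gauge-periodicity) yields $\|\bar\partial\psi_0\|^2=0$, i.e. $\bar\partial\psi_0=0$. This self-dual relation forces the supercurrent into the form $J_0 = \tfrac12(-\p_2|\psi_0|^2,\ \p_1|\psi_0|^2)$, whence $\curl J_0 = \tfrac12\Delta|\psi_0|^2$. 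Applying $\curl$ to $\curl^*\curl a_1=J_0$ and using $\curl\curl^*=-\Delta$ gives $\Delta(\curl a_1 + \tfrac12|\psi_0|^2)=0$; since a harmonic function on the torus is constant and $\langle\curl a_1\rangle = 0$ (as $a_1$ is periodic), the constant is $\tfrac12\langle|\psi_0|^2\rangle$, giving the second equation in \eqref{GL:leadingorder}.

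For $\lambda_1$ I would pass to order $\e^3$ in \eqref{rGLpsi}. Denoting the $O(\e^3)$ term of $\psi_\e$ by $\e^3\psi_1$ and expanding $-\Delta_{a_\e}$ to first order (the correction operator is $2i\,a_1\!\cdot\!\Cov{A^n_0}$), the order-$\e^3$ balance is $(-\Delta_{A^n_0}-n)\psi_1 = \lambda_1\psi_0 - 2i\,a_1\!\cdot\!\Cov{A^n_0}\psi_0 - \kappa^2|\psi_0|^2\psi_0$. Since $-\Delta_{A^n_0}-n$ is self-adjoint with $\psi_0$ in its kernel, the Fredholm solvability condition—pairing the right-hand side with $\psi_0$—reads $\lambda_1\int_{\Omega^\tau}|\psi_0|^2 + 2\int_{\Omega^\tau}|\curl a_1|^2 - \kappa^2\int_{\Omega^\tau}|\psi_0|^4 = 0$, the middle term arising from $\bar\psi_0\Cov{A^n_0}\psi_0 = \tfrac12\nabla|\psi_0|^2 + iJ_0$ together with $\Div a_1=0$ and $J_0=\curl^*\curl a_1$. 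Inserting the formula for $\curl a_1$ turns $\int|\curl a_1|^2$ into $\tfrac14|\Omega^\tau|(\langle|\psi_0|^4\rangle-\langle|\psi_0|^2\rangle^2)$; solving for $\lambda_1$ and substituting $\beta(\psi_0)=\langle|\psi_0|^4\rangle/\langle|\psi_0|^2\rangle^2$ yields \eqref{lam1}.

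Finally, for the energy \eqref{Elambda1} the decisive move is to avoid expanding the unknown higher-order profiles: pairing \eqref{rGLpsi} with $\psi_\e$ gives $\int|\Cov{a_\e}\psi_\e|^2 = \lambda_\e\int|\psi_\e|^2 - \kappa^2\int|\psi_\e|^4$, which collapses \eqref{rEnergy} to $\mathcal{E}_{\lambda_\e} = \tfrac{\kappa^2}{2} + \tfrac{\kappa^4}{|\Omega^\tau|\lambda_\e^2}\int|\curl a_\e|^2 - \tfrac{\kappa^6}{2|\Omega^\tau|\lambda_\e^2}\int|\psi_\e|^4$. Now $\int|\curl a_\e|^2 = n^2|\Omega^\tau| + \e^4\int(\curl a_1)^2 + O(\e^6)$ (the $\e^2$ cross term vanishes since $\int\curl a_1=0$) and $\int|\psi_\e|^4 = \e^4\int|\psi_0|^4 + O(\e^6)$; keeping $\tfrac{n^2\kappa^4}{\lambda_\e^2}$ exact and replacing $\lambda_\e^{-2}=n^{-2}(1+O(\e^2))$ inside the $\e^4$ terms, then substituting $\int(\curl a_1)^2=\tfrac14|\Omega^\tau|(\langle|\psi_0|^4\rangle-\langle|\psi_0|^2\rangle^2)$ and $\langle|\psi_0|^4\rangle=\beta(\psi_0)\langle|\psi_0|^2\rangle^2$, the $\e^4$ contributions combine into exactly the stated coefficient. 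I expect the main obstacle to be precisely this energy step: a naive term-by-term expansion of \eqref{rEnergy} produces an $\e^4$ contribution involving the unknown correction $\psi_1$, and it is the equation-derived identity above—rather than brute force—that eliminates it. The other genuinely nontrivial ingredient is the lowest-Landau-level identity $\curl J_0 = \tfrac12\Delta|\psi_0|^2$, without which $\curl a_1$ could not be found in closed form.
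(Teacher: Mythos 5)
Your proposal is correct and follows essentially the same route as the paper: order-by-order matching for \eqref{GL:leadingorder}, the lowest-Landau-level identity $\bar\partial\psi_0=0$ (the paper's \eqref{nullL'}) to put the supercurrent in the form $-\tfrac12\curl^*|\psi_0|^2$, the Fredholm solvability condition paired with $\psi_0$ for \eqref{lam1}, and the pairing of \eqref{rGLpsi} with $\psi_\e$ to eliminate the gradient term from the energy before expanding. The only cosmetic differences are that you recover $\curl a_1$ by applying $\curl$ and invoking harmonicity on the torus where the paper integrates $\curl^*$ directly, and that you phrase the $\lambda_1$ step as an explicit order-$\e^3$ equation rather than pairing the full equation and letting $\e\to 0$; both are equivalent.
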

\begin{proof} Plugging \eqref{epsexpansions} into \eqref{eq:rGL} %{psiaeqs}
and taking $\e \rightarrow 0$ gives the first equation in \eqref{GL:leadingorder} and
\begin{equation}
\label{a2-equation}
\curl^*\curl a_{1} =  \Im (\bar{\psi}_{0} \nabla_{A_{0}^n}
\psi_{0} ).
\end{equation}
We show now that
\begin{equation} \label{psi1-current}
\Im (\opsi_{0} \nabla_{A_{0}^n} \psi_{0} ) = -\frac{1}{ 2}
\curl^* |\psi_{0} |^2.
\end{equation}
(Recall, that for a scalar function, $f(x) \in \R,\ \curl^* f = (\p_2 f, -\p_1 f)$ is a vector.)
%Finally, we prove \eqref{psi1-current}.
	It follows from %is easy to see (see
\eqref{nullL'}, Section \ref{sec:operators}, that $\psi_{0}$ satisfies the first order equation
\begin{equation} \label{1stordereqnpsi1}
%c\psi_{0} = 0,\ \textrm{with}\  c:=
\left((\nabla_{A_{0}^n})_1+i(\nabla_{A_{0}^n})_2\right)\psi_{0} = 0.
\end{equation}
 Multiplying this relation by $\bar{\psi}_{0}$, we obtain $\bar{\psi}_{0}(\nabla_{A_{0}^n})_1\psi_{0}+i\bar{\psi}_{0}(\nabla_{A_{0}^n})_2\psi_{0} =0$. Taking imaginary and real parts of this equation gives
\begin{equation*}\Im \bar{\psi}_{0}(\nabla_{A_{0}^n})_1\psi_{0}=- \Re \bar{\psi}_{0}(\nabla_{A_{0}^n})_2\psi_{0} =-\frac{1}{ 2}\p_{x_2}|\psi_{0}|^2
\end{equation*} and
\begin{equation*}\Im\bar{\psi}_{0}(\nabla_{A_{0}^n})_2\psi_{0} =\Re\bar{\psi}_{0}(\nabla_{A_{0}^n})_1\psi_{0}=\frac{1}{ 2}\p_{x_1}|\psi_{0}|^2,
\end{equation*}
which, in turn, gives \eqref{psi1-current}.

The equations \eqref{a2-equation} and \eqref{psi1-current} give %the second equation in \eqref{GL:leadingorder},
$\curl a_1 = H - \frac{1}{2} |\psi_0|^2$,
with $H$ a constant of integration. $H$ has to be chosen so that $\int_{\Omega^\tau} \curl a_1 =0$, which gives  the second equation in \eqref{GL:leadingorder}. %\eqref{H}.

%\begin{lemma}and %\end{lemma}\begin{proof}
Now we prove \eqref{lam1}. We multiply the equation \eqref{rGLpsi}
scalarly (in $L^2(\Omega^\tau)$) by $\psi_{0}$, use that the operator $-\Delta_{a}  $ is self-adjoint and $(-\Delta_{a}  - n)\psi_0=0$,
\DETAILS{to obtain
\begin{equation*}
\frac{n-\lambda}{\e^2} \langle \psi_{0},\psi \rangle - 2i\langle \psi_{0}, a\cdot\nabla_{A_0}\psi \rangle % - i\langle \psi_{0}, (\div a)\psi \rangle
- \e^2 \langle \psi_{0}, |a|^2\psi \rangle - \kappa\langle\psi_{0}, |\psi|^2\psi \rangle =0.
\end{equation*}}
substitute the expansions \eqref{epsexpansions} and take $\e =0$, to obtain
  \begin{align} \label{solvcond'}
		-\lambda_1 \int_{\Omega^\tau} | \psi_{0} |^2 + 2i\int_{\Omega^\tau} \bar{\psi}_{0} a_{1}\cdot\Cov{A_0^n}\psi_{0} %-i\int_{\Omega}  (\div a)|\psi^{(1)} |^2
+ \kappa^2\int_{\Omega^\tau} |\psi_{0}|^4=0.
	\end{align}
%where $\psi^{(1)}$ and $a^{(2)}$ are given in the expressions \eqref{psi1-expansion'} and \eqref{A2-expression} obtained above, and $ b_2:= \p_{\e^2}b|_{\e=0}$.
This expression implies that  the imaginary part of  the second term on the left hand side of \eqref{solvcond'} is zero. (We arrive at the same conclusion by integrating by parts and using that $\div a_{1}=0$.) %, that $\Re\int_{\Omega^\tau} \bar{\psi}_{0}  a_{1} \cdot\Cov{A_0^n}\psi_{0}  %+ \int_{\Omega}  (\div a)|\psi_{0} |^2
 %= 0$ (the same expression is obtained by taking the imaginary part of \eqref{solvcond'}),
   Therefore
    \begin{align*}
        2i \int_{\Omega^\tau} \bar{\psi}_{0}  a_{1} \cdot\Cov{A_0^n}\psi_{0}
                    &= -2 \int_{\Omega^\tau} a_{1} \cdot \Im( \bar{\psi}_{0} \Cov{A_0^n}\psi_{0}  )  = -2 \int_{\Omega^\tau} a_{1} \cdot \Curl^* \Curl a_{1}.  % \\   &
         \end{align*}
Integrating the last term by parts, we obtain $2i \int_{\Omega^\tau} \bar{\psi}_{0}  a_{1} \cdot\Cov{A_0^n}\psi_{0}= -2 \int_{\Omega^\tau} (\Curl a_{1} )^2  .$
 Using this equation and the second equation in \eqref{GL:leadingorder}, we obtain
 \begin{align} \label{intaimpsinablapsi}
		 2i \int_{\Omega^\tau} \bar{\psi}_{0}  a_{1} \cdot\Cov{A_0^n}\psi_{0}
		 = - \frac{1}{2} \int_{\Omega^\tau} |\psi_{0}|^4 \, + \frac{1}{2}\langle |\psi_{0} |^2 \rangle\int_{\Omega^\tau} |\psi_{0} |^2 .
%        \int_{\Omega^\tau} \bar{\psi}_0 (2ia_1\cdot\Cov{A^n_0}\psi_0) \,dx
                        %&= -2 \int_{\Omega^\tau} \left( -\frac{1}{2}|\psi_0|^2 + \frac{1}{4\pi}\int_{\Omega^\tau} |\psi_0|^2 \,dx \right)^2  \\
%            = -\frac{1}{2} \int_{\Omega^\tau} |\psi_0|^4 \,dx + \frac{1}{4\pi} \left( \int_{\Omega^\tau} |\psi_0|^2 \,dx \right)^2.
    \end{align}
This equation together with \eqref{solvcond'} and the definition \eqref{beta'} gives \eqref{lam1}.

Now, we prove statement \eqref{Elambda1} %{asymp:Elambda}
about the  Ginzburg-Landau energy.
   Multiplying \eqref{rGLpsi} scalarly by $\psi$ and integrating by parts gives
   \begin{equation*}
  		\int_{\Omega^\tau} |\nabla_a \psi|^2 = \kappa^2 \int_{\Omega^\tau} \left(\lambda|\psi|^2 - \kappa^2|\psi|^4\right).
	\end{equation*}
Substituting this into the expression for the energy, %and using that $|\Omega^\tau|=2\pi n$,
 we find
   \begin{equation}   \label{asymp:Elambda'}
		\mathcal{E}_{\lambda}(\psi, a)  = \frac{\kappa^4}{\lambda^2} \big( \frac{\lambda^2}{2\kappa^2} - \frac{\kappa^2}{2} \lan|\psi|^4\ran + \lan|\Curl a|^2\rangle\big).
            \end{equation}
%where, recall, $\langle f \rangle :=\frac{1}{|\Omega^\tau|}\int_{\Omega^\tau} f$.
    Using the expansions  \eqref{epsexpansions} and the facts that $\Curl A_0^n = n$ and $\langle\Curl a_1\rangle= 0$  gives
     \begin{equation}
    \label{asymp:Elambda''}
	    \mathcal{E}_{\lambda_{\e}}(\psi_{\e}, a_{\e})  = \frac{\kappa^2}{2} +\frac{n^2\kappa^4}{\lambda_\e^2} +  \frac{\kappa^4 \e^4}{\lambda_\e^2}   \left(  - \frac{\kappa^2}{2} \langle |\psi_0|^4 \rangle+ \langle|\Curl a_1|^2\rangle \right)
	    		+ O(\e^6).
            \end{equation}
Next, using the second equation in \eqref{GL:leadingorder} %{ABR:curlA1}, in the form     \begin{equation} \label{eqna1}    	\Curl a_1 = -\frac{1}{2}|\psi_0|^2 + \frac{1}{2} \langle |\psi_0|^2\rangle  \end{equation}
\DETAILS{  and  substituting it  into \eqref{asymp:Elambda''}, we obtain
    \begin{equation}     \label{asymp:Elambda'''}
    	    \mathcal{E}_{\lambda_{\e}}(\psi_{\e}, a_{\e})  = \frac{\kappa^2}{2} +\frac{n^2\kappa^4}{\lambda_\e^2} +  \frac{\kappa^4}{2\lambda_\e^2} \e^4  \left(   - (\kappa^2 - \frac{1}{2}) \langle |\psi_0|^4\rangle
	    			- \frac{1}{4}\langle |\psi_0|^2 \rangle ^2 \right) + O(\e^6).
        %E_b(\tau) = \frac{\kappa^2}{2} + \mu^2 \left( 1 - \kappa^2 \int_{\Omega^\tau} |\psi^\tau_0|^2 \right) + o(\mu^4).
    \end{equation}
 Finally, using
  and \eqref{relationaver} %and the definition $c:={1\over 2}\langle |\psi_{0} |^2 \rangle$
 gives %\eqref{asymp:Elambda}.
 \begin{equation}     \label{asymp:Elambda}
    	    \mathcal{E}_{\lambda_{\e}}(\psi_{\e}, a_{\e})  = \frac{\kappa^2}{2} +\frac{n^2\kappa^4}{\lambda_\e^2} -  \frac{\kappa^4\lambda_1}{2\lambda_\e^2} \e^4  \langle |\psi_0|^2 \rangle  + O(\e^6).
        %E_b(\tau) = \frac{\kappa^2}{2} + \mu^2 \left( 1 - \kappa^2 \int_{\Omega^\tau} |\psi^\tau_0|^2 \right) + o(\mu^4).
    \end{equation}
           %%%%%%%%%%%%%%%%%%%%%%%%%%%%%%%%% \end{proof}
    Eqn \eqref{relationaver}, together with  the definitions \eqref{beta'},
    %and  $c:={1\over 2}\langle |\psi_0 |^2 \rangle$ (see  \eqref{GL:leadingorder}),
    implies         %\begin{equation} \label{relationaver1}
$\lambda_1\langle |\psi_0 |^2 \rangle $ $ = \left( ( \kappa^2 -{1\over 2})\beta +{1\over 2}
\right) \langle |\psi_0 |^2 \rangle^2   \ .$
%\end{equation}
We solve this equation for $\langle |\psi_0 |^2 \rangle$ to obtain
    \begin{equation} \label{averpsi02}
 \langle |\psi_0 |^2 \rangle   = \frac{\lambda_1}{( \kappa^2- {1\over 2})\beta +{1\over 2}} .
\end{equation}
This equation}
together with  %\eqref{asymp:Elambda} and
the last equation in \eqref{epsexpansions} and the definition \eqref{beta'}
yields \eqref{Elambda1}.
\end{proof}
%Eqn \eqref{relationaver} was first stated in \cite{Abr} (see also \cite{Ch}).
Eqn \eqref{lam1} fixes the parameter $\e$ uniquely up to the normalization of $\psi_0$. Indeed, we observe that the third equation in \eqref{epsexpansions} implies   $\e^2 = \frac{\lambda- n}{\lambda_1}+O((\lambda- n)^2)$, which, together with the definition $\lambda = \frac{\kappa^2 n}{b}$ and  \eqref{lam1}, yields
\begin{equation}    \label{eps}
         \e^2= \frac{n(\kappa^2 - b)}{\kappa^2 [(\kappa^2 -  \frac{1}{2})\beta(\psi_0) +\frac{1}{2}]\lan |\psi_0|^2\ran} +O((\kappa^2 - b)^2).
    \end{equation}
This equation implies the following necessary condition on existence of the solutions: \begin{equation}  \label{nec-cond}
b\le \kappa^2$ if $(\kappa^2 -  \frac{1}{2})\beta(\psi_0) +\frac{1}{2}\ge 0\ \quad \textrm{and}\ \quad b>\kappa^2\ \textrm{if}\ (\kappa^2 -  \frac{1}{2})\beta(\psi_0) +\frac{1}{2}<0.
\end{equation}
\eqref{eps} together with \eqref{Elambda1} yields
    \begin{equation}     \label{Elam2}
        \mathcal{E}_{\lambda_{\e}}(\psi_{\e}, a_{\e}) = \frac{\kappa^2}{2} +%\frac{n^2\kappa^4}{\lambda_\e^2}
         b-    \frac{(\kappa^2 - b)^2}{ (2\kappa^2 -  1)\beta(\psi_0) +1}   +  O((\kappa^2 - b)^3).
           \end{equation}

%%%%%%%%%%%%%%%%%%%%%%%%%%%%%%%%%%%%%%%%%%%%%%%%%%%%%%%%%%%%%%%%%%%%%%%%%%%%%%%%%%%%%%%%%%%%%

\section{The linear problem} %operators $L^n$ and $M$}
\label{sec:operators}

%In this appendix we prove Theorems \ref{thm:M-spec} and \ref{thm:L-spec}. The proofs below are standard.  In fact we find the spectrum of  $L^n$ explicitly.

In this section we solve the linear problem:
$-\Delta_{A_{0}^n} \psi_{0} = n  \psi_{0},$ for $\psi$ satisfying the gauge - periodic boundary condition
$\psi(x + t) = e^{\frac{in}{2}x\cdot Jt}\psi(x)$,  for $t = r^\tau$, $r^\tau\tau$,
(see  \eqref{GL:leadingorder} and \eqref{reduced-gauge-form}.)
We introduce the harmonic oscillator annihilation and creation  operators, $\alpha^n$ and $(\alpha^n)^*$, with %$L^n_\pm$ to be
    \begin{equation}
        \alpha^n := (\nabla_{A_0^n})_1 + i(\nabla_{A_0^n})_2 =\partial_{x_1} + i\partial_{x_2} + \frac{n}{2}x_1 + \frac{in}{2}x_2.
    \end{equation}
    One can verify that these operators satisfy the following relations:
    \begin{enumerate}
    \item $[\alpha^n, (\alpha^n)^*] = 2\Curl A_0^n =2n$;
    \item $-\Delta_{A_{0}^n} - n = (\alpha^n)^*\alpha^n$.
    %\item $L^n_+=(L^n_-)^*$.
    %\item $L^n - n = -L^n_-L^n_+$.
    \end{enumerate}
    As for the harmonic oscillator (see for example \cite{GS2}), this gives explicit information about $\sigma(-\Delta_{A_{0}^n})$: %as stated in the theorem.
	\begin{equation}
	\sigma(-\Delta_{A_{0}^n}) = \{\, (2k + 1)n : k = 0, 1, 2, \ldots \,\},
	\end{equation}
	and each eigenvalue is of the same multiplicity.
    %For the dimension of the null space of $L$, we need the following lemma.
    Furthermore, the above properties imply
    \begin{equation} \label{nullL'}
        \Null (-\Delta_{A_{0}^n} - n) = \Null \alpha^n.
    \end{equation}

    We can now prove the following.
    \begin{proposition}\label{prop:nullspace} $ \Null (-\Delta_{A_{0}^n} - n)$ is given by %as on the r.h.s. of \eqref{nullL}
     \begin{equation} \label{nullL} \Null (-\Delta_{A_{0}^n} - n) =
     		\{\, e^{\frac{in}{2}x_2(x_1 + ix_2) }\sum_{k=-\infty}^{\infty} c_k e^{ik\sqrt{2\pi \Im \tau}(x_1 + ix_2)}\ |\ c_{k + n} = e^{in\pi\tau} e^{i2k\pi\tau} c_k \}
        \end{equation}
     and therefore, in particular,
        $\dim_\C \Null (-\Delta_{A_{0}^n} -n)= n$.
    \end{proposition}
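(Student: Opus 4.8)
The plan is to use \eqref{nullL'} to replace the second-order problem by the first-order equation $\alpha^n\psi = 0$, and to solve the latter explicitly by separating off a fixed nonvanishing solution. Writing $z = x_1 + ix_2$ and using $\partial_{x_1} + i\partial_{x_2} = 2\partial_{\bar z}$, one has $\alpha^n = 2\partial_{\bar z} + \frac{n}{2}z$. A direct computation shows that $g(x) := e^{\frac{in}{2}x_2(x_1 + ix_2)}$ satisfies $\partial_{\bar z}g = -\frac{n}{4}zg$ and hence $\alpha^n g = 0$; since $g$ is smooth and nowhere zero, any solution can be written as $\psi = g\theta$. Substituting and using $\alpha^n g = 0$ gives $\partial_{\bar z}\theta = 0$, so by Weyl's lemma $\theta$ is an entire holomorphic function. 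Thus $\Null\alpha^n$ consists exactly of products $g\theta$ with $\theta$ entire and compatible with the boundary conditions, and the whole problem becomes one about quasi-periodic holomorphic functions.

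Next I would translate the two gauge-periodic boundary conditions $\psi(x+t) = e^{\frac{in}{2}x\cdot Jt}\psi(x)$, for $t = r^\tau$ and $t = r^\tau\tau$, into conditions on $\theta$. Computing $g(z+t)/g(z)$ together with the phase $e^{\frac{in}{2}x\cdot Jt} = e^{\frac{n}{4}(z\bar t - \bar z t)}$ and dividing, the antiholomorphic factors cancel and one is left with $\theta(z+t)/\theta(z) = e^{\frac{n}{4}(2z\bar t - 2zt - t^2 + |t|^2)}$, a holomorphic multiplier as it must be. For the real period $t = r^\tau$ this multiplier is identically $1$, so $\theta$ is genuinely $r^\tau$-periodic; being entire, it then has a Fourier expansion $\theta(z) = \sum_k c_k e^{ik\sqrt{2\pi\Im\tau}\,z}$, where I use $2\pi/r^\tau = \sqrt{2\pi\Im\tau}$ coming from $r^\tau = (2\pi/\Im\tau)^{1/2}$. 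For $t = r^\tau\tau$ the same computation, together with $r^\tau\sqrt{2\pi\Im\tau} = 2\pi$, gives $\theta(z+r^\tau\tau) = e^{-in\sqrt{2\pi\Im\tau}\,z}e^{-in\pi\tau}\theta(z)$; inserting the Fourier series and matching the coefficient of $e^{ik\sqrt{2\pi\Im\tau}\,z}$ on both sides yields precisely the recurrence $c_{k+n} = e^{in\pi\tau}e^{i2k\pi\tau}c_k$, which establishes the displayed formula \eqref{nullL}.

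Finally, for the dimension count I observe that the recurrence propagates in steps of $n$, so it determines every $c_k$ from the $n$ values $c_0,\dots,c_{n-1}$; this gives $\dim_\C\Null(-\Delta_{A_0^n}-n)\le n$. To get equality I must check that each choice of these coefficients yields a convergent series, i.e. a genuine entire $\theta$. Writing $k = j + mn$ with $0\le j < n$, the recurrence gives $|c_{k+n}/c_k| = e^{-\pi\Im\tau(n+2k)}$, which exceeds $1$ for $k$ very negative and is below $1$ for $k$ large; equivalently $\log|c_{j+mn}|$ is concave quadratic in $m$ with leading coefficient $-\pi n\Im\tau < 0$, so the coefficients decay like a Gaussian in $|k|$ and the series converges absolutely and locally uniformly. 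Hence $(c_0,\dots,c_{n-1})\mapsto g\theta$ is a linear isomorphism onto the null space and $\dim_\C\Null(-\Delta_{A_0^n}-n) = n$.

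I expect the main obstacle to be the second step: carefully carrying the quasi-periodicity of the Gaussian prefactor $g$ through the translation by $r^\tau\tau$ so that the antiholomorphic pieces cancel and the surviving holomorphic multiplier produces exactly the stated recurrence, getting every sign and every factor of $\Im\tau$ right. The convergence argument in the last step is the only genuinely analytic point, but it is routine once the recurrence is in hand, since $\Im\tau > 0$ forces Gaussian decay of the coefficients.
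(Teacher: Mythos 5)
Your proof is correct and follows essentially the same route as the paper: both reduce to $\Null \alpha^n$ via \eqref{nullL'}, conjugate away a nonvanishing Gaussian factor to turn the problem into one about entire functions with two quasi-periodicity multipliers, and read off the recurrence $c_{k+n}=e^{in\pi\tau}e^{2ik\pi\tau}c_k$ from the Fourier expansion (the paper merely rescales the variable so that the periods become $\pi$ and $\pi\tau$ before doing so). The one point where you go beyond the paper is the explicit verification that the Gaussian decay $\log|c_{j+mn}|\sim -\pi n\,\Im\tau\, m^2$ forced by the recurrence makes every choice of $(c_0,\dots,c_{n-1})$ yield a genuinely entire $\theta$; the paper asserts $\dim_\C=n$ without checking this surjectivity, so your convergence step is a worthwhile addition.
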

    \begin{proof} We find $\Null \alpha^n$.
        A simple calculation gives the following operator equation
        \begin{equation*}
            e^{\frac{n}{4}|x|^2}\alpha^ne^{-\frac{n}{4}|x|^2} = \partial_{x_1} + i\partial_{x_2}.
        \end{equation*}
        This immediately proves that $\psi \in \Null \alpha^n$ if and only if $\xi = e^{\frac{n}{4}|x|^2}\psi$ satisfies $\partial_{x_1}\xi + i\partial_{x_2}\xi = 0$.
        We now identify $x \in \R^2$ with $z = x_1 + ix_2 \in \C$ and see that this means that $\xi$ is analytic. We therefore define the entire function
        $\Theta$ to be
        \begin{equation*}
            \Theta(z) = e^{ -\frac{n z^2}{2\pi\Im\tau} } \xi \left( \sqrt{\frac{2}{\pi\Im\tau}} z \right)= e^{ \frac{n (|z|^2-z^2)}{2\pi\Im\tau} } \psi \left( \sqrt{\frac{2}{\pi\Im\tau}} z \right).
        \end{equation*}
%where, recall, $r^\tau := \left( \frac{2\pi}{\Im\tau} \right)^{\frac{1}{2}}$.
The quasiperiodicity of $\psi$ transfers to $\Theta$ as follows
        \begin{subequations}
            \begin{equation*}
                \Theta(z + \pi) = \Theta(z),
            \end{equation*}
            \begin{equation*}
                \Theta(z + \pi\tau) =  e^{ -2inz } e^{ -in\pi\tau  } \Theta(z).
            \end{equation*}
        \end{subequations}

        To complete the proof, we now need to show that the space of the analytic functions which satisfy these relations form a vector space
        of dimension $n$. It is easy to verify that the first relation ensures that $\Theta$ have a absolutely convergent Fourier expansion of the form
        \begin{align*}
            \Theta(z) = \sum_{k=-\infty}^{\infty} c_k e^{2kiz}.
        \end{align*}
        The second relation, on the other hand, leads to relation for the coefficients of the expansion. Namely, we have
        %\begin{align*}
           $ c_{k + n} = e^{in\pi\tau} e^{2ki\pi\tau} c_k$ %  \end{align*}
        and that means such functions are determined solely by the values of $c_0,\ldots,c_{n-1}$ and therefore form an $n$-dimensional vector space.
    \end{proof}
   % This completes the proof of Theorem \ref{thm:L-spec}.
%\end{proof}

%%%%%%%%%%%%%%%%%%%%%%%%%%%%%%%%%%%%%%%%%%%%%%%%%%%%%%%%%%%%%%%%%%%%%%%%%%%%%%%%%%%%%%%%%%%%%
%%%%%%%%%%%%%%%%%%%%%%%%%%%%%%%%%%%%%%%%%%%%%%%%%%%%%%%%%%%%%%%%%%%%
\section{Reformulation of the problem} \label{sec:reformulation}

In this section we reduce two equations \eqref{eq:rGL} for $\psi$ and $a$ to a single equation for $\psi$. We introduce  the spaces  $\Lpsi{2}{n}:=L^2(\Omega^\tau, \C)$ and $\LA{p}{\Div,0}:=\{a\in L^2(\Omega^\tau, \R^2)\ |\ \lan a\ran = 0,\ \div a=0$, in the distributional sense$\}$, with the domains being  the Sobolev space of order two, $\Hpsi{2}{n}$, whose elements satisfy the quasiperiodic boundary condition (d) in part \eqref{reduced-gauge-form} above, and the Sobolev space of order two, $\HA{2}{\Div,0}$, whose elements satisfy the periodic boundary conditions with respect to $\mathcal{L}^\tau$, have mean zero, and are divergence free.
%We state our problem in terms of the fields $\psi$ and $a$.
Substituting $a = A^n_0 + \alpha$, we rewrite  \eqref{eq:rGL} as
\begin{subequations} \label{psiaeqs}
    \begin{equation}
    \label{psieq} %{eq:F-1}
        %F_1(\lambda, \psi, a) :=
        (L^n - \lambda)\psi + 2i\alpha\cdot\nabla_{A^n_0}\psi %- 2A^n_0\cdot P\psi
        + |\alpha|^2\psi + \kappa^2|\psi|^2\psi =0,
    \end{equation}
    \begin{equation}
    \label{aeq} %{eq:F-2}
        %F_2(\psi, a) :=
        (M + |\psi|^2)\alpha - \Im(\bar{\psi}\Cov{A^n_0}\psi )=0,
    \end{equation}
\end{subequations}
where
\begin{equation}\label{OP:LN}
    L^n := -\Delta_{A^n_0}
%\end{equation}
\mbox{ and }
%\begin{equation}
%\label{OP:M}
    M := \Curl^*\Curl,
\end{equation}
\DETAILS{Equations \eqref{eq:rGL} is then equivalent to
\begin{subequations}
\label{eq:F-i=0}
    \begin{equation}
    \label{eq:F-1=0}
        F_1(\lambda, \psi, a) = 0,
    \end{equation}
    \begin{equation}
    \label{eq:F-2=0}
        F_2(\psi, a) = 0.
    \end{equation}
\end{subequations}

We summarize the properties of these two maps in the following proposition, whose straightforward proof is omitted.
\begin{proposition}
    \hfill
    \begin{enumerate}[(a)]
    \item $F_1$ and $F_2$ are $C^\infty$,
    \item for all $\lambda$, $F_1(\lambda, 0, 0) = 0$ and $F_2(0,0) = 0$,
    \item for all $\alpha \in \R$, $F_1(\lambda, e^{i\alpha}\psi, a) = e^{i\alpha}F_1(\lambda, \psi, a)$ and
            $F_2(e^{i\alpha}\psi, a) = F_2(\psi, a)$.
    \end{enumerate}
\end{proposition}}
%
%The operators $L^n$ and $M$ are
defined on the spaces  $\Lpsi{2}{n}$ and $\LA{p}{\Div,0}$. Their properties that will be used below are summarized in the following propositions: %, whose proofs may be found in Appendix \ref{sec:operators}.

\begin{proposition}\label{thm:L-spec}
    $L^n$ is a self-adjoint operator on $\Hpsi{2}{n}$ with spectrum $\sigma(L^n) = \{\, (2k + 1)n : k = 0, 1, 2, \ldots \,\}$ and
    $\dim_\C \Null (L^n - n) = n$.
\end{proposition}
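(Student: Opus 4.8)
The plan is to assemble the proposition from the operator-algebraic facts already recorded in this section: the factorization $L^n - n = (\alpha^n)^*\alpha^n$ together with the commutation relation $[\alpha^n,(\alpha^n)^*]=2n$, the resulting ladder structure, and the null-space computation of Proposition \ref{prop:nullspace}. In outline, I would first establish self-adjointness and discreteness of the spectrum, then read off the eigenvalues by the harmonic-oscillator ladder argument, and finally deduce the common multiplicity from injectivity of the creation operator.

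First I would treat the analytic input. Writing $L^n = -\Delta_{A_0^n} = (\Cov{A_0^n})^*\Cov{A_0^n}$ exhibits $L^n$ as a nonnegative operator; the quadratic form $\psi \mapsto \|\Cov{A_0^n}\psi\|^2_{L^2(\Omega^\tau)}$ is closed on the order-one magnetic Sobolev space cut out by the boundary condition (d) of \eqref{reduced-gauge-form}, and the self-adjoint operator associated with this form has operator domain $\Hpsi{2}{n}$. Because $\Omega^\tau$ is a fundamental cell and the gauge-periodic boundary condition makes the problem effectively one on a compact manifold, the embedding of the form domain into $L^2(\Omega^\tau)$ is compact; hence $L^n$ has compact resolvent and therefore purely discrete spectrum consisting of eigenvalues of finite multiplicity.

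Next I would extract the spectrum. The factorization gives $L^n - n = (\alpha^n)^*\alpha^n \ge 0$, so $\sigma(L^n)\subset[n,\infty)$, with the value $n$ attained precisely because $\Null\alpha^n\neq\{0\}$ by Proposition \ref{prop:nullspace}. Since $\alpha^n$ and $(\alpha^n)^*$ are built from $\Cov{A_0^n}$, they preserve the boundary condition and map the domain into itself, and the commutation relation yields the intertwining identities $L^n\alpha^n = \alpha^n(L^n - 2n)$ and $L^n(\alpha^n)^* = (\alpha^n)^*(L^n + 2n)$. Thus if $L^n\psi = E\psi$ with $\psi\neq 0$, then $\alpha^n\psi$, when nonzero, is an eigenfunction with eigenvalue $E-2n$; as the spectrum is bounded below by $n$, repeated application of $\alpha^n$ must terminate in $\Null\alpha^n$, forcing $E=(2k+1)n$ for some integer $k\ge 0$. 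Conversely, applying $(\alpha^n)^*$ to an element of $\Null\alpha^n$ produces an eigenfunction for each such value, so $\sigma(L^n)=\{(2k+1)n:k=0,1,2,\ldots\}$.

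Finally, for the multiplicities I would note that $\alpha^n(\alpha^n)^* = L^n + n \ge 2n > 0$ on every eigenspace, so $(\alpha^n)^*$ is injective there; moreover, for $\phi$ in the eigenspace at $E+2n$ one has $(\alpha^n)^*\big(\tfrac{1}{E+n}\alpha^n\phi\big)=\phi$, so $(\alpha^n)^*$ maps the eigenspace at $E$ onto that at $E+2n$. Hence all eigenspaces share the dimension $\dim_\C\Null\alpha^n$. By \eqref{nullL'}, $\Null(L^n-n)=\Null\alpha^n$, which Proposition \ref{prop:nullspace} identifies as $n$-dimensional. The main obstacle is the analytic content of the first step—verifying self-adjointness and compactness of the resolvent under the gauge-periodic boundary condition, and confirming that $\alpha^n$ and $(\alpha^n)^*$ genuinely preserve $\Hpsi{2}{n}$; the remaining ladder algebra is routine.
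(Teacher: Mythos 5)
Your proposal is correct and follows essentially the same route as the paper: self-adjointness plus compactness of the Sobolev embedding for discreteness of the spectrum, the factorization $L^n - n = (\alpha^n)^*\alpha^n$ with the commutation relation $[\alpha^n,(\alpha^n)^*]=2n$ for the harmonic-oscillator ladder, and Proposition \ref{prop:nullspace} for the multiplicity. The only difference is that you spell out the intertwining and termination arguments that the paper compresses into a citation of the standard harmonic-oscillator analysis, and you correctly flag the domain-preservation point that both treatments leave implicit.
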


\begin{proposition}\label{thm:M-spec}
    $M$ is a strictly positive operator on $\HA{2}{\Div,0}$ with discrete spectrum.
\end{proposition}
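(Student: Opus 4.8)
The plan is to reduce $M=\Curl^*\Curl$ to the componentwise Laplacian on the divergence-free, mean-zero subspace, and then read off both claims from Fourier analysis on the torus $\R^2/\mathcal{L}^\tau$. First I would record the quadratic form. Since $\Curl^*$ is the adjoint of $\Curl$ (this is checked by one integration by parts, with the boundary terms vanishing by $\mathcal{L}^\tau$-periodicity), integrating by parts gives
\[
\langle M\alpha, \alpha\rangle = \langle \Curl^*\Curl\alpha, \alpha\rangle = \|\Curl\alpha\|_{L^2(\Omega^\tau)}^2 \ge 0
\]
for every $\alpha \in \HA{2}{\Div,0}$, so $M\ge 0$ immediately. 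The work is to upgrade this to strict positivity with a spectral gap and to establish discreteness. A convenient intermediate fact is the pointwise identity $\Curl^*\Curl\alpha = -\Delta\alpha$ (acting componentwise), valid precisely when $\Div\alpha = 0$: writing out each component of $\Curl^*\Curl\alpha$ and substituting $\partial_1\alpha_1 = -\partial_2\alpha_2$ collapses it to $-\Delta\alpha_i$. Thus on $\HA{2}{\Div,0}$ the operator $M$ coincides with the periodic Laplacian restricted to divergence-free, mean-zero fields, which also shows $M$ preserves this subspace.

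Next I would diagonalize by Fourier series. Expanding $\alpha = \sum_{k} \hat\alpha_k\, e^{ik\cdot x}$ over the dual lattice of $\mathcal{L}^\tau$, the mean-zero condition $\lan\alpha\ran = 0$ kills the $k=0$ mode ($\hat\alpha_0 = 0$), while $\Div\alpha = 0$ forces $k\cdot\hat\alpha_k = 0$, i.e.\ $\hat\alpha_k \perp k$ for each nonzero $k$. For such a mode the $k$-th Fourier coefficient of $\Curl\alpha$ has modulus squared $|k|^2|\hat\alpha_k|^2$, so by Parseval
\[
\langle M\alpha,\alpha\rangle = \|\Curl\alpha\|^2 = |\Omega^\tau|\sum_{k\ne 0}|k|^2\,|\hat\alpha_k|^2 \ \ge\ \mu\,\|\alpha\|^2, \qquad \mu := \min_{k\ne 0}|k|^2 > 0,
\]
the minimum being taken over the nonzero vectors of the (discrete) dual lattice, which are bounded away from the origin. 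This yields strict positivity with an explicit gap $\mu$. It simultaneously exhibits $M$ as diagonal with eigenvalues $\{|k|^2 : k\ne 0\}$, each of finite multiplicity since the dual lattice has only finitely many points on any circle $|k|^2=\text{const}$.

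Finally, these eigenvalues have no finite accumulation point, so the spectrum is discrete; equivalently, the lower bound together with the compact embedding $\HA{2}{\Div,0}\hookrightarrow \LA{2}{\Div,0}$ shows $M$ has compact resolvent. The only genuine obstacle is organizational rather than conceptual: one must confirm that the boundary terms in the integration by parts vanish under $\mathcal{L}^\tau$-periodicity (so that $\Curl^*$ is truly the adjoint on this space), and keep the divergence-free and mean-zero constraints consistently imposed throughout, so that both the reduction $M=-\Delta$ and the mode-by-mode computation are legitimate on $\HA{2}{\Div,0}$. Everything else is a routine Fourier computation.
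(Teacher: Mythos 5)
Your proof is correct, but it is considerably more explicit than the paper's, which disposes of the proposition in two sentences: positivity of $M=\Curl^*\Curl$ is read off from the quadratic form, strict positivity is attributed to the restriction to mean-zero fields, and discreteness is left to the same compact-embedding argument ($H^2\hookrightarrow L^2$ on the cell, hence compact resolvent) spelled out for $L^n$ in the adjacent proof. You instead diagonalize: the identity $\Curl^*\Curl\alpha=-\Delta\alpha$ on divergence-free fields (which checks out componentwise) plus Fourier expansion over the dual lattice gives the explicit eigenvalues $|k|^2$, $k\ne 0$, an explicit gap $\mu=\min_{k\ne0}|k|^2$, and discreteness for free. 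This buys two things the paper's remark does not: a quantitative lower bound, and --- more importantly --- a correct accounting of \emph{why} the kernel is trivial. The mean-zero condition alone does not suffice: gradients $\nabla\phi$ of $\mathcal{L}^\tau$-periodic functions are curl-free and automatically mean-zero, so without the constraint $\Div\alpha=0$ the operator would have an infinite-dimensional kernel. Your Fourier computation uses both constraints ($\hat\alpha_0=0$ and $k\cdot\hat\alpha_k=0$) in exactly the right way, whereas the paper's phrasing glosses over the role of the divergence-free condition. The one piece of housekeeping you correctly flag --- vanishing of boundary terms under periodicity so that $\Curl^*$ is genuinely the adjoint on this space --- is routine and poses no obstacle.
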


The proofs of these results are standard and, for the convenience of the reader, are given below.
\begin{proof}[Proof of Proposition \ref{thm:M-spec}]
    The fact that $M$ is positive follows immediately from its definition. We note that its being strictly positive is the result of
    restricting its domain to elements having mean zero.
\end{proof}

\begin{proof}[Proof of Proposition \ref{thm:L-spec}]
    First, we note that  $L^n$ is clearly a positive self-adjoint operator. To see that it has discrete spectrum, we first note that the inclusion
    $H^2 \hookrightarrow L^2$ is compact for bounded domains in $\R^2$ with Lipschitz boundary (which certainly includes lattice cells). Then for any
    $z$ in the resolvent set of $L^n$, $(L^n - z)^{-1} : L^2 \to H^2$ is bounded and therefore $(L^n - z)^{-1} : L^2 \to L^2$ is compact. In fact, the spectrum of  $L^n$ was found explicitly in the previous section. This completes the proof of Proposition \ref{thm:L-spec}.
    \end{proof}

%\subsection{Solving $F_2 = 0$}
We first solve the second equation \eqref{aeq} for $\alpha$ in terms of $\psi$, using the fact that $M$ is a strictly positive operator, and that
  $\div J_\alpha=0,\ \lan J_\alpha\ran =0$, where $ J_\alpha:=\Im\{ \bar{\psi}\Cov{A^n_0+\alpha}\psi \}$. The last two relations follow for any solution $(\psi, a)$ of \eqref{eq:rGL} by differentiating the equation $\cE_\lam(e^{s\chi}\psi, a+s\nabla\chi)=\cE_\lam(\psi, a) $, w.r.to $s$ at $s=0$, which gives
$ \p_\psi \cE_\lam(\psi, \alpha)i\chi\psi+\p_\alpha \cE_\lam(\psi, \alpha)\nabla\chi=0$. Here $\p_\psi \cE_\lam(\psi, \alpha)$ and $\p_\alpha \cE_\lam(\psi, \alpha)$ are the G\^ateaux derivatives of $\cE_\lam(\psi, a)$ w.r.to $\psi$ and $a$. Since $ \p_\psi \cE_\lam(\psi, a)=0$, this yields
$ 0=  \int_{\Omega^\tau} (M -J_\alpha)\cdot\nabla\chi=  \int_{\Omega^\tau} \div J_\alpha\chi.$
Since the last equation holds for any $\chi \in H_1(\Omega^\tau, \R)$, we conclude that $\div J_\alpha=0$.
Choosing $\chi  = h\cdot x,\ \forall h\in \R^2$, in the equation $ 0=  \int_{\Omega^\tau} (M -J_\alpha)\cdot\nabla\chi$, we find $\lan J_\alpha\ran =0$.
Now, \eqref{aeq} can be rewritten as a fixed point problem $a = M^{-1}J_\alpha$, which has a unique solution in $\HA{2}{\Div,0}$. The latter can be rewritten as %We have
$\alpha=\alpha(\psi)$, where
\begin{equation} \label{eq:apsi}
    \alpha(\psi) = (M + |\psi|^2)^{-1}\Im(\bar{\psi}\Cov{A^n_0}\psi).
\end{equation}
We collect the elementary properties of the map $\alpha$ in the following proposition, where we identify $\Hpsi{2}{n}$ with a real Banach space using $\psi \leftrightarrow \overrightarrow{\psi}:=(\Re \psi, \Im \psi)$.
\begin{proposition}
 The unique solution, $\alpha(\psi)$, of \eqref{aeq} maps $\Hpsi{2}{n}$ to $\HA{2}{\Div,0}$ and
 has the following properties:
    \begin{enumerate}[(a)]
    \item $\alpha(\cdot)$ is analytic as a map between real Banach spaces.
    \item $\alpha(0) = 0$.
    \item For any $\delta \in \R$, $\alpha(e^{i\delta}\psi) = \alpha(\psi)$.
    \end{enumerate}
\end{proposition}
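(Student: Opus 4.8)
Parts (b) and (c) are immediate from the defining formula \eqref{eq:apsi}. For (b), the current $\Im(\bar\psi\Cov{A^n_0}\psi)$ vanishes identically at $\psi=0$, so $\alpha(0)=M^{-1}0=0$. For (c), replacing $\psi$ by $e^{i\delta}\psi$ with $\delta\in\R$ constant leaves $|\psi|^2$ unchanged, hence also the operator $M+|\psi|^2$; and since $\Cov{A^n_0}$ is linear and commutes with multiplication by the constant $e^{i\delta}$, we have $\Im(\overline{e^{i\delta}\psi}\,\Cov{A^n_0}(e^{i\delta}\psi))=\Im(\bar\psi\Cov{A^n_0}\psi)$. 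Thus both factors in \eqref{eq:apsi} are unchanged and $\alpha(e^{i\delta}\psi)=\alpha(\psi)$.

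For (a) the plan is to exhibit $\alpha(\cdot)$ as a composition of real-analytic maps, identifying $\Hpsi{2}{n}$ with a real Banach space via $\psi\leftrightarrow\overrightarrow{\psi}=(\Re\psi,\Im\psi)$. Two basic building blocks are polynomial, hence real-analytic: the current $\psi\mapsto\Im(\bar\psi\Cov{A^n_0}\psi)$, a bounded real-quadratic (homogeneous degree-two) map into $L^2(\Omega^\tau;\R^2)$, and $\psi\mapsto|\psi|^2=\psi\bar\psi$, a bounded real-quadratic map into $H^2(\Omega^\tau)$. Boundedness of both rests on the two-dimensional Sobolev embedding $H^2(\Omega^\tau)\hookrightarrow L^\infty$, which makes $H^2$ a Banach algebra and yields the multiplication estimate $\| |\psi|^2 g \|_{L^2}\le C\|\psi\|_{H^2}^2\|g\|_{L^2}$ needed below.

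Next I would treat the inverse. Composing $\psi\mapsto|\psi|^2$ with the bounded linear assignment $f\mapsto(\text{multiplication by }f)$ shows that $\psi\mapsto M+P|\psi|^2$, regarded as an operator on $\HA{2}{\Div,0}$ with $P$ the Leray--Helmholtz projection onto mean-zero divergence-free fields, is analytic. By Proposition \ref{thm:M-spec} the operator $M$ is strictly positive, and since for $\beta\in\HA{2}{\Div,0}$ one has $\lan(M+P|\psi|^2)\beta,\beta\ran=\|\Curl\beta\|^2+\int_{\Omega^\tau}|\psi|^2|\beta|^2\ge c\|\beta\|^2$, this operator is boundedly invertible for every $\psi$, with $H^2$-valued inverse by elliptic regularity. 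Since $T\mapsto T^{-1}$ is analytic on the open set of invertible operators (locally a convergent Neumann series), $\psi\mapsto(M+P|\psi|^2)^{-1}$ is analytic into $\cB(\LA{2}{\Div,0},\HA{2}{\Div,0})$. Finally $\alpha(\psi)=(M+P|\psi|^2)^{-1}P\,\Im(\bar\psi\Cov{A^n_0}\psi)$ is the image of two analytic maps under the bounded bilinear evaluation $(T,v)\mapsto Tv$, hence analytic; at a genuine solution $J_\alpha$ is divergence-free, so $P$ may be dropped and one recovers \eqref{eq:apsi}.

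I expect the main obstacle to be the operator-theoretic bookkeeping around the divergence constraint: making precise in which spaces $M+|\psi|^2$ is inverted (so that the factor $|\psi|^2\alpha$, which is not divergence-free, is reconciled with the divergence-free target via $P$) and checking that the inverse genuinely lands in $\HA{2}{\Div,0}$ rather than merely $L^2$. A clean alternative that organizes this bookkeeping is the analytic implicit function theorem applied to $F(\psi,\alpha):=M\alpha-P\,\Im(\bar\psi\Cov{A^n_0+\alpha}\psi)$ from $\Hpsi{2}{n}\times\HA{2}{\Div,0}$ into $\LA{2}{\Div,0}$: one has $F(0,0)=0$, the partial derivative $\p_\alpha F(0,0)=M$ is an isomorphism by Proposition \ref{thm:M-spec}, and $\p_\alpha F(\psi,\alpha)=M+P|\psi|^2$ is invertible at every point by the same positivity estimate, so the locally unique analytic solutions furnished by the theorem patch together into the global analytic map $\alpha(\cdot)$.
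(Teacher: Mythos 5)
Your proof is correct and follows essentially the same route as the paper's: (b) and (c) are read off directly from \eqref{eq:apsi}, and (a) is obtained by exhibiting $\alpha$ as a composition of the polynomial (hence real-analytic) current map, the analytic operator-inversion map, and evaluation. The only substantive addition is your insertion of the Leray projection to reconcile the non-divergence-free term $|\psi|^2\alpha$ with the target space $\vec{\mathscr{H}}(\tau)$ --- a point the paper's one-line argument passes over silently --- and this is legitimate and welcome bookkeeping (your implicit-function-theorem alternative handles the same issue equally well).
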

\begin{proof}
    The only statement that does not follow immediately from the definition of $\alpha$ is (a). It is clear that $\Im(\bar{\psi}\Cov{A^n_0}\psi)$ is
    real-analytic as it is a polynomial in $\psi$ and $\nabla\psi$, and their complex conjugates.   We also note that $(M - z)^{-1}$ is
    complex-analytic in $z$ on the resolvent set of $M$, and therefore, $(M + |\psi|^2)^{-1}$ is analytic. (a) now follows.
\end{proof}

Now we substitute the expression \eqref{eq:apsi} for $\alpha$ into \eqref{psieq} to get a single equation
\begin{equation}\label{Feq0} F(\lambda, \psi) = 0,
\end{equation}
 where the map
$F : \R \times \Hpsi{2}{n} \to \Lpsi{2}{n}$ is defined as %$F(\lambda, \psi) := F_1(\lambda, \psi, a(\psi))$, or, explicitly,
\begin{equation}\label{F}
    F(\lambda, \psi) = (L^n - \lambda)\psi + 2i\alpha(\psi)\cdot\nabla_{A^n_0}\psi + |\alpha(\psi)|^2\psi + \kappa^2|\psi|^2\psi.
\end{equation}

For a map $F(\psi)$, we denote by $\p_\psi F(\phi)$ its G\^ateaux derivative in $\psi$ at $\phi$. The following proposition lists some properties of $F$.
\begin{proposition}\label{prop:F-prop}
    \hfill
    \begin{enumerate}[(a)]
    \item $F$ is analytic as a map between real Banach spaces,
    \item for all $\lambda$, $F(\lambda, 0) = 0$,
    \item for all $\lambda$, $\p_\psi F(\lambda, 0)=L^n - \lambda, $
    \item for all $\delta \in \R$, $F(\lambda, e^{i\delta}\psi) = e^{i\delta}F(\lambda, \psi)$.
	\item \label{psiF-real} for all $\psi$, $\langle \psi, F(\lambda, \psi) \rangle \in \R$.
    \end{enumerate}
\end{proposition}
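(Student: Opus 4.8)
The plan is to verify the five assertions directly from the explicit formula \eqref{F} for $F$, using the three properties of the map $\alpha(\psi)$ established in the preceding proposition. Items (b), (c), (d) are short consequences of that structure, (a) is the main technical point, and (e) is a computation resting on integration by parts.

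I would dispose of (b) and (d) first. For (b), setting $\psi=0$ in \eqref{F} and using $\alpha(0)=0$ makes every term vanish. For (d), I check \eqref{F} term by term: the operator $L^n-\lambda$ is complex-linear, so it commutes with multiplication by the constant phase $e^{i\delta}$; the gauge covariance $\alpha(e^{i\delta}\psi)=\alpha(\psi)$ together with $\Cov{A^n_0}(e^{i\delta}\psi)=e^{i\delta}\Cov{A^n_0}\psi$ handles the term $2i\alpha(\psi)\cdot\Cov{A^n_0}\psi$; and the two remaining terms each carry a single factor of $e^{i\delta}$, since $|\alpha(e^{i\delta}\psi)|^2=|\alpha(\psi)|^2$ and $|e^{i\delta}\psi|^2=|\psi|^2$ while the trailing factor $\psi$ becomes $e^{i\delta}\psi$. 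Pulling $e^{i\delta}$ out of each term gives the claim.

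For (c) I observe that $\Im(\bar\psi\Cov{A^n_0}\psi)$ is quadratic in $(\psi,\bar\psi)$ and $(M+|\psi|^2)^{-1}$ is bounded near $\psi=0$, so $\alpha(\psi)=O(\|\psi\|^2)$; consequently the terms $2i\alpha(\psi)\cdot\Cov{A^n_0}\psi$, $|\alpha(\psi)|^2\psi$ and $\kappa^2|\psi|^2\psi$ all vanish to order at least three at $\psi=0$ and contribute nothing to the G\^ateaux derivative there, leaving $\p_\psi F(\lambda,0)=L^n-\lambda$. For (a), the linear term is bounded, and each nonlinear term is a composition or product of the analytic map $\alpha$ with bounded multilinear maps; the boundedness of the pointwise products into $L^2(\Omega^\tau)$ follows from the two-dimensional Sobolev embeddings $H^2(\Omega^\tau)\hookrightarrow L^\infty$ and $H^1(\Omega^\tau)\hookrightarrow L^4$. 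Since finite products and compositions of analytic maps between real Banach spaces are analytic, (a) follows; this bookkeeping with the embeddings is where the real work lies.

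Finally, for (e) I write $\lan\psi,F(\lambda,\psi)\ran$ as the sum of its four terms. Self-adjointness of $L^n$ and $\lambda\in\R$ make $\lan\psi,(L^n-\lambda)\psi\ran$ real, while $\lan\psi,|\alpha(\psi)|^2\psi\ran=\int_{\Omega^\tau}|\alpha|^2|\psi|^2$ and $\kappa^2\lan\psi,|\psi|^2\psi\ran=\kappa^2\int_{\Omega^\tau}|\psi|^4$ are manifestly real. For the remaining term I use $\Re(\bar\psi\Cov{A^n_0}\psi)=\tfrac12\nabla|\psi|^2$, which gives $\int_{\Omega^\tau}\alpha\cdot\bar\psi\Cov{A^n_0}\psi=\tfrac12\int_{\Omega^\tau}\alpha\cdot\nabla|\psi|^2+i\int_{\Omega^\tau}\alpha\cdot\Im(\bar\psi\Cov{A^n_0}\psi)$; integrating the first integral by parts over the cell (periodic boundary, no boundary contribution) and invoking $\div\alpha=0$ kills it, leaving a purely imaginary quantity, so that $2i\lan\psi,\alpha\cdot\Cov{A^n_0}\psi\ran$ is real. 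Summing the four real contributions yields $\lan\psi,F(\lambda,\psi)\ran\in\R$.
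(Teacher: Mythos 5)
Your proposal is correct and follows essentially the same route as the paper: (a)--(d) from the explicit formula for $F$ and the properties of $\alpha(\psi)$, and (e) by reducing the reality of the cross term to an integration by parts over the cell using periodicity and $\Div\alpha(\psi)=0$. Your use of the identity $\Re(\bar\psi\Cov{A_0^n}\psi)=\tfrac12\nabla|\psi|^2$ is only a cosmetic repackaging of the paper's computation, which instead splits $\Cov{A_0^n}=\nabla-iA_0^n$ and conjugates the $\nabla$ piece.
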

\begin{proof}
    The first property follows from the definition of $F$ and the corresponding analyticity of $a(\psi)$. (b) through (d) are straightforward calculations. For (e), we calculate that
   	\begin{align*}
		\langle \psi, F(\lambda, \psi) \rangle
		&= \langle \psi, (L^n - \lambda)\psi \rangle + 2i\int_{\Omega^\tau} \bar{\psi}\alpha(\psi)\cdot\nabla\psi\\
			&+ 2\int_{\Omega^\tau} (\alpha(\psi)\cdot A_0^n)|\psi|^2
			+ \int_{\Omega^\tau} |\alpha(\psi)|^2 |\psi|^2
			+ \kappa^2 \int_{\Omega^\tau} |\psi|^4.
	\end{align*}
	The final three terms are clearly real and so is the first because $L^n - \lambda$ is self-adjoint. For the second term we calculate the complex conjugate and see that
	\begin{equation*}
		\overline{ 2i\int_{\Omega^\tau} \bar{\psi}\alpha(\psi)\cdot\nabla\psi }
			= -2i\int_{\Omega^\tau} \psi \alpha(\psi)\cdot\nabla\bar{\psi}
			= 2i\int_{\Omega^\tau} (\nabla\psi \cdot \alpha(\psi))\bar{\psi},
	\end{equation*}
	where we have integrated by parts and used the fact that the boundary terms vanish due to the periodicity of the integrand and that $\Div \alpha(\psi) = 0$. Thus this term is also real and (e) is established.
\end{proof}

%%%%%%%%%%%%%%%%%%%%%%%%%%%%%%%%%%%%%%%%%%%%%%%%%%%%%%%%%%%%%%%%%%%%%%%%%%%%%%%%%%%%%%%%%%%%%
%%%%%%%%%%%%%%%%%%%%%%%%%%%%%%%%%%%%%%%%%%%%%%%%%%%%%%%%%%%%%%%%%%%%
\section{Reduction to a finite-dimensional problem} \label{sec:reduction}

In this section we reduce the problem of solving  the equation $F(\lambda,\psi) = 0$ to a finite dimensional problem. We address the latter in the next section.
%   Now we reduce the equation $F(\lambda,\psi) = 0$ to an equation on the finite-dimensional subspace $\Null (L^n - n)$. To this end
We use the standard method of Lyapunov-Schmidt reduction. Let %$\lambda_0 :=N$,
   $X:=\Hpsi{2}{n}$ and $Y:= \Lpsi{2}{n}$ and let $K = \Null (L^n - n)$. We let $P$ be the Riesz projection onto $K$, that is,
    \begin{equation}
        P := -\frac{1}{2\pi i} \oint_\gamma (L^n - z)^{-1} \,dz,
    \end{equation}
    where $\gamma \subseteq \C$ is a contour around $n$ that contains no other points of the spectrum of $L^n$.
    This is possible since $n$ is an isolated eigenvalue of $L^n$. $P$ is a bounded, orthogonal projection, and if we
    let $Z := \Null P$, then $Y = K \oplus Z$. We also let $Q := I - P$, and so $Q$ is a projection onto $Z$.

    The equation $F(\lambda,\psi) = 0$ is therefore equivalent to the pair of equations
    \begin{align}
        \label{BT:eqn1} &P F(\lambda, P\psi + Q\psi) = 0, \\
        \label{BT:eqn2} &Q F(\lambda, P\psi + Q\psi) = 0.
    \end{align}

    We will now solve \eqref{BT:eqn2} for $w = Q\psi$ in terms of $\lambda$ and $v = P\psi$. To do this, we introduce the map
    $G : \R \times K \times Z \to Z$ to be $G(\lambda, v, w) := QF(\lambda, v + w)$.    Applying the Implicit Function Theorem
    to $G$, we obtain a function $w : \R \times K \to Z$, defined on a neighbourhood of $(n, 0)$,
    such that $w = w(\lambda, v)$ is a unique solution to $G(\lambda, v, w) = 0$, for $(\lambda, v)$ in that neighbourhood. This solution has the following properties
    %\begin{enumerate}[(i)]     \item  $
  \begin{equation}\label{w-prop1}
  w(\lambda, v)  \ \mbox{real-analytic in}\    (\lambda, v);  \end{equation}
     \begin{equation} \label{w-prop2}
      w(\lambda, v)=O(|v|^2)\ \mbox{and}\      \p_\lam w(\lambda, v)=O(|v|^2).\end{equation}
    %\end{enumerate}
The last property follows from the fact that the last three terms in \eqref{F} are at least quadratic in $\psi =v+w$.

    We substitute the solution $w = w(\lambda, v)$ into \eqref{BT:eqn1} and see that the latter equation %we are looking for $(\lambda, \psi)$
     in a neighbourhood of $(n, 0)$ is equivalent to the equation  (the \emph{bifurcation equation})
           \begin{equation}  \label{bif-eqn}
        \gamma(\lambda, v):= PF(\lambda, v + w(\lambda, v)) = 0.
    \end{equation}
    Note that %We therefore define   the function
     $\gamma : \R \times K \to \C$. %is defined by
       We have shown that in a neighbourhood of $(n, 0)$ in $\R \times X$, $(\lambda, \psi)$ solves $F(\lambda, \psi) = 0$
    if and only if $(\lambda, v)$, with $v = P\psi$, solves
    \eqref{bif-eqn}. Moreover, the solution $\psi$ of $F(\lambda, \psi) = 0$ can be reconstructed from the solution $v$ of \eqref{bif-eqn} according to the formula
     \begin{equation} \label{psivw}
     \psi =v+ w(\lambda, v).
    \end{equation} %$PF(\lambda, P\psi + w(\lambda, P\psi)) = 0$.
 Finally we note that $w$ and $\gamma$ inherit the symmetry of the original equation:
    \begin{lemma}\label{gam-gaugeinv}
        For every $\delta \in \R$, $w(\lambda, e^{i\delta}v) = e^{i\delta} w(\lambda, v)$ and $\gamma(\lambda, e^{i\delta}v) = e^{i\delta} \gamma(\lambda, v)$.
    \end{lemma}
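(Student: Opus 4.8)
The plan is to exploit the gauge equivariance of $F$ recorded in Proposition~\ref{prop:F-prop}(d), namely $F(\lambda, e^{i\delta}\psi) = e^{i\delta}F(\lambda,\psi)$, together with the uniqueness built into the Lyapunov--Schmidt construction of $w$. The one structural fact I need first is that the constant phase multiplication $\psi \mapsto e^{i\delta}\psi$ commutes with the spectral projections $P$ and $Q$. This holds because $L^n = -\Delta_{A^n_0}$ is a linear operator and $e^{i\delta}$ is a scalar, so $e^{i\delta}$ commutes with $L^n$, hence with the resolvent $(L^n - z)^{-1}$ for each $z$ off the spectrum, and hence — integrating over the contour around $n$ — with the Riesz projection $P$, and consequently with $Q = I - P$. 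It follows that the subspaces $K = \Ran P$ and $Z = \Null P$ are each invariant under $v \mapsto e^{i\delta}v$.

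Next I would prove the equivariance of $w$ by verifying that $e^{i\delta}w(\lambda, v)$ is itself the unique Implicit-Function-Theorem solution of the reduced equation $QF(\lambda, e^{i\delta}v + \,\cdot\,) = 0$. Using the equivariance of $F$ and the commutation of $Q$ with $e^{i\delta}$,
\[
QF\big(\lambda,\, e^{i\delta}v + e^{i\delta}w(\lambda,v)\big) = QF\big(\lambda,\, e^{i\delta}(v + w(\lambda,v))\big) = e^{i\delta}\,QF\big(\lambda,\, v + w(\lambda,v)\big) = 0 .
\]
Since $Z$ is gauge-invariant, $e^{i\delta}w(\lambda,v) \in Z$, and since $\|e^{i\delta}w(\lambda,v)\| = \|w(\lambda,v)\|$, this element lies in the same neighbourhood on which the Implicit Function Theorem guarantees a unique solution of $G(\lambda, e^{i\delta}v, \,\cdot\,) = 0$. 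By that uniqueness, $w(\lambda, e^{i\delta}v) = e^{i\delta}w(\lambda,v)$.

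Finally, the equivariance of $\gamma$ follows by the same bookkeeping: substituting the relation just established into $\gamma(\lambda,v) = PF(\lambda, v + w(\lambda,v))$ and again invoking the equivariance of $F$ and the commutation of $P$ with $e^{i\delta}$ yields $\gamma(\lambda, e^{i\delta}v) = PF(\lambda, e^{i\delta}(v + w(\lambda,v))) = e^{i\delta}\gamma(\lambda, v)$. The only point requiring any care — and the main, though modest, obstacle — is the commutation of the spectral projection $P$ with the gauge action; once that is in place the rest is purely formal, and it reduces to the observation that a scalar commutes with a linear operator and therefore with its resolvent and with the contour integral defining $P$.
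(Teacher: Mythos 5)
Your proof is correct and follows essentially the same route as the paper: apply the gauge equivariance of $F$ to show that $e^{i\delta}w(\lambda,v)$ solves $G(\lambda, e^{i\delta}v, \cdot) = 0$, invoke the uniqueness from the Implicit Function Theorem to identify it with $w(\lambda, e^{i\delta}v)$, and then push the same equivariance through $P$ to get the statement for $\gamma$. The only difference is that you spell out the commutation of the scalar $e^{i\delta}$ with the Riesz projections (which, since $P$ and $Q$ are complex-linear, is immediate), a point the paper leaves implicit.
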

    \begin{proof}
        We first check that $w(\lambda, e^{i\delta}v) = e^{i\delta} w(\lambda, v)$. We note that by definition of $w$,
         \begin{align*}G(\lambda,  e^{i\delta} v, w(\lambda, e^{i\delta} v)) = 0, \end{align*} but by the symmetry of $F$, we also have
        $G(\lambda,  e^{i\delta} v,  e^{i\delta} w(\lambda, v)) =  e^{i\delta} G(\lambda,v, w(\lambda,v)) = 0$. The uniqueness of $w$
        then implies that $w(\lambda, e^{i\delta} v) = e^{i\delta} w(\lambda, v)$.
        We can now verify that
        \begin{align*}
            &\gamma(\lambda, e^{i\delta} v) = PF(\lambda, e^{i\delta} v + w(\lambda,  e^{i\delta} v))\\
                &= e^{i\delta} PF(\lambda, v + w(\lambda, v) ) \rangle = e^{i\delta}\gamma(\lambda,v).
        \qedhere
        \end{align*}
    \end{proof}

Solving the bifurcation equation \eqref{bif-eqn} is a subtle problem, unless $n=1$. In the latter case, this is done in the next section.

We conclude this section with mentioning an approach to finding solutions to  the bifurcation equation \eqref{bif-eqn} for any  $n$. For a fixed $n$, we define the first reduced energy $E_{\lambda}(\psi) := \mathcal{E}_{\lambda}(\psi, a)$, where $a = A^n_0 + \alpha,$ with $A^n_0(x) := \frac{n}{2} J x$ and
$\alpha(\psi) = (M + |\psi|^2)^{-1}\Im(\bar{\psi}\Cov{A^n_0}\psi)$ (see \eqref{reduced-gauge-form} %{hatA}
and \eqref{eq:apsi}). Critical points of this energy solve the equation $F(\lambda, \psi) = 0$.

    Next, we introduce the finite dimensional effective Ginzburg-Landau energy
\begin{align*} e_{\lambda}(v) := E_{\lambda}(v+ w(\lambda, v)).\end{align*}
It is a straightforward to show that %(\cite{ST})

    (i) $e_{\lambda}(v)$ has a critical point $v_0$ iff  $E_{\lambda}(u)$ has a critical point $u_0=v_0+ w(\lambda, v_0)$;

    (ii) Critical points, $v_0$, of  $e_{\lambda}(v)$ solve the equation \eqref{bif-eqn};

       (iii) $e_{\lambda}(v)$ is gauge invariant, $e_{\lambda}(e^{i\delta}v) = e_{\lambda}(v).$

\noindent One can use $e_{\lambda}(v)$ to investigate %further find the leading behaviour of $e_{\lambda}(v)$ in $v$ and use this to find 
solutions of the equation \eqref{bif-eqn} for any  $n$. %This is done in \cite{ST}.

%%%%%%%%%%%%%%%%%%%%%%%%%%%%%%%%%%%%%%
%%%%%%%%%%%%%%%%%%%%%%%%%%%%%%%%%%%%%%%%%%%%%%%%%%%%%%%%%%%
%%%%%%%%%%%%%%%%%%%%%%%%%%%%%%%%%%%%%%%%%%%%%%%%%%%%%%%%%
\section{Proof of Theorem \ref{thm:main-result}}%Bifurcation theorem for $n=1$}
 \label{sec:bifurcation-n=1}

In this section we look at the case $n = 1$, and look for solutions near the trivial solution. For convenience we \textit{drop the (super)index} $n = 1$ from the notation.
Recall that $\psi_0$ is a non-zero element in the nullspace of the operator $L^n - n$ acting on $\Hpsi{2}{n}$. Since by Proposition \ref{prop:nullspace},
this nullspace is a one-dimensional complex
subspace for $n=1$, the Abrikosov function, $\beta(\psi_0)$, defined in \eqref{beta'}, depends only on $\tau$.
Therefore we write $\beta(\tau)\equiv \beta(\psi_0)$, so that
\begin{equation} \label{beta}
    \beta(\tau) := \frac{ \lan |\psi_0|^4\ran }{ \lan |\psi_0|^2 \ran^2 }.
\end{equation}
%We will see that as $b  = \frac{\kappa^2 }{\lambda}$ decreases past the critical value $b = \kappa^2$, a branch of non-trivial solutions bifurcates from the trivial solution. More precisely,
We begin with the following result which gives the existence and uniqueness of the Abrikosov lattices.

\begin{theorem} \label{thm:bifurcation-resultN1}
    %There exists an $\epsilon > 0$ such that
    For every $\tau$ %and non-zero $\psi_0 \in \Hpsi{2}{N}$ satisfying $(L^N - 1)\psi_0 = 0$,
    there exist $\epsilon > 0$ and a branch, $ (\lambda_s, \psi_s, \alpha_s)$, $s \in [0, \sqrt{\e})$, %\C$, $|s|^2 < \epsilon$, %for some,
    of nontrivial solutions of  the rescaled Ginzburg-Landau equations \eqref{eq:rGL}, unique modulo the global gauge symmetry  (apart from the trivial solution $(1,0,A_0)$) in a sufficiently small neighbourhood of $(1,0,A_0)$ in $\R \times \mathscr{H}(\tau) \times \vec{\mathscr{H}}(\tau)$, and such that
    \begin{equation}\label{sexpansions}
    \begin{cases}
        \lambda_s = 1 + g_\lambda(s^2), \\
        \psi_s = s\psi_0 + sg_\psi(s^2), \\
        a_s = A_0 + g_a(s^2),
    \end{cases}
    \end{equation}
    where $(L - 1)\psi_0 = 0,$ %\textbf{(normalization of $\psi_0$??)}
    $ g_\psi$ is orthogonal to $\Null(L - 1)$,  $g_\lambda : [0,\epsilon) \to \R$, $g_\psi : [0,\epsilon) \to \mathscr{H}(\tau)$, and $g_\al : [0,\epsilon) \to \vec{\mathscr{H}}(\tau)$ are real-analytic functions such that $g_\lambda(0) = 0$, $g_\psi(0) = 0$, $g_\al(0) = 0$ and  %$g'_\lambda(0) > 0$. \textbf{($g'_\lambda(0) > 0$ is incorrect??)}
        \begin{align} \label{glambda'}
g'_\lambda(0) = \left[\left(\kappa^2 - \frac{1}{2} \right) %\frac{\lan |\psi_0|^4\ran}{\lan |\psi_0|^2\ran } + %\frac{1}{4\pi} \int_{\Omega^\tau}
\beta (\tau)+\frac{1}{2}\right]\lan |\psi_0|^2\ran.
	\end{align}
%\textbf{where, recall, the Abrikosov parameter $\beta (\psi_0)$ is defined in} \eqref{beta'}.    %, as well as a neighbourhood of $(1,0,0)$ in $\R \times \Hpsi{2}{N} \times \HA{2}{\Div,0}$. %such that $u = (\lambda, \psi, A)$ in that neighbourhood solves if and only if either $u$ is a trivial solution, i.e. $\psi = 0$ and $A = A^N_0$, or there is some $s \in \C$ with $|s|^2 < \epsilon$ such that $u = (\lambda_s, \psi_s, A_s)$, where
    \end{theorem}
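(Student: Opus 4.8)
The plan is to run the Lyapunov--Schmidt machinery set up in Section~\ref{sec:reduction} for $n=1$ and to exploit the gauge symmetry to collapse the bifurcation equation to a single real scalar equation, to which the Implicit Function Theorem applies. By Proposition~\ref{prop:nullspace} the kernel $K=\Null(L-1)$ is one-dimensional over $\C$, say $K=\C\psi_0$. The reduction \eqref{bif-eqn}--\eqref{psivw} produces a real-analytic $w(\lambda,v)$ with $w=O(|v|^2)$ (see \eqref{w-prop1}--\eqref{w-prop2}) and reduces $F(\lambda,\psi)=0$ near $(1,0)$ to $\gamma(\lambda,v)=PF(\lambda,v+w(\lambda,v))=0$ with $v\in K$. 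Writing a general $v$ as $c\psi_0$, $c\in\C$, and using the global gauge symmetry (Lemma~\ref{gam-gaugeinv}) I may rotate the phase so that $c=s\ge 0$ is real; this is the step that removes the degeneracy coming from the circle of gauge copies of each solution, so that what remains is a genuinely one-parameter bifurcation.

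Next I would show the reduced equation is scalar and real. Since $\gamma$ takes values in $K=\C\psi_0$, write $\gamma(\lambda,s\psi_0)=\mu(\lambda,s)\psi_0$, where $\mu$ is real-analytic in $(\lambda,s)$. Because $w(\lambda,s\psi_0)\in Z=K^{\perp}$ and $F(\lambda,\psi)=\gamma(\lambda,v)\in K$ once the $Q$-equation is solved, property~(e) of Proposition~\ref{prop:F-prop} gives $s\,\mu(\lambda,s)\,\|\psi_0\|^2=\lan\psi,F(\lambda,\psi)\ran\in\R$ (with $\|\psi_0\|^2=\int_{\Omega^\tau}|\psi_0|^2$), so $\mu$ is real-valued. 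Taking $\delta=\pi$ in Lemma~\ref{gam-gaugeinv} shows $w$ is odd in $s$, whence $\mu(\lambda,-s)=-\mu(\lambda,s)$, and therefore $\mu(\lambda,s)=s\,h(\lambda,s^2)$ with $h$ real-analytic. The bifurcation equation for nontrivial solutions ($s>0$) is thus the single scalar equation $h(\lambda,s^2)=0$.

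I would then apply the Implicit Function Theorem to $h$. A short expansion using $\p_\psi F(\lambda,0)=L-\lambda$ and $(L-1)\psi_0=0$ gives $h(\lambda,0)=1-\lambda$, so $h(1,0)=0$ and $\p_\lambda h(1,0)=-1\ne 0$; hence there is a real-analytic $\lambda=\lambda(t)$ with $\lambda(0)=1$ solving $h(\lambda,t)=0$ on a neighbourhood $t\in[0,\e)$ of $0$. Setting $t=s^2$ yields $\lambda_s=1+g_\lambda(s^2)$, and then $\psi_s=s\psi_0+w(\lambda_s,s\psi_0)$ and $a_s=A_0+\alpha(\psi_s)$ via \eqref{psivw} and \eqref{eq:apsi}. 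Since $w=O(s^2)$ and is odd in $s$, it is $O(s^3)$, so $g_\psi(s^2):=w(\lambda_s,s\psi_0)/s$ is real-analytic with $g_\psi(0)=0$ and is orthogonal to $K$; similarly $\alpha(\psi_s)$ is even and $O(s^2)$, giving $g_a(s^2):=\alpha(\psi_s)$ real-analytic with $g_a(0)=0$. Uniqueness modulo gauge follows from the uniqueness of $w$ in the reduction and of $\lambda(t)$ in the Implicit Function Theorem.

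Finally, the constant $g'_\lambda(0)$ comes from the cubic term of $\mu$. Expanding $F$ (see \eqref{F}) with $\psi=s\psi_0+O(s^3)$ and $\alpha(\psi)=s^2 a_1+O(s^4)$, where $M a_1=\Im(\bar\psi_0\Cov{A_0}\psi_0)$ is exactly the field appearing in Proposition~\ref{prop:leadingorder}, I obtain $h(\lambda,s^2)=(1-\lambda)+s^2 C+O(s^4)$ with $\|\psi_0\|^2\,C=\kappa^2\int_{\Omega^\tau}|\psi_0|^4+2i\int_{\Omega^\tau}\bar\psi_0\,a_1\cdot\Cov{A_0}\psi_0$. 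Then $g'_\lambda(0)=\lambda'(0)=-\p_t h(1,0)/\p_\lambda h(1,0)=C$, and substituting \eqref{intaimpsinablapsi} together with the definition \eqref{beta} of $\beta(\tau)$ turns $C$ into $\big[(\kappa^2-\tfrac12)\beta(\tau)+\tfrac12\big]\lan|\psi_0|^2\ran$, which is \eqref{glambda'}. The one genuinely delicate point is the reduction step: verifying that reality (Proposition~\ref{prop:F-prop}(e)) together with gauge equivariance really does collapse the a priori two-real-dimensional equation $\gamma=0$ to the single odd scalar equation $\mu=0$, since it is this collapse that makes the Implicit Function Theorem applicable and yields both the branch and its uniqueness.
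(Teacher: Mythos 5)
Your proposal is correct and follows essentially the same route as the paper: Lyapunov--Schmidt reduction onto the one-dimensional kernel, use of the gauge equivariance (Lemma \ref{gam-gaugeinv}) and the reality property (Proposition \ref{prop:F-prop}(e)) to collapse the bifurcation equation to a single odd real scalar equation, factoring out $s$, and applying the Implicit Function Theorem to the quotient. The only cosmetic difference is that you compute $g_\lambda'(0)$ by expanding $F$ directly, whereas the paper obtains it by matching \eqref{sexpansions} with \eqref{epsexpansions} and invoking \eqref{lam1} of Proposition \ref{prop:leadingorder}; the underlying computation (via \eqref{solvcond'} and \eqref{intaimpsinablapsi}) is identical.
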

\begin{proof} The proof of this theorem is a slight modification of a standard result from bifurcation theory. Our goal is to solve the equation \eqref{bif-eqn} for $\lam$. Since the projection $P$, defined there, is rank one and self-adjoint, we have
%, for any non-zero $v_0 \in K$, we can find $v_0^\star \in \Null L^*$, such that $\langle v_0^\star, v_0 \rangle = 1$ and
    \begin{equation}    \label{BT:P}
        P\psi = \frac{1}{\|\psi_0\|^2} \langle \psi_0, \psi \rangle \psi_0,\ \textrm{with}\ \psi_0 \in \Null \p_\psi F(\lambda_0, 0).
    \end{equation}
	We can therefore view the function $\gamma$ in the bifurcation equation \eqref{bif-eqn} as a map $\gamma : \R \times \C \to \C$, where
	\begin{equation} \label{gam}
		\gamma(\lambda, s) = \langle \psi_0, F(\lambda, s\psi_0 + w(\lambda, s\psi_0) \rangle.
	\end{equation}
We now show that $\gamma(\lambda, s) \in \R$.
Since the projection $Q$ is self-adjoint,  $Qw(\lambda, v) = w(\lambda, v), $ $w(\lambda, v)$ solves  $ QF(\lambda, v + w)=0$ and $v=s\psi_0,$ we have
	\begin{align*}
		\langle w(\lambda, s\psi_0), F(\lambda, s\psi_0+ w(\lambda, s\psi_0)) \rangle
		= \langle w(\lambda, s\psi_0), QF(\lambda, s\psi_0 + w(\lambda, s\psi_0)) \rangle
		= 0.
	\end{align*}
	Therefore, for $s \neq 0$,
	\begin{equation*}
		\langle \psi_0, F(\lambda, s\psi_0 + \Phi(\lambda, s\psi_0)) \rangle
		= s^{-1}\langle s\psi_0 + w(\lambda, s\psi_0), F(\lambda, s\psi_0 + w(\lambda, s\psi_0)) \rangle,
	\end{equation*}
	and this is real by property \eqref{psiF-real} of Proposition \ref{prop:F-prop}. Thus,
since by Lemma \ref{gam-gaugeinv}, $\gamma(\lambda, s) = e^{i\arg s} \gamma(\lambda, |s|)$, it therefore  suffices to solve the equation
\begin{equation} \label{bif-eqn'} \gamma_0(\lambda, s) =0
\end{equation}
for the restriction $\gamma_0 : \R \times \R \to \R$ of the function $\gamma$ to $\R \times \R$, i.e. for real $s$. Since by \eqref{w-prop2}, $w(\lambda, s\psi_0)=O(s^2)$ and therefore \eqref{bif-eqn'} has the trivial branch of solutions $s\equiv 0$ for all $\lam$. Hence we factorize $\gamma_0(\lambda, s)$ as $\gamma_0(\lambda, s) =s\gamma_1(\lambda, s)$, i.e. %To this end
 we define the function
 \begin{align} \label{gam1}
		\gamma_1(\lambda, s) &:= s^{-1}\gamma_0(\lambda, s) %\notag \\&
= \langle \psi_0, F(\lambda, \psi_0 + s^{-1}w(\lambda, s\psi_0) \rangle
	\end{align}
and solve the equation $\gamma_1(\lambda, s) =0$. The definition of the function $\gamma_1(\lambda, s)$ implies that it has the following properties: $\gamma_1(\lambda, s)$ is real-analytic, $\gamma_1(\lambda, -s) =-\gamma_1(\lambda, s)$, $\gamma_1 (1, 0)=0$ and, by \eqref{F} and \eqref{w-prop2}, $\p_\lam\gamma_1(1, 0)=-\|\psi_0\|^2\ne 0$. Hence by a standard application of  the Implicit Function Theorem,  there is $\epsilon > 0$ and a real-analytic function
$\widetilde{\phi}_\lambda : (-\epsilon, \epsilon) \to \R$ such that $\widetilde{\phi}_\lambda(0) = \lambda_0$ and $\gamma(\lambda, s) = 0$ with $|s| < \epsilon$ if and only if either $s = 0$ or $\lambda = \widetilde\phi_\lambda(s)$.     %Recalling that $\gamma(\lambda, e^{i\alpha}t) = e^{i\alpha}\gamma(\lambda,t)$, we have shown that     if $\gamma(\lambda, s) = 0$ and $|s| < \epsilon$, then either $s = 0$ or $\lambda = \phi_\lambda(|s|)$.

We also note that by the symmetry, $\widetilde{\phi}_\lambda(-s) = \widetilde{\phi}_\lambda(|s|) = \widetilde{\phi}_\lambda(s)$, so $\widetilde{\phi}_\lambda$ is an even real-analytic function, and therefore must in fact be a function solely of $|s|^2$. We therefore set $\phi_\lambda(s) = \widetilde{\phi}_\lambda(\sqrt{s})$, and
so $\phi_\lambda$ is real-analytic.

We now define $\phi_\psi : (-\epsilon, \epsilon) \to \R$ to be
    \begin{equation}
        \phi_\psi(s) = \begin{cases}
                        s^{-1} w(\phi_\lambda(s), t\psi_0) &  s \neq 0, \\
                        0 & s = 0,
                    \end{cases}
    \end{equation}
$\phi_\psi$ is also real-analytic and satisfies $s\phi_\psi(s^2) = w(\phi_\lambda(s^2), s\psi_0)$ for any $s \in [0, \sqrt{\e})$. %\C$ with $|s|^2 < \epsilon$.

Now, we know that there is a neighbourhood of $(\lambda_0, 0)$ in $\R \times \Null \p_\psi F(\lambda_0, 0)$ such that in that neighbourhood $F(\lambda, \psi) = 0$
if and only if $\gamma(\lambda, s) = 0$ where $P\psi = s\psi_0$. By taking a smaller neighbourhood if necessary, we have proven that
$F(\lambda, \psi) = 0$ in that neighbourhood if and only if either $s = 0$ or $\lambda = \phi_\lambda(s^2)$. If $s = 0$, we have
$\psi = s\psi_0 + s\phi_\psi(s^2) = 0$ which gives the trivial solution. In the other case, $\psi = s\psi_0 + s\phi_\psi(s^2)$. %and that completes the proof     of the theorem.

The above gives us a neighbourhood of $(1, 0)$ in $\R \times \mathscr{H}(\tau)$ such that the only non-trivial solutions of the equation \eqref{Feq0} are given by the first two equations in \eqref{sexpansions}.
\DETAILS{\begin{equation*}\begin{cases}
    \lambda_s =1+ g_\lambda(|s|^2), \\
    \psi_s = s\psi_0 + sg_\psi(|s|^2).
\end{cases}\end{equation*}}
We now define the function %$\tilde{g}_A$ to be %\begin{equation*}
    $\tilde{g}_a(s) = \al(\psi_s),$ %t\psi_0 + tg_\psi(t^2)),%\end{equation*}
where, recall, $\al(\psi)$ is defined in \eqref{eq:apsi}. This function is real-analytic and satisfies
%\begin{equation*}
$ \tilde{g}_a(-s) = \al(-\psi_s %(t\psi_0 + tg_\psi(t^2))
    ) = \tilde{g}_a(s),$ %\end{equation*}
and therefore is really a function of $s^2,\ g_a(s^2)$. Define $a_s = A_0 + g_a(s^2)$. Then
 $ (\lambda_s, \psi_s, \alpha_s)$, $s \in [0, \sqrt{\e})$, solve  the rescaled Ginzburg-Landau equations \eqref{eq:rGL}.

We identify \eqref{sexpansions} with \eqref{epsexpansions} of Proposition \ref{prop:leadingorder}, with $\e=s$ and  $n=1$.  Then \eqref{lam1} implies %that $g'_\lambda(0)=\lam_1$, i.e.
\eqref{glambda'}.  %and the relation $|\Omega^\tau|=2\pi$ imply
\end{proof}
Note that the definition $\lam=\frac{\kappa^2}{b}$ ($n=1$), the first equation \eqref{sexpansions} and the relation \eqref{glambda'} imply that for $(\kappa^2 - \frac{1}{2} ) \beta (\tau)+\frac{1}{2}\ge 0$, the bifurcated solution exists for $b\le \kappa^2$, and for $(\kappa^2 - \frac{1}{2} ) \beta (\tau)+\frac{1}{2}<0$, it exists for $b>\kappa^2$. Thus Theorem \ref{thm:bifurcation-resultN1}, after rescaling to the original variables,  implies (I) - (II) of Theorem \ref{thm:main-result}.

Recall that $b= \frac{\kappa^2}{\lambda}.$ % Theorem \ref{thm:bifurcation-resultN1}  implies the existence of a function $g_b(|s|)$, s.t. $b_s=\kappa^2 +g_b(|s|), $ with $g_b(0) = 0$ and $g_b'(0) \neq 0$. The latter two properties imply that the function $b =\kappa^2 + g_b(|s|)$
Since the function $\lam_s=1+g_\lambda(s^2)$ given in Theorem \ref{thm:bifurcation-resultN1}, obeys  $g_\lambda(0) = 0$ and $g_\lambda '(0) \neq 0$, provided $(\kappa^2 - \frac{1}{2} ) \beta (\tau)+\frac{1}{2}\ne 0$, the function $b = \kappa^2 \lam_s^{-1}$ %=: \kappa^2 + g_b(|s|^2)$
can be inverted to obtain %, with the help of the relation $b= \frac{\kappa^2}{\lambda}$, that
$s = s(b)$. %Absorbing $\hat{s} = \frac{s}{|s|}$ into $\psi_0$,
We can define the family $(\psi_{s(b)}, a_{s(b)}, \lam_{s(b)})$ of $\cL^\tau$-periodic solutions of the Ginzburg-Landau equations parameterized by average magnetic flux $b$.
Since $s(b)$ is real - analytic in $b$, so are $\psi_{s(b)}, a_{s(b)}, \lam_{s(b)}$. This proves (III) of Theorem \ref{thm:main-result}.

Due to \eqref{eps}, we can express the bifurcation parameter $s^2$ in terms of $b$ as
 \begin{align} \label{s}
 s^2=\frac{\kappa^2 - b}{\kappa^2 [(\kappa^2 -  \frac{1}{2})\beta(\tau) +\frac{1}{2}]\lan |\psi_0|^2\ran} +O((\kappa^2 - b)^2).
\end{align}
  Furthermore, the equations \eqref{Elam2} %and \eqref{mus}
  implies that the energy of the state $(\psi_{s(b)}, a_{s(b)}, \lam_{s(b)})$,
\begin{equation} \label{Eb}
E_b(\tau) := \mathcal{E}_{\lambda(b)}(\psi_{s(b)}, a_{s(b)}), %\mathcal{E}_{\kappa^2/b}(\psi^\tau_{s(b)}, A^\tau_{s(b)}).
\end{equation} where we display the dependence on $\tau$, coming through the solution $(\psi_{s}, a_{s})$, %that  $\lambda(\mu) := \kappa^2 / b$, $\psi(\mu) := \psi_{s(b)}$ and %where $s(b) := \hat{s}g_b^{-1}(|s|)$, $\hat{s} = \frac{s}{|s|}$.
has the following form
    \begin{equation}     \label{Ebeta}
        E_b(\tau) = %\frac{\kappa^2}{2} \frac{n^2\kappa^4}{\lambda^2} -  \frac{\kappa^4\lambda_1}{2\lambda^2} \e^4  \left( \kappa^4 -  \frac{1}{2} \frac{1}{( \kappa^2- {1\over 2})\beta +{1\over 2}} \right)
        \frac{\kappa^2}{2} +b^2 -  \frac{(\kappa^2 - b)^2}{ (2\kappa^2 -  1)\beta(\tau) +1}   +  O((\kappa^2 - b)^3).
           \end{equation}
       %This implies that when $\kappa^2 > \frac{1}{2}$, in order to minimize the right hand side, we need to minimize $\beta$ and that proves the theorem.

The next result addresses the nature of dependence of solution on $\tau$.
\begin{lemma} \label{lem:tauanal}
        	 $ (\lambda_s, {\psi}_s, {a}_s)$ depend smoothly on $\tau_1:=\re\tau$ and $\tau_2:=\im\tau$.
    \end{lemma}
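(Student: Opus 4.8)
The obstruction to a direct argument is that the unknown $(\lambda_s,\psi_s,a_s)$ lives in spaces $\mathscr{H}(\tau)$ and $\vec{\mathscr{H}}(\tau)$ whose lattice of quasi-periodicity $\mathcal{L}^\tau$ moves with $\tau$: both the domain and the target of the map $F$ of \eqref{F} depend on $\tau$, so one cannot feed $\tau$ into the Implicit Function Theorem as an external parameter. The plan is to freeze the spaces by pulling the entire problem back, through a $\tau$-dependent linear change of variables, to a single reference cell. After this pullback all of the $\tau$-dependence sits in the coefficients of the operators, where it is plainly real-analytic, and the analysis of Sections \ref{sec:reformulation}--\ref{sec:bifurcation-n=1} can be rerun with $\tau$ adjoined as a genuine parameter.

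Concretely, recall first that Section \ref{sec:reformulation} has already eliminated the potential, reducing \eqref{eq:rGL} to the single scalar equation $F(\lambda,\psi)=0$ of \eqref{Feq0}, with $\alpha=\alpha(\psi)$ recovered afterwards through \eqref{eq:apsi}; it therefore suffices to pull back $F$. Fix a reference shape $\tau_*$ and, for $\tau$ near $\tau_*$, set
$$ S_\tau := r^\tau \begin{pmatrix} 1 & \re\tau \\ 0 & \im\tau \end{pmatrix}, \qquad r^\tau := \left(\frac{2\pi}{\im\tau}\right)^{1/2}, \qquad T_\tau := S_\tau S_{\tau_*}^{-1}. $$
The map $S_\tau$ carries $\Z^2$ onto $\mathcal{L}^\tau$, so $T_\tau$ carries $\mathcal{L}^{\tau_*}$ onto $\mathcal{L}^\tau$ (and $\Omega^{\tau_*}$ onto $\Omega^\tau$), real-analytically in $(\tau_1,\tau_2)=(\re\tau,\im\tau)$ for $\im\tau>0$. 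Because $\det S_\tau=|\Omega^\tau|=2\pi n$ is independent of $\tau$, one has $\det T_\tau=1$, i.e. $T_\tau^{\mathsf T}JT_\tau=J$. This symplectic identity is the decisive point: pulling scalars back by $\hat\psi:=\psi\circ T_\tau$, the gauge potential $A_0^n(x)=\tfrac n2 Jx$ and the quasi-periodic boundary condition of part (d) in \eqref{reduced-gauge-form} are preserved \emph{exactly}, so $\hat\psi\in\mathscr{H}(\tau_*)$; only the flat metric is deformed, into $(T_\tau T_\tau^{\mathsf T})^{-1}$. Hence $L^n$ and the operator $M$ (together with the mean-zero, divergence-free constraint defining its inverse) turn into a real-analytic family $\hat L_\tau$, $\hat M_\tau$ on the fixed reference spaces, with $\hat M_\tau$ remaining strictly positive and invertible by Proposition \ref{thm:M-spec}. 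Substituting the transformed analogue of \eqref{eq:apsi}, the map $F$ becomes a map $\hat F(\lambda,\hat\psi;\tau)$ between the \emph{fixed} spaces $\mathscr{H}(\tau_*)$ and $\mathscr{L}(\tau_*)$, jointly real-analytic in $(\lambda,\hat\psi,\tau)$.

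With the $\tau$-dependence now confined to coefficients, I would repeat the Lyapunov--Schmidt reduction and the bifurcation argument with $\tau$ as a parameter. Since $\sigma(L^n)=\{(2k+1)n\}$ is independent of $\tau$ (Proposition \ref{thm:L-spec}), the contour $\gamma$ around the isolated eigenvalue $n$ may be chosen uniformly, so the Riesz projection $P_\tau=-\tfrac{1}{2\pi i}\oint_\gamma(\hat L_\tau-z)^{-1}\,dz$ is real-analytic in $\tau$; its one-dimensional range $\Null(\hat L_\tau-n)$ is spanned by the real-analytic generator $\psi_0(\tau):=P_\tau\psi_0(\tau_*)$ (nonzero near $\tau_*$), normalized smoothly. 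The reduction $w=w(\lambda,v;\tau)$ and the scalar bifurcation function $\gamma_1=\gamma_1(\lambda,s;\tau)$ inherit real-analytic dependence on $\tau$, while the nondegeneracy $\partial_\lambda\gamma_1(1,0;\tau)=-\|\psi_0(\tau)\|^2\ne 0$ used in Theorem \ref{thm:bifurcation-resultN1} persists for every $\tau$. The parametric Implicit Function Theorem then produces $\lambda=\phi_\lambda(s;\tau)$ and $\hat\psi_s$, real-analytic in $(s,\tau)$; recovering $\hat\alpha_s$ from the transformed \eqref{eq:apsi} and transporting back by the analytic map $T_\tau^{-1}$ returns the claimed smooth dependence of $(\lambda_s,\psi_s,a_s)$ on $\tau_1,\tau_2$.

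I expect the main obstacle to be the verification underlying the second paragraph: checking that, after the pullback, the operators $\hat L_\tau$, $\hat M_\tau$ and the full map $\hat F$ really do depend smoothly (indeed analytically) on $\tau$ as maps between the \emph{fixed} spaces, and --- the key simplification --- that the transformed gauge potential and boundary conditions are $\tau$-independent, which is exactly what $\det T_\tau=1$ buys. Once this bookkeeping is in place, the smoothness of the Riesz projection and the parametric Implicit Function Theorem are entirely routine, and joint real-analyticity in $(s,\tau)$ transfers back to the original fields through $T_\tau^{-1}$.
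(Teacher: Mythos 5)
Your proof is correct, and it takes a more systematic route than the paper's. The paper argues directly: it notes that $\psi_0$, $P$ and $Q$ are smooth in $\tau_i$ ``from their explicit expressions'' (via the theta-series of Proposition \ref{prop:nullspace}), then differentiates the fixed-point equation \eqref{ws} for $w$ and the bifurcation equation for $\lambda_s$ with respect to $\tau_i$ and solves the resulting linear equations; the issue that the spaces $\mathscr{H}(\tau)$, $\vec{\mathscr{H}}(\tau)$ and the cell $\Omega^\tau$ themselves move with $\tau$ is left implicit there, and is only dealt with by an explicit change of variables ($x=m_\tau y$) in the subsequent Corollary \ref{lem:tauanal2} for the energy. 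What you do is promote that change of variables to the level of the Lemma itself: your $S_\tau$ is, up to the constant factor $\sqrt{2\pi}$, exactly the paper's $m_\tau$, and your observation that $\det T_\tau=1$, hence $T_\tau^{\mathsf T}JT_\tau=J$, so that both the gauge potential $A_0^n=\frac n2Jx$ and the quasiperiodic boundary condition of \eqref{reduced-gauge-form} are preserved \emph{exactly} while only the metric deforms analytically, is the right key point and is correct (I checked: $\hat\psi(y+s)=e^{\frac{in}{2}y\cdot Js}\hat\psi(y)$ and $(\nabla_{A_0^n}\psi)\circ T_\tau=(T_\tau^{\mathsf T})^{-1}\nabla_{A_0^n}\hat\psi$). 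Rerunning the Lyapunov--Schmidt reduction and the argument of Theorem \ref{thm:bifurcation-resultN1} with $\tau$ adjoined as an analytic parameter then gives the conclusion, indeed with real-analytic rather than merely smooth dependence. What your approach buys is a clean justification of what ``differentiating in $\tau$'' means when the function spaces move; what it costs is the bookkeeping you yourself flag, chiefly that the divergence-free and mean-zero constraints defining $(M+|\psi|^2)^{-1}$ in \eqref{eq:apsi} become $\tau$-dependent (divergence with respect to the deformed metric) after pullback, which must be absorbed into the analytic family $\hat M_\tau$ or into a $\tau$-dependent analytic projection. Two trivial slips: $\det S_\tau=(r^\tau)^2\Im\tau=2\pi$ for every $n$ (the value $2\pi n$ you quote follows the paper's own misprint); all that matters is its $\tau$-independence, which holds.
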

   \begin{proof}  By the above we can write %$ ({\psi}_s, {a}_s)$ as
  $\psi_s =s\psi_0+ w(\lambda_s, s)$ and   $a_s=A_0+\al(w(\lambda_s, s), s)$, where $\al(w, s)$ is given by
  \begin{equation} \label{as}
\alpha(w, s) = (M + |s\psi_0+ w|^2)^{-1}\Im((\overline{s\psi_0+  w})\Cov{A_0}(s\psi_0+ w)),     %\alpha(\lambda, s) = (M + |s\psi_0+ w(\lambda, s)|^2)^{-1}\Im({(s\bar\psi_0+ \bar w(\lambda, s))}\Cov{A_0}(s\psi_0+ w(\lambda, s))), %(M + |\psi_s|^2)^{-1}\Im(\bar{\psi_s}\Cov{A_0}\psi_s),
\end{equation}
 %$w_s=w(\lambda_s, s)$, with  $w\equiv w(\lambda_s, s)$,
 $w(\lambda, s)$  solves the %=w(\lambda_s, s\psi_0)$
   the equation
\begin{equation} \label{ws}%\psi_s =s\psi_0+ w(\lambda, s\psi_0),\ \mbox{with}\ w(\lambda, v)
 ( \bar L - \lambda) w = -Q[  2i\alpha(w, s)\cdot\nabla_{A_0}(s\psi_0+ w) + |\alpha(w, s)|^2(s\psi_0+ w) + \kappa^2|(s\psi_0+ w)|^2(s\psi_0+ w)],
    \end{equation} %$PF(\lambda, P\psi + w(\lambda, P\psi)) = 0$.
%($\psi_s $ on the r.h.s. should be thought of as
%with $\alpha(w, s) = (M + |s\psi_0+ w|^2)^{-1}\Im({(s\bar\psi_0+ \bar w)}\Cov{A_0}(s\psi_0+ w))$.
and  $\lam_s$ solves the equation $\gamma(\lambda, s)=0$ (see \eqref{bif-eqn'}). Here %$\psi(\lambda, s) =s\psi_0+ w(\lambda, s)$ and $\al(\lambda, s)= (M + |s\psi_0+ w|^2)^{-1}\Im(\bar{(s\psi_0+ w)}\Cov{A_0}(s\psi_0+ w))$ %, as given in \eqref{as} and
$\bar L$ is the restriction of $L$ to $\Ran Q$. From their explicit expressions we see that $\psi_0$, $P$ and $Q$ are smooth in $\tau_i$. Differentiating \eqref{ws} with respect to $\tau_i$ and solving the resulting linear equation for $\p_{\tau_i} w_s$, it is not hard to convince oneself that $w_s$ is differentiable in $\tau_i$. Repeating this proceedure, one sees that $w_s$ is smooth in $\tau_i$. Therefore $\psi_s$ and $a_s$ are  smooth in $\tau_i$. For the same reason, $\gamma(\lambda, s):= P F(\lambda, s\psi_0 + w(\lambda, s))$ is smooth in $\tau_i$ and therefore so is $\lam_s$. \end{proof}
\begin{corollary} \label{lem:tauanal2}
        	 $ E_b(\tau)$  depends smoothly on $\tau_1:=\re\tau$ and $\tau_2:=\im\tau$.
    \end{corollary}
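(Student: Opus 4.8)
The plan is to exhibit $E_b(\tau)$ as the composition of the smooth energy functional $\cE_\lambda$ with the map $\tau \mapsto (\psi_{s(b)}, a_{s(b)})$, which is smooth in $\tau$ by Lemma \ref{lem:tauanal}, and then to conclude by the chain rule. Observe first that for fixed $b$ the subscript $\lambda(b) = \kappa^2/b$ is itself independent of $\tau$, so the only $\tau$-dependence is carried by the fields $(\psi_{s(b)}, a_{s(b)})$. The one genuine issue to dispatch is that the domain of integration $\Omega^\tau$ and the spaces $\mathscr{H}(\tau), \vec{\mathscr{H}}(\tau)$ each carry their own $\tau$-dependence, so before invoking Lemma \ref{lem:tauanal} I would move everything onto a fixed reference cell; this pullback is precisely the device that makes the comparison of fields across different $\tau$ in Lemma \ref{lem:tauanal} meaningful.

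Concretely, I would introduce the linear map $T_\tau$ determined by $T_\tau e_1 = r^\tau e_1$ and $T_\tau e_2 = r^\tau(\tau_1, \tau_2)$ (under the identification $\R^2 \cong \C$), so that $T_\tau$ sends $\Z^2$ onto $\mathcal{L}^\tau$ and $\det T_\tau = (r^\tau)^2\tau_2 = 2\pi = |\Omega^\tau|$. Setting $x = T_\tau y$ and pulling the fields back to the fixed cell $\Omega_0 := T_\tau^{-1}\Omega^\tau$ via $\hat\psi(y) := \psi(T_\tau y)$ and $\hat a(y) := T_\tau^{\top} a(T_\tau y)$ (the gauge field transforming as a one-form), the gauge-periodic and periodic boundary conditions become conditions on the single, $\tau$-independent cell $\Omega_0$, and $\cE_{\kappa^2/b}$ becomes an integral over $\Omega_0$ whose integrand depends on $\tau$ only through the entries of $T_\tau$ and $T_\tau^{-1}$---which are real-analytic on $\{\tau_2 > 0\}$---and through the pulled-back fields $(\hat\psi, \hat a)$.

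Next I would pin down the bifurcation parameter. For fixed $b$ the relevant $s$ solves $\lambda_s = \kappa^2/b$; since $\lambda_s = 1 + g_\lambda(s^2)$ with $g_\lambda$ real-analytic in $s$ and smooth in $\tau$ by Lemma \ref{lem:tauanal}, and $g_\lambda'(0) \neq 0$ whenever $(\kappa^2 - \tfrac12)\beta(\tau) + \tfrac12 \neq 0$ by \eqref{glambda'}, the implicit function theorem yields $s = s(b,\tau)$ smooth in $\tau$ for $b$ near $\kappa^2$. Composing, $\tau \mapsto (\psi_{s(b,\tau)}, a_{s(b,\tau)})$ is smooth into $\mathscr{H}(\tau)\times\vec{\mathscr{H}}(\tau)$ (equivalently, into the fixed spaces over $\Omega_0$ after the pullback), and $\cE_{\kappa^2/b}$ is a smooth---indeed real-analytic---functional there, being an integral of a polynomial in the fields and their first derivatives. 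The chain rule then gives that $E_b(\tau)$ is smooth in $(\tau_1, \tau_2)$.

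The main obstacle is purely the bookkeeping of the $\tau$-dependent domain and boundary conditions: one must check that the change of variables $T_\tau$ genuinely converts all the data into smooth-in-$\tau$ objects on the fixed cell $\Omega_0$, and in particular that the transformed quasiperiodicity conditions no longer involve $\tau$. Once this reduction to a fixed cell is in place, the remainder is a routine application of the chain rule to the composition of smooth maps.
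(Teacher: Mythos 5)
Your proposal is correct and follows essentially the same route as the paper: the paper's proof also pulls the fields back to a fixed reference cell via a linear map identifying the standard lattice with $\mathcal{L}^\tau$ (its $m_\tau$ is your $T_\tau$ up to normalization), transforms $a$ as a one-form, and concludes from the smoothness in $\tau$ of the pulled-back fields supplied by Lemma \ref{lem:tauanal}. Your additional remarks on the invertibility of $s\mapsto\lambda_s$ and the $\tau$-independence of the transformed quasiperiodicity conditions are correct elaborations of points the paper leaves implicit.
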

   \begin{proof} To get rid of dependence of the domain of integration in \eqref{rEnergy} on  $\tau$, we change the varables of integration in \eqref{rEnergy} as $x=m_\tau y$, where $m_\tau = (\sqrt{\Im\tau})^{-1}\left( \begin{array}{cc} 1 & \Re\tau \\ 0 & \Im\tau \end{array} \right)$, which reduces the integral to the one over the unit square cell. Since the rescaled functions,
      %in the following way. Instead of the rescaling of Section \ref{sec:reduced-problem}, we rescale pairs $(\psi, A)$ as follows:
	\begin{equation} \label{Utau}
		\begin{cases}
		\tilde{\psi}_s(x) =  \psi_s( m_\tau x), \\
		\tilde{a}_s(x) =  m_\tau^t a_s( m_\tau x), \\
		\end{cases}
	\end{equation}
defined on a $\tau$-independent square lattice, are still smooth in $\tau_1:=\re\tau$ and $\tau_2:=\im\tau$, the result follows.
\end{proof}

The next result establishes a relation between the minimizers of the energy and Abrikosov function.
\begin{theorem}\label{ABR:thm1}
    In the case $\kappa > \frac{1}{\sqrt{2}}$, the minimizers,  $\tau_b$, of $\tau \mapsto E_b(\tau)$ are related to the minimizer,  $\tau_*$, of $\beta(\tau)$, as $\tau_b - \tau_* =O(\mu^{1/2})$, where  $\mu:= \kappa^2 - b$. In particular, $\tau_b \to \tau_*$ as $b \to \kappa^2$.
    %for all $b$ sufficiently close to but less than $\kappa^2$, $\tau$ minimizes $E_b(\tau)$ if and only if
    %$\tau$ minimizes $\beta(\tau)$.
\end{theorem}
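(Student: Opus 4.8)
The plan is to read off the whole statement from the energy expansion \eqref{Ebeta}. Writing $\mu := \kappa^2 - b$ and introducing the shorthand
\[
\Phi(\tau) := \frac{1}{(2\kappa^2-1)\beta(\tau)+1},
\]
that expansion becomes $E_b(\tau) = \frac{\kappa^2}{2} + b^2 - \mu^2\Phi(\tau) + O(\mu^3)$, where the first two terms are independent of $\tau$. Since $\kappa > \frac{1}{\sqrt{2}}$ gives $2\kappa^2-1>0$, and since $\beta(\tau)\ge 1$ by Cauchy--Schwarz (so the denominator is bounded below by $2\kappa^2$), $\Phi$ is a smooth, positive function that is strictly decreasing in $\beta(\tau)$. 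Hence, to leading order in $\mu$, minimizing $\tau\mapsto E_b(\tau)$ is the same as maximizing $\Phi$, i.e.\ as minimizing $\beta$, which already identifies $\tau_*$ as the leading-order minimizer. That $E_b$ and $\beta$ are smooth functions of $\tau=\tau_1+i\tau_2$, so that minimizers make sense, is Corollary \ref{lem:tauanal2}.

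The quantitative bound I would get by a stability-of-minimizers argument that uses only $C^0$-control of the remainder (this is why one obtains $O(\mu^{1/2})$ and not the $O(\mu)$ that a $C^2$ implicit-function argument would give). Comparing values using $E_b(\tau_b)\le E_b(\tau_*)$ yields
\[
\mu^2\big(\Phi(\tau_*)-\Phi(\tau_b)\big)\le R_b(\tau_*)-R_b(\tau_b),
\]
where $R_b=O(\mu^3)$ is the remainder; hence $\Phi(\tau_*)-\Phi(\tau_b)=O(\mu)$. On the other hand, if $\tau_*$ is a non-degenerate minimum of $\beta$ then it is a non-degenerate maximum of $\Phi$: indeed $\nabla\Phi(\tau_*)=0$, and since $\nabla\beta(\tau_*)=0$ the Hessian of $\Phi$ there equals a negative multiple of that of $\beta$, so $\Phi(\tau_*)-\Phi(\tau)\ge c|\tau-\tau_*|^2$ on a fixed neighbourhood of $\tau_*$. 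Combining the two bounds gives $c|\tau_b-\tau_*|^2=O(\mu)$, that is $\tau_b-\tau_*=O(\mu^{1/2})$, and in particular $\tau_b\to\tau_*$ as $b\to\kappa^2$.

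Two points require care, and this is where I expect the real work. First, both the comparison and the quadratic lower bound are only available once $\tau_b$ is known to lie in a fixed neighbourhood of $\tau_*$; I would establish this confinement separately. The remainder in \eqref{Ebeta} is $O(\mu^3)$ uniformly on compacta, because it is assembled from the solution $(\psi_s,a_s,\lambda_s)$, which is smooth in $\tau_1,\tau_2$ by Lemma \ref{lem:tauanal} and Corollary \ref{lem:tauanal2}; and since $\tau_*$ is the unique, coercive global minimizer of $\beta$, off any neighbourhood $U$ of $\tau_*$ one has $\Phi(\tau_*)-\Phi(\tau)\ge\delta_U>0$. The value comparison then forces $\mu^2\delta_U\le O(\mu^3)$ unless $\tau_b\in U$, which is impossible for small $\mu$; coercivity also rules out escape of the minimizer to infinity, where $\Phi\to 0$ and $E_b\to\frac{\kappa^2}{2}+b^2$, strictly above $E_b(\tau_*)$. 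Second, the rate $O(\mu^{1/2})$—as opposed to mere convergence—rests on the non-degeneracy (quadratic growth) of $\beta$ at $\tau_*$. These inputs—uniqueness, coercivity and non-degeneracy of the minimum of the Abrikosov function—are the crux; I would take them from the known, explicitly computable (theta-function) structure of $\beta$, and it is here, rather than in the elementary comparison, that the substance of the proof lies.
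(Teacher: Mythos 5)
Your argument is correct and follows essentially the same route as the paper: both expand $E_b(\tau)=e_0+e_1\mu+e_2(\tau)\mu^2+O(\mu^3)$ with $e_2(\tau)=-\bigl((2\kappa^2-1)\beta(\tau)+1\bigr)^{-1}$, reduce the problem to minimizing $\beta$ at leading order, and extract the $O(\mu^{1/2})$ rate from the non-degeneracy of the minimum of $\beta$ at $\tau_*$ --- you via the comparison $E_b(\tau_b)\le E_b(\tau_*)$ plus quadratic growth of $\Phi$, the paper via two second-order Taylor expansions of $\tilde E_b$ and $e_2$ about $\tau_b$ and $\tau_*$, which is the same computation in different clothing. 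The confinement and non-degeneracy issues you flag are indeed the real inputs; the paper leaves them implicit (relying on Theorem \ref{ABR:thm2} and Corollary \ref{lem:tauanal2}), so your treatment is, if anything, more careful.
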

\begin{proof}
%We show that the theorem is a consequence of Proposition \ref{prop:Ebeta}.
To prove the theorem we note that  $E_b(\tau)$ is of the form $E_b(\tau) = e_0 + e_1 \mu + e_2(\tau) \mu^2 + O(\mu^3)$. The first two terms are constant in $\tau$, so we consider $\tilde{E}_b(\tau) = e_2(\tau) + O(\mu)$. $\tau_b$ is also the minimizer of $\tau \mapsto \tilde{E}_b(\tau)$ and $\tau_*$, of $e_2(\tau)$. We have the expansions  $\tilde{E}_b(\tau_*)-\tilde{E}_b(\tau_b) = \frac{1}{2}\tilde{E}^{''}_b(\tau_b)(\tau_*-\tau_b)^2 + O((\tau_*-\tau_b)^3)$ and  $\tilde{E}_b(\tau_*)-\tilde{E}_b(\tau_b) = -\frac{1}{2}e^{''}_2(\tau_b)(\tau_*-\tau_b)^2 + O((\tau_*-\tau_b)^3) + O(\mu)$, which imply the desired result. %That concludes the proof of the theorem.
\end{proof}

The following result was discovered numerically in the physics literature and proven in \cite{ABN} using earlier result of \cite{NV}:

\begin{theorem}
\label{ABR:thm2}
    The function $\beta(\tau)$ has exactly two critical points, $\tau = e^{i\pi/3}$ and $\tau = e^{i\pi/2}$. The first is minimum, whereas the
    second is a maximum.
\end{theorem}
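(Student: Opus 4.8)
The plan is to make $\beta(\tau)$ completely explicit and then exploit the modular symmetry of the resulting expression. First I would use Proposition~\ref{prop:nullspace} with $n=1$: the recursion $c_{k+1}=e^{i\pi\tau(2k+1)}c_k$ is solved by $c_k=c_0 e^{i\pi\tau k^2}$, so every element of $\Null(L-1)$ is a constant multiple of
\[ \psi_0(x)=e^{\frac{i}{2}x_2(x_1+ix_2)}\,\vartheta\!\left(\sqrt{\tfrac{\Im\tau}{2\pi}}\,(x_1+ix_2)\,\Big|\,\tau\right),\quad \vartheta(v\,|\,\tau):=\sum_{k\in\Z}e^{i\pi\tau k^2+2\pi i k v}. \]
Since $|e^{\frac{i}{2}x_2(x_1+ix_2)}|^2=e^{-x_2^2}$, the density $|\psi_0|^2$ is a Gaussian-weighted modulus-squared of a Jacobi theta function. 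Inserting the series for $|\vartheta|^2$ and $|\vartheta|^4$ and carrying out the Gaussian integrals over $\Omega^\tau$ term by term, I expect to reduce both $\lan|\psi_0|^2\ran$ and $\lan|\psi_0|^4\ran$ to absolutely convergent theta-sums; the constant $c_0$ cancels in the ratio, confirming that $\beta$ depends only on $\tau$ and producing (after fixing the normalization by a direct check) a closed form of lattice type such as $\beta(\tau)=\sum_{(m,n)\in\Z^2}(-1)^{mn}e^{-\frac{\pi}{2\Im\tau}|m+n\tau|^2}$.

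Next I would record the two invariances that drive everything. Because $\beta$ is built from the lattice $\Z+\tau\Z$ alone, it is invariant under the full modular group $PSL_2(\Z)$, and because complex conjugation of the lattice sends $\psi_0$ to a multiple of $\overline{\psi_0}$, it is invariant under the reflection $\tau\mapsto-\bar\tau$; it therefore suffices to analyze $\beta$ on the standard fundamental domain of Subsection 3.3. The two claimed critical points are then forced by symmetry, with no computation: the point $\tau=e^{i\pi/2}=i$ is fixed by $S:\tau\mapsto-1/\tau$, whose differential at $i$ is $-\mathrm{Id}$ on the tangent plane, so $\nabla\beta(i)=-\nabla\beta(i)=0$; and $\rho:=e^{i\pi/3}$ is fixed by an order-three elliptic element whose differential is rotation by $2\pi/3$, and a gradient invariant under such a rotation must vanish.

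The main obstacle is the global statement that there are \emph{no other} critical points. Here I would first reduce the vanishing of $\nabla\beta$ to the vanishing of $\p_\tau\beta$ (legitimate since $\beta$ is real-valued), note that $\p_\tau\beta$ transforms as a nonholomorphic modular form of weight $2$, and attempt to multiply it by an explicit nonvanishing automorphic factor (a power of $\Im\tau$ times a theta constant) to obtain a genuinely holomorphic modular form, whose zeros are then counted exactly by the valence (``$k/12$'') formula, leaving room only for the two elliptic points. Because making $\p_\tau\beta$ holomorphic is delicate, the robust fallback — and the route effectively taken in \cite{NV,ABN} — is a direct estimate: show that $\beta$ is proper, with $\beta(\tau)\to+\infty$ at the cusp $\Im\tau\to\infty$, and establish strict monotonicity of $\beta$ along the boundary arcs of the fundamental domain and along the imaginary axis, thereby excluding interior critical points other than $i$ and $\rho$. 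This boundary and interior monotonicity analysis of an explicit but transcendental theta-sum is where the genuine difficulty lies.

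Finally, the minimum/maximum assignment follows from local and global data: by the symmetry above the Hessian of $\beta$ at $i$ and at $\rho$ is diagonal and can be computed directly, and since $\beta\to+\infty$ at the cusp the global minimum is attained at an interior critical point, necessarily the most symmetric (hexagonal) lattice $\rho=e^{i\pi/3}$; being the only remaining critical point, $\tau=i$ is then the maximum of the two. In the present paper I would carry out the theta-representation and symmetry steps in full and cite \cite{NV,ABN} for the delicate uniqueness step of Theorem~\ref{ABR:thm2}.
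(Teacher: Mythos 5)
The paper itself offers no proof of Theorem~\ref{ABR:thm2}: it is imported verbatim from \cite{ABN}, which in turn relies on \cite{NV}, and the text says so explicitly just above the statement. Your proposal therefore does strictly more than the paper attempts. The parts you carry out are sound and do reflect the strategy of those references: the recursion $c_{k+1}=e^{i\pi\tau(2k+1)}c_k$ is indeed solved by $c_k=c_0e^{i\pi\tau k^2}$, identifying $\psi_0$ with a Gaussian-weighted Jacobi theta function; $\beta$ descends to a function on the moduli space of unimodular lattices, hence is $PSL_2(\Z)$- and reflection-invariant; and the elliptic fixed-point argument (differential $-\mathrm{Id}$ at $\tau=i$, rotation by $2\pi/3$ at $\tau=e^{i\pi/3}$) correctly forces $\nabla\beta$ to vanish at both points. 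The properness $\beta\to+\infty$ at the cusp is also correct and gives an interior global minimum.

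However, if your text were read as a proof rather than a plan, there is a genuine gap, and it sits exactly where you say it does: the absence of \emph{other} critical points, together with the min/max classification, is the entire nontrivial content of the theorem, and neither of your two proposed routes is carried out. The valence-formula idea does not apply as stated, since $\beta$ is real-analytic but not holomorphic, so $\p_\tau\beta$ is a nonholomorphic weight-two object whose zeros are not counted by $k/12$; turning it into a holomorphic form with a controlled nonvanishing factor is not a routine step. The monotonicity-along-boundary-arcs route is precisely the delicate analysis performed in \cite{NV} and exploited in \cite{ABN}, and naming it is not doing it. Likewise the closed lattice-sum form of $\beta$ is only guessed (``such as''), and the sign pattern and exponent there matter for any monotonicity estimate. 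Since you end by proposing to cite \cite{NV,ABN} for the uniqueness step, your final position coincides with the paper's, which cites them for the whole theorem; your added symmetry analysis is correct but is the easy half.
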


Theorems %\ref{thm:bifurcation-resultN1}, %{BIF:thm},
\ref{ABR:thm1} and \ref{ABR:thm2} imply the remaining statement, (IV), of %, after rescaling to the original variables, imply
Theorem \ref{thm:main-result}. $\qed$

\noindent\textbf{Remarks.} 1) The applied magnetic field is given by $h_0=\frac{1}{2}\p_bE_b(\tau)$. %Using \eqref{Ebeta}, \eqref{averpsi02} and \eqref{mus} and absorbing $s$ into $\psi_0$, i.e. replacing $s\psi_0$ by  $\psi_0$, this becomes
From  \eqref{Ebeta} we have formally
\begin{align}     \label{h0}
h_0=b +  \frac{\kappa^2 - b}{ (2\kappa^2 -  1)\beta(\tau) +1}   +  O((\kappa^2 - b)^2)\\
        =b +  \frac{\kappa^2 }{ 2}\lan |\psi_0|^2\ran   +  O((\kappa^2 - b)^2).
           \end{align}
\DETAILS{2) Absorbing $s$ into $\psi_0$, we derive from the first equation in \eqref{sexpansions}, \eqref{glambda'} and \eqref{s} (a correct form of) the classical Abrikosov's relation
   \begin{align} \label{Arel}
		\frac{\kappa^2-b}{\kappa^2}\lan |\psi_0|^2\ran - \frac{1}{2} \lan |\psi_0|^2\ran^2= \left(\kappa^2 - \frac{1}{2} \right) \lan |\psi_0|^4\ran.
	\end{align}}

2) The proof of Theorem \ref{thm:bifurcation-resultN1} gives in fact the following abstract result.
\begin{theorem}
\label{thm:bifurcation-theorem-1}
    Let $X$ and $Y$ be complex Hilbert spaces, with $X$ a dense subset of $Y$, and consider a map $F : \R \times X \to Y$ that
    is analytic as a map between real Banach spaces. Suppose that for some $\lambda_0 \in \R$, the following conditions are satisfied:
    \begin{enumerate}[(1)]
        \item \label{BT:trivial} $F(\lambda, 0) = 0$ for all $\lambda \in \R$,
        \item \label{BT:L-0} $\p_\psi F(\lambda_0, 0)$ is self-adjoint and has an isolated eigenvalue at $0$ of (geometric) multiplicity $1$,
%        \item \label{BT:v-star} For non-zero $v^* \in \Null (D_\psi F(\lambda_0, 0)^*)$ and $v \in \Null D_\psi F(\lambda_0, 0)$,
%                    $\langle v^*, D_{\lambda,\psi}F(\lambda_0, 0)v \rangle \neq 0$,
		\item \label{BT:v-star} For non-zero $\psi_0 \in \Null \p_\psi F(\lambda_0, 0)$,
                    $\langle \psi_0, \p_{\lambda,\psi}F(\lambda_0, 0)\psi_0 \rangle \neq 0$,
        \item \label{BT:sym} For all $\alpha \in \R$, $F(\lambda, e^{i\alpha}\psi) = e^{i\alpha}F(\lambda, \psi)$.
		\item \label{BT:SA} For all $\psi \in X$, $\langle \psi, F(\lambda, \psi) \rangle \in \R$.
    \end{enumerate}
    Then $(\lambda_0, 0)$ is a bifurcation point of the equation $F(\lambda, \psi) = 0$, in the sense that there is a family of non-trivial solutions, $(\lambda_s, \psi_s)$,  for $s \in [0, \sqrt{\e})$,   %\C$ with $|s| < \epsilon$, for some $\epsilon >0$,
    unique modulo the global gauge symmetry  (apart from the trivial solution $(1,0)$) in a neighbourhood of $(\lambda_0, 0)$ in $\R \times X$. Moreover, this family has the form
    \begin{equation*}
    \begin{cases}
        \lambda = \phi_\lambda(s^2), \\
        \psi = s\psi_0 + s\phi_\psi(s^2).
    \end{cases}
    \end{equation*}
    Here $\psi_0 \in \Null \p_\psi F(\lambda_0, 0)$, %there exists an $\epsilon > 0$
     and  $\phi_\lambda : [0,\epsilon) \to \R$ and $\phi_\psi : [0,\epsilon) \to X$ are unique real-analytic functions,
    such that $\phi_\lambda(0) = \lambda_0$, $\phi_\psi(0) = 0$. %and there is a neighbourhood of $(\lambda_0, 0)$ in $\R \times X$ such that
\end{theorem}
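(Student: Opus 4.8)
The plan is to transcribe, into the abstract hypotheses (1)--(5), the Lyapunov--Schmidt reduction already carried out in the proof of Theorem~\ref{thm:bifurcation-resultN1}, verifying that each concrete fact used there is furnished by one of the five hypotheses. Set $L_0 := \p_\psi F(\lambda_0, 0)$. By hypothesis~(2) it is self-adjoint with $0$ an isolated eigenvalue of multiplicity one, so the Riesz projection $P := -\frac{1}{2\pi i}\oint_\g (L_0 - z)^{-1}\,dz$ onto $K := \Null L_0 = \C\psi_0$ is a rank-one orthogonal projection; put $Q := I - P$. First I would write $\psi = v + w$ with $v = P\psi \in K$, $w = Q\psi \in \Ran Q$, and split $F(\lambda, \psi) = 0$ into $PF = 0$ and $QF = 0$.

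Next I would solve $QF(\lambda, v + w) = 0$ for $w = w(\lambda, v)$ by the Implicit Function Theorem. The derivative of $QF$ in $w$ at $(\lambda_0, 0, 0)$ is $L_0$ restricted to $\Ran Q$, and this is invertible there precisely because $0$ is isolated and its eigenspace is exactly $K \subseteq \Ran P$; analyticity of $F$ then yields a real-analytic $w$ with $w(\lambda, 0) = 0$. Differentiating the defining relation once in $v$ and using $L_0\psi_0 = 0$ together with $L_0(\Ran Q)\subseteq \Ran Q$ shows that the directional derivative $\p_v w(\lambda_0, 0)\psi_0$ vanishes, which is the only smallness of $w$ I will actually need. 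The same uniqueness-plus-symmetry argument as in Lemma~\ref{gam-gaugeinv}, using hypothesis~(4), then gives the covariance $w(\lambda, e^{i\al}v) = e^{i\al}w(\lambda, v)$.

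Substituting $w$ into $PF = 0$ and writing $v = s\psi_0$ collapses everything to the scalar equation $\g(\lambda, s) = \lan \psi_0, F(\lambda, s\psi_0 + w(\lambda, s\psi_0))\ran = 0$, a priori a map $\R \times \C \to \C$. The heart of the matter is to make this equation real and one-real-dimensional. Using that $Q$ is self-adjoint, $Qw = w$, and $QF = 0$, one gets $\lan w, F\ran = 0$, whence for real $s \neq 0$, $\g(\lambda, s) = s^{-1}\lan s\psi_0 + w, F\ran \in \R$ by hypothesis~(5); the covariance of $w$ gives $\g(\lambda, s) = e^{i\arg s}\g(\lambda, |s|)$, so it suffices to solve the real equation $\g_0(\lambda, s) = 0$ for real $s \ge 0$. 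Hypothesis~(4) makes $\g_0$ odd in $s$, so $\g_0(\lambda, s) = s\,\g_1(\lambda, s)$ with $\g_1 := s^{-1}\g_0$ real-analytic and even. One then computes, the $w$-contributions dropping out at $\lambda_0$ because $L_0\psi_0 = 0$ and $L_0$ is self-adjoint, that $\g_1(\lambda_0, 0) = \lan \psi_0, L_0\psi_0\ran = 0$ and $\p_\lambda \g_1(\lambda_0, 0) = \lan \psi_0, \p_{\lambda,\psi}F(\lambda_0, 0)\psi_0\ran \neq 0$ by hypothesis~(3). A final Implicit Function Theorem gives a real-analytic $\lambda = \tilde\phi_\lambda(s)$ with $\tilde\phi_\lambda(0) = \lambda_0$, and evenness of $\g_1$ forces $\tilde\phi_\lambda$ to be even, so it descends to $\phi_\lambda(s^2)$; setting $\phi_\psi(s^2) := s^{-1}w(\phi_\lambda(s^2), s\psi_0)$ then produces the asserted branch, with uniqueness modulo gauge inherited from the two uniqueness statements in the Implicit Function Theorem.

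The analyticity bookkeeping and the two applications of the Implicit Function Theorem are routine. The step I expect to be the genuine crux is the reality-and-factorization argument of the previous paragraph, where hypotheses~(3)--(5) must act in concert: (5) to convert the complex scalar equation into a real one, (4) to reduce complex $s$ to real $s$ and to supply the oddness that isolates the trivial branch, and (3) to guarantee the transversality $\p_\lambda\g_1(\lambda_0, 0) \neq 0$ needed for the bifurcation. This interplay is exactly what separates the statement from the classical Crandall--Rabinowitz theorem, and it is where the gauge symmetry earns its keep.
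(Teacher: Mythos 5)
Your proposal is correct and follows essentially the same route as the paper: the paper explicitly derives this abstract theorem by rerunning the Lyapunov--Schmidt reduction of Sections \ref{sec:reduction}--\ref{sec:bifurcation-n=1} (rank-one Riesz projection, solving $QF=0$ for $w$, reality of $\gamma$ via $\langle w,F\rangle=0$ and hypothesis (5), gauge covariance to restrict to real $s\ge 0$, factoring out $s$, and a final Implicit Function Theorem using the transversality of hypothesis (3) in place of the explicit $\p_\lambda\p_\psi F=-1$). Your observation that the only smallness of $w$ actually needed is $w(\lambda,0)=0$ together with $\p_v w(\lambda_0,0)\psi_0=0$ is a correct and slightly cleaner abstraction of the paper's $w=O(|v|^2)$ estimate.
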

\DETAILS{\begin{proof}%[Proof of Theorem \ref{thm:bifurcation-theorem-1}]
    %\hfill
%
%
\DETAILS{\vspace{1pc}\noindent Step 1. \emph{Lyapunov-Schmidt reduction}:\vspace{1pc}

    The first step is to reduce the equation $F(\lambda,u) = 0$ to an equation on a finite-dimensional subspace.
    Set $L := D_u F(\lambda_0, 0)$ and let $K = \Null L$. We let $P$ be the Riesz projection onto $K$, that is,
    \begin{equation}
        P := -\frac{1}{2\pi i} \oint_\gamma (L - z)^{-1} \,dz,
    \end{equation}
    where $\gamma \subseteq \C$ is a contour around $0$ that contains no other points of the spectrum of $L$.
    This is possible since $0$ is an isolated eigenvalue of $L$. $P$ is a bounded projection, and if we
    let $Z := \Null P$, then $Y = K \oplus Z$. We also let $Q := I - P$, and so $Q$ is a projection onto $Z$.

    The equation $F(\lambda,u) = 0$ is therefore equivalent to the pair of equations
    \begin{align}
        \label{BT:eqn1} &P F(\lambda, Pu + Qu) = 0, \\
        \label{BT:eqn2} &Q F(\lambda, Pu + Qu) = 0.
    \end{align}

    We will now solve \eqref{BT:eqn2} for $w = Qu$ in terms of $\lambda$ and $v = Pu$. To do this, we introduce the map
    $G : \R \times K \times Z \to Z$ to be $G(\lambda, v, w) := QF(\lambda, v + w)$.    Applying the Implicit Function Theorem
    to $G$, we obtain a real-analytic function $w : \R \times K \to Z$, defined on a neighbourhood of $(\lambda_0, 0)$,
    such that $G(\lambda, v, w) = 0$ for $(\lambda, v)$ in that neighbourhood if and only if $w = w(\lambda, v)$.

    We substitute this function into \eqref{BT:eqn2} and see that we are looking for $(\lambda, u)$ in a neighbourhood of $(\lambda_0, 0)$ that
    satisfy $PF(\lambda, Pu + w(\lambda, Pu)) = 0$. We therefore define the function $\beta : \R \times K \to \C$ to be
    \begin{equation}    \label{BT:beta}
        \beta(\lambda, v) = PF(\lambda, v + w(\lambda, v)).
    \end{equation}
    We have shown that in a neighbourhood of $(\lambda_0, 0)$ in $\R \times X$, $(\lambda, u)$ solves $F(\lambda, u) = 0$
    if and only if $(\lambda, v)$, with $v = Pu$, solves the \emph{bifurcation equation}
    \begin{equation}
    \label{BT:bif-eqn}
        \beta(\lambda, v) = 0.
    \end{equation}

\vspace{1pc}\noindent Step 2. \emph{Solving the bifurcation equation ($n = 1$)}:\vspace{1pc}}
The analysis \textbf{at the end} of Section \ref{sec:reduction} reduces the problem to the one of solving % We have solve
the bifurcation equation \eqref{bif-eqn'} \textbf{for real} $s$.
\DETAILS{Since the projection $P$, defined there, is rank one and self-adjoint, we have
%, for any non-zero $v_0 \in K$, we can find $v_0^\star \in \Null L^*$, such that $\langle v_0^\star, v_0 \rangle = 1$ and
    \begin{equation}    \label{BT:P}
        P\psi = \frac{1}{\|\psi_0\|^2} \langle \psi_0, \psi \rangle \psi_0,\ \textrm{with}\ \psi_0 \in \Null \p_\psi F(\lambda_0, 0).
    \end{equation}
	We can therefore view the function $\gamma$ in the bifurcation equation \eqref{bif-eqn} as a map $\gamma : \R \times \C \to \C$, where
	\begin{equation*}
		\gamma(\lambda, s) = \langle \psi_0, F(\lambda, s\psi_0 + w(\lambda, s\psi_0)) \rangle.
	\end{equation*}
    We now look for non-trivial solutions of this equation, by using the Implicit Function Theorem to solve for $\lambda$ in terms of $s$.}
%
    % Because of the symmetry of $\gamma$, we let $\gamma_0 : \R \times \R \to \C$ be the restriction of $\gamma$ to $\R \times \R$.
%    Note that if $\gamma_0(\lambda, t) = 0$, then $\gamma(\lambda, e^{i\alpha}t) = 0$ for all $\alpha$, and conversely, if $\gamma(\lambda, s) = 0$, then $\gamma_0(\lambda, |s|) = 0$. So we need only to find solutions of $\gamma_0(\lambda, t) = 0$.
%
\DETAILS{	Note that if $\gamma(\lambda, t) = 0$, then $\gamma(\lambda, e^{i\alpha}t) = 0$ for all $\alpha$, and conversely, if $\gamma(\lambda, s) = 0$, then $\gamma(\lambda, |s|) = 0$. So we need only to find solutions of $\gamma(\lambda, t) = 0$ for $t \in \R$. We now show that $\gamma(\lambda, t) \in \R$.
	Since the projection $Q$ is self-adjoint, and since $Qw(\lambda, v) = w(\lambda, v),\ v=s\psi_0,$ we have
	\begin{align*}
		\langle w(\lambda, t\psi_0), F(\lambda, t\psi_0+ w(\lambda, t\psi_0)) \rangle
		= \langle w(\lambda, t\psi_0), QF(\lambda, t\psi_0 + w(\lambda, t\psi_0)) \rangle
		= 0.
	\end{align*}
	Therefore, for $t \neq 0$,
	\begin{equation*}
		\langle v, F(\lambda, t\psi_0 + \Phi(\lambda, t\psi_0)) \rangle
		= t^{-1}\langle t\psi_0 + w(\lambda, t\psi_0), F(\lambda, t\psi_0 + w(\lambda, t\psi_0)) \rangle,
	\end{equation*}
	and this is real by condition \eqref{BT:SA} of the theorem. Thus we can restrict $\gamma$ to a function $\gamma_0 : \R \times \R \to \R$.}
    By a standard application of  the Implicit Function Theorem to $t^{-1}\gamma(\lambda, t) = 0$, in which \eqref{BT:trivial} - \eqref{BT:v-star} are used (see for example \cite{AmPr}), there is $\epsilon > 0$ and a real-analytic function
    $\widetilde{\phi}_\lambda : (-\epsilon, \epsilon) \to \R$ such that $\widetilde{\phi}_\lambda(0) = \lambda_0$ and if $\gamma(\lambda, t) = 0$ with $|t| < \epsilon$,
    then either $t = 0$ or $\lambda = \widetilde\phi_\lambda(t)$.     %Recalling that $\gamma(\lambda, e^{i\alpha}t) = e^{i\alpha}\gamma(\lambda,t)$, we have shown that     if $\gamma(\lambda, s) = 0$ and $|s| < \epsilon$, then either $s = 0$ or $\lambda = \phi_\lambda(|s|)$.

    We also note that by the symmetry, $\widetilde{\phi}_\lambda(-t) = \widetilde{\phi}_\lambda(|t|) = \widetilde{\phi}_\lambda(t)$, so $\widetilde{\phi}_\lambda$ is an even real-analytic
    function, and therefore must in fact be a function solely of $|t|^2$. We therefore set $\phi_\lambda(t) = \widetilde{\phi}_\lambda(\sqrt{t})$, and
    so $\phi_\lambda$ is real-analytic.

    We now define $\phi_\psi : (-\epsilon, \epsilon) \to \R$ to be
    \begin{equation}
        \phi_\psi(t) = \begin{cases}
                        t^{-1} w(\phi_\lambda(t), t\psi_0) &  t \neq 0, \\
                        0 & t = 0,
                    \end{cases}
    \end{equation}
    $\phi_\psi$ is also real-analytic and satisfies $s\phi_\psi(s^2) = w(\phi_\lambda(s^2), s\psi_0)$ for any $s \in [0, \sqrt{\e})$. %\C$ with $|s|^2 < \epsilon$.

    Now, we know that there is a neighbourhood of $(\lambda_0, 0)$ in $\R \times \Null \p_\psi F(\lambda_0, 0)$ such that in that neighbourhood $F(\lambda, \psi) = 0$
    if and only if $\gamma(\lambda, s) = 0$ where $P\psi = s\psi_0$. By taking a smaller neighbourhood if necessary, we have proven that
    $F(\lambda, \psi) = 0$ in that neighbourhood if and only if either $s = 0$ or $\lambda = \phi_\lambda(s^2)$. If $s = 0$, we have
    $\psi = s\psi_0 + s\phi_\psi(s^2) = 0$ which gives the trivial solution. In the other case, $\psi = s\psi_0 + s\phi_\psi(s^2)$ and that completes the proof
    of the theorem.
\end{proof}}

%%%%%%%%%%%%%%%%%%%%%%%%%%%%%%%%%%%%%%%%%%%%%%%%%%%%%%%%%%%%%%%%%%%%%%%%%%%%%%%%%%%%%%%%%%%%%

\appendix

%%%%%%%%%%%%%%%%%%%%%%%%%%%%%%%%%%%%%%%%%%%%%%%%%%%%%%%%%%%%%%%%%%%%%%%%%%%%%%%%%%%%%%%%%%%%%

\section{Fixing the Gauge}
\label{sec:alternate-proof}

We provide here an alternate proof of Proposition \ref{thm:fix-gauge}, largely based on ideas in \cite{Eil}. We begin by defining the function $B : \R \to \R$ to be
	\begin{align*}
		B(\zeta) = \frac{1}{r} \int_{0}^{r} \Curl A(\xi, \zeta)\,d\xi.
	\end{align*}
	It is clear that $b = \frac{1}{r\tau_2} \int_{0}^{r\tau_2} B(\zeta) \,d\zeta$. A calculation shows that $B(\zeta + r\tau_2) = B(\zeta)$.
%	\begin{align*}
%		B(x_2 + r\tau_2)
%				& = \frac{1}{r} \int_{0}^{r} \Curl A(\xi, x_2 + r\tau_2)\,d\xi \\
%				& = \frac{1}{r} \int_{-r\tau_1}^{r - r\tau_1} \Curl A(\xi + r\tau_1, x_2 + r\tau_2)\,d\xi \\
%				& = \frac{1}{r} \int_{-r\tau_1}^{r - r\tau_1} \Curl A(\xi, x_2)\,d\xi \\
%				& = \frac{1}{r} \int_{0}^{r} \Curl A(\xi, x_2)\,d\xi \\
%				& = B(x_2).
%	\end{align*}
	
%	\begin{align*}
%		b = \frac{1}{|\Omega|} \int_\Omega \Curl A(x_1,x_2)\,dx_1 dx_2 = \frac{1}{r\tau_2} \int_{0}^{r\tau_2} F(\zeta) \,d\zeta,
%	\end{align*}

	We now define $P = (P_1, P_2) : \R^2 \to \R^2$ to be
	\begin{align*}
	& P_1(x) = bx_2 - \int_{0}^{x_2} B(\zeta) \,d\zeta, \\
	& P_2(x) = \int_{\frac{\tau_1}{\tau_2} x_2}^{x_1} \Curl A(\xi, x_2)\,d\xi + \frac{\tau \wedge x}{\tau_2} B(x_2).
	\end{align*}
	A calculation shows that $P$ is doubly-periodic with respect to $\Lat$.

%	\begin{align*}
%		P_1(x_1 + r, x_2) = bx_2 - \int_{0}^{x_2} B(\zeta) \,d\zeta = P_1(x).
%	\end{align*}
%	\begin{align*}
%		P_1(x_1 + r\tau_1, x_2 + r\tau_2)
%				& = bx_2 + br\tau_2 - \int_{0}^{x_2 + r\tau_2} B(\zeta) \,d\zeta \\
%				& = P_1(x) + br\tau_2 - \int_{x_2}^{x_2 + r\tau_2} B(\zeta) \,d\zeta \\
%				& = P_1(x) + \int_{0}^{r\tau_2} B(\zeta) \,d\zeta - \int_{0}^{r\tau_2} B(\zeta) \,d\zeta \\
%				& = P_1(x).
%	\end{align*}
%	\begin{align*}
%		P_2(x_1 + r, x_2)
%				& = P_2(x) + \int_{x_1}^{x_1 + r} \Curl A(\xi, x_2)\,d\xi - r B(x_2) \\
%				& = P_2(x) + \int_{x_1}^{x_1 + r} \Curl A(\xi, x_2)\,d\xi - \int_{0}^{r} \Curl A(\xi, x_2)\,d\xi \\
%				& = P_2(x).
%	\end{align*}
%	\begin{align*}
%		P_2(x_1 + r\tau_1, x_2 + r\tau_2)
%				& = \int_{\frac{\tau_1}{\tau_2} x_2 + r\tau_1}^{x_1 + r\tau_1} \Curl A(\xi, x_2 + r\tau_2)\,d\xi + \frac{\tau \wedge x}{r\tau_2} \int_{0}^{r} \Curl A(\xi, x_2 + r\tau_2)\,d\xi \\
%				& = \int_{\frac{\tau_1}{\tau_2} x_2}^{x_1} \Curl A(\xi + r\tau_1, x_2 + r\tau_2)\,d\xi
%									+ \frac{\tau \wedge x}{r\tau_2} \int_{-r\tau_1}^{r-r\tau_1} \Curl A(\xi + r\tau_1, x_2 + r\tau_2)\,d\xi \\
%				& = \int_{\frac{\tau_1}{\tau_2} x_2}^{x_1} \Curl A(\xi, x_2)\,d\xi + \frac{\tau \wedge x}{r\tau_2} \int_{-r\tau_1}^{r-r\tau_1} \Curl A(\xi, x_2)\,d\xi \\
%				& = \int_{\frac{\tau_1}{\tau_2} x_2}^{x_1} \Curl A(\xi, x_2)\,d\xi + \frac{\tau \wedge x}{r\tau_2} \int_{0}^{r} \Curl A(\xi, x_2)\,d\xi \\
%				& = P_2(x).
%	\end{align*}
		
	We now define $\eta' : \R^2 \to \R$ to be
	\begin{align*}
		\eta'(x) = \frac{b}{2}x_1x_2 - \int_{0}^{x_1} A_1(\xi, 0) \,d\xi - \int_{0}^{x_2} A_2(x_1, \zeta) - P_2(x_1, \zeta) \,d\zeta.
	\end{align*}
	$\eta'$ satisfies %\begin{align*}
		$\nabla\eta' = - A + A_0 + P.$  %\end{align*}	
%	\begin{align*}
%		\frac{\partial\eta'}{\partial x_1}(x)
%				&= \frac{b}{2}x_2 - A_1(x_1, 0) - \int_{0}^{x_2} \frac{\partial A_2}{\partial x_1} (x_1, \zeta)
%						- \Curl A(x_1, \zeta) + B(\zeta) \,d\zeta \\
%				&= \frac{b}{2}x_2 - A_1(x_1, 0) - \int_{0}^{x_2} \frac{\partial A_1}{\partial x_2} (x_1, \zeta) + B(\zeta) \,d\zeta \\
%				&= \frac{b}{2}x_2 - A_1(x_1, x_2) - \int_{0}^{x_2} B(\zeta) \,d\zeta \\
%				&= - A_1(x_1, x_2) - \frac{b}{2}x_2 + P_1(x).
%	\end{align*}
%	\begin{align*}
%		\frac{\partial\eta}{\partial x_2}(x)
%				&= \frac{b}{2}x_1 - A_2(x_1, x_2) + P_2(x_1, x_2).
%	\end{align*}

	Now let $\eta''$ be a doubly-periodic solution of the equation $\Delta\eta'' = -\Div P$. Also let $C = (C_1, C_2)$ be given by
	\begin{equation*}
		C = - \frac{1}{|\Omega|} \int_\Omega (P + \nabla\eta'') \,dx,
	\end{equation*}
	where $\Omega$ is any fundamental cell, and set $\eta''' = C_1x_1 + C_2x_2$.
	
	We claim that $\eta = \eta' + \eta'' + \eta'''$ is such that $A + \nabla\eta$ satisfies (i) - (iii) of the proposition. We first note that $A + \nabla\eta = A - A + A_0 + P + \nabla\eta'' + C$. By the above, $A' = P + \nabla\eta'' + C$ is periodic. We also calculate that $\Div A' = \Div P + \Delta\eta'' = 0$. Finally $\int_\Omega A' = \int_\Omega (P + \nabla\eta - C) = 0$.
	
	All that remains is to prove (iv). This will follow from a gauge transformation and translation of the state. We note that
	\begin{equation*}
		A_0(x + t) + A'(x + t) = A_0(x) + A'(x) + \frac{b}{2} \left( \begin{array}{c} -t_2 \\ t_1 \end{array} \right).
	\end{equation*}
	This means that $A_0(x + t) + A'(x + t) = A_0(x) + A'(x) + \nabla g_t(x)$, where $g_t(x) = \frac{b}{2} t \wedge x + C_t$ for some constant $C_t$. To establish (iv), we need to have it so that $C_t = 0$ for $t = r$, $r\tau$. First let $l$ be such that $r \wedge l = -\frac{C_r}{b}$ and $r\tau \wedge l = -\frac{C_{r\tau}}{b}$. This $l$ exists as it is the solution to the matrix equation
	\begin{equation*}
		\left( \begin{array}{cc} 0 & r \\ -r\tau_2 & r\tau_1 \end{array} \right)
			\left( \begin{array}{c} l_1 \\ l_2 \end{array} \right) =
			\left( \begin{array}{c} -\frac{C_r}{b} \\ -\frac{C_{r\tau}}{b} \end{array} \right),
	\end{equation*}
	and the determinant of the matrix is just $r^2\tau_2$, which is non-zero because $(r,0)$ and $r\tau$ form a basis of the lattice. Let $\zeta(x) = \frac{b}{2} l \wedge x$. A straight forward calculation then shows that $e^{i\zeta(x)}\psi(x + l)$ satisfies (iv) and that $A(x + l) + \nabla\zeta(x)$ still satisfies (i) through (iii). This proves the proposition.

%%%%%%%%%%%%%%%%%%%%%%%%%%%%%%%%%%%%%%%%%%%%%%%%%%%%%%%%%%%%%%%%%%%%%%%%%%%%%%%%%%%%%%%%%%%%%

%%%%%%%%%%%%%%%%%%%%%%%%%%%%%%%%%%%%%%%%%%%%%%%%%%%%%%%%%%%%%%%%%%%%%%%%%%%%%%%%%%%%%%%%%%%%%


\begin{thebibliography}{9}

\bibitem{Abr}
    A.~A.~Abrikosov,
    On the magnetic properties of superconductors of the second group.
    \emph{J. Explt. Theoret. Phys. (USSR)} \textbf{32} (1957), 1147--1182.

\bibitem{ABN}
    A.~Aftalion, X.~Blanc, and F.~Nier,
    Lowest Landau level functional and Bargmann spaces for Bose-Einstein condensates.
	\emph{J. Funct. Anal.} 241 (2006), 661--702.

\bibitem{AS}
    A.~Aftalion and S.~Serfaty,
    Lowest Landau level approach in superconductivity for the Abrikosov lattice close to $H\sb {c\sb 2}$.
     \emph{Selecta Math. (N.S.)} 13 (2007), 183--202.


\bibitem{Alfors}
    L.~V.~Alfors,
    \emph{Complex analysis}.
    McGraw-Hill, New York, 1979.

\bibitem{Al}
    Y.~Almog,
    On the bifurcation and stability of periodic solutions of the Ginzburg-Landau equations in the plane.
    \emph{SIAM J. Appl. Math.} 61 (2000), 149--171.

\bibitem{Al2}
    Y.~Almog,
    Abrikosov lattices in finite domains.
    \emph{Commun. Math. Phys.} 262 (2006), 677-702.


\bibitem{AmPr}
    A.~Ambrosetti and G.~Prodi,
    \emph{A Primer of Nonlinear Analysis}.
    Cambridge University Press, Cambridge, 1993.

\bibitem{BGT}
    E.~Barany, M.~Golubitsky, and J.~Turksi,
    Bifurcations with local gauge symmetries in the Ginzburg-Landau equations.
     \emph{Phys. D}  56  (1992), 36--56.


\bibitem{BC}
M.S. Berger, Y. Y. Chen,
Symmetric vortices for the nonlinear Ginzburg-Landau
equations of superconductivity, and the
nonlinear desingularization phenomenon.
\emph{J. Fun. Anal.} {\bf 82} (1989) 259-295.


\bibitem{Ch}
    S.~J.~Chapman,
    Nucleation of superconductivity in decreasing fields.
	\emph{European J. Appl. Math.}  5  (1994),  449--468.

\bibitem{CHO}
    S.~J.~Chapman, S.~D.~Howison, and J.~R.~Ockedon,
    Macroscopic models of superconductivity.
     \emph{SIAM Rev.}  34  (1992),  529--560.

  \bibitem{DGP}
    Q.~Du, M.~D.~Gunzburger, and J.~S.~Peterson,
    Analysis and approximation of the Ginzburg-Landau model of superconductivity.
     \emph{SIAM Rev.}  34  (1992),  54--81.

\bibitem{Dutour}
	M.~Dutour,
	Phase diagram for Abrikosov lattice.
	\emph{J. Math. Phys.}  42  (2001), 4915--4926.

\bibitem{Dutour2}
	M.~Dutour,
	\emph{ Bifurcation vers l$ '\acute{e}$tat d�Abrikosov et diagramme des phases}. Thesis Orsay, http://www.arxiv.org/abs/math-ph/9912011.

\bibitem{Eil}
    G.~Eilenberger,
    Zu Abrikosovs Theorie der periodischen L\"osungen der GL-Gleichungen f\"ur Supraleiter 2.
   \emph{ Z. Physik}  180 (1964), 32--42.

\bibitem{FH}
S. Fournais, B. Helffer,
\emph{Spectral Methods in Surface Superconductivity.}
Progress in Nonlinear Differential Equations and their Applications, Vol \textbf{77}, Birkh\"auser, (2010).

\bibitem{GS}
S. Gustafson, I.M. Sigal,
The stability of magnetic vortices.
\emph{Comm. Math. Phys.} {\bf 212} (2000) 257-275.

\bibitem{GS2}
    S.~J.~Gustafson and I.~M.~Sigal,
    \emph{Mathematical Concepts of Quantum Mechanics}.
    Springer, 2006.

 \bibitem{GST}
    S.~J.~Gustafson, I.~M.~Sigal and T. Tzaneteas,
    Statics and dynamics of magnetic vortices and of Nielsen-Olesen (Nambu) strings.
    \emph{J. Math. Phys.} 51, 015217 (2010).

\bibitem{JT}
    A.~Jaffe and C.~Taubes,
    \emph{Vortices and Monopoles: Structure of Static Gauge Theories}.
    Progress in Physics 2. Birkh\"auser, Boston, Basel, Stuttgart, 1980.

\bibitem{KRA}
    W.H.~Kleiner, L.~M.~Roth, and S.~H.~Autler,
    Bulk solution of Ginzburg-Landau equations for type II superconductors: upper critical field region.
    \emph{Phys. Rev.} 133 (1964), A1226--A1227.

\bibitem{Lash}
    G.~Lasher,
    Series solution of the Ginzburg-Landau equations for the Abrikosov mixed state.
	\emph{Phys. Rev.} 140 (1965), A523--A528.

\bibitem{NV}
	S.~Nonnenmacher and A.~Voros,
	Chaotic eigenfunctions in phase space.
	 \emph{J. Statist. Phys.}  92  (1998),  431--518.

\bibitem{Odeh}
    F.~Odeh,
    Existence and bifurcation theorems for the Ginzburg-Landau equations.
	\emph{J. Math. Phys.} 8 (1967), 2351--2356.

\bibitem{Ov}
    Yu. N.~Ovchinnikov,
    Structure of the supercponducting state near the critical fiel $H_{c2}$ for values of the Ginzburg-Landau parameter $\kappa$ close to unity.
	\emph{JETP}. 85 (4) (1997), 818--823.

\bibitem{Rub}
    J.~Rubinstein,
    \emph{Six Lectures on Superconductivity}.
    Boundaries, interfaces, and transitions (Banff, AB, 1995), 163--184,
    CRM Proc. Lecture Notes, 13, Amer. Math. Soc., Providence, RI, 1998.

\bibitem{SS}
    E.~Sandier and S.~Serfaty,
    \emph{Vortices in the Magnetic Ginzburg-Landau Model}.
    Progress in Nonlinear Differential Equations and their Applications, Vol \textbf{70}, Birkh\"auser, 2007.

%\bibitem{ST}  I.~M.~Sigal and T. Tzaneteas,    \emph{On Abrikosov lattice solutions of the Ginzburg-Landau equations with higher fluxes}, In preparation.


\bibitem{Takac}
    P.~Tak\'a\v{c},
    Bifurcations and vortex formation in the Ginzburg-Landau equations.
     \emph{Z. Angew. Math. Mech.}  81  (2001),  523--539.

\bibitem{TS} T. Tzaneteas and
     I.~M.~Sigal,  Abrikosov lattice solutions of the Ginzburg-Landau equations.   \textit{Contemporary Mathematics} \textbf{535}, 195 -- 213, 2011.

%\bibitem{Tilley}     D.~R.~Tilley and J.~Tilley,     \emph{Superfluidity and Superconductivity}.     3rd edition. Institute of Physics Publishing, Bristol and Philadelphia, 1990.

%\bibitem{Tinkham}     M.~Tinkham,   \emph{Introduction to Superconductivity},     McGraw-Hill Book Co., New York, 1996.

\end{thebibliography}
\end{document}